\definecolor{darkblue}{rgb}{0.,0.,0.4}
\definecolor{darkred}{rgb}{0.5,0.,0.}
\newtheorem{theorem}{Theorem}
\newtheorem{lem}{Lemma}[section]
\newtheorem{cor}[lem]{Corollary}
\newtheorem{prop}[lem]{Proposition}
\theoremstyle{definition} \newtheorem{defn}{Definition}
\theoremstyle{definition} \newtheorem{rem}{Remark}
\theoremstyle{definition} \newtheorem{example}{Example}
\newcommand{\ket}[1]{\left| {#1} \right\rangle}
\newcommand{\tr}{\mathop{\mathrm{tr}}}
\newcommand{\im}{\mathop{\mathrm{im}}}
\newcommand{\coker}{\mathop{\mathrm{coker}}}
\newcommand{\ann}{\mathop{\mathrm{ann}}}
\newcommand{\rank}{\mathop{\mathrm{rank}}}
\newcommand{\codim}{\mathop{\mathrm{codim}}}
\newcommand{\Tor}{{\mathrm{Tor}}}
\newcommand{\rad}{\mathop{\mathrm{rad}}}
\newcommand{\FF}{\mathbb{F}}
\newcommand{\ZZ}{\mathbb{Z}}
\newcommand{\mm}{\mathfrak{m}}
\newcommand{\pp}{\mathfrak{p}}
\newcommand{\bb}{\mathfrak{b}}
\newcommand{\id}{\mathrm{id}}
\begin{document}
\title{Commuting Pauli Hamiltonians as maps between free modules}
\author{Jeongwan Haah}
\address{Institute for Quantum Information and Matter, California Institute of Technology, Pasadena, California}
\email{jwhaah@caltech.edu}
\date{8 March 2013}
\begin{abstract}
We study unfrustrated spin Hamiltonians 
that consist of commuting tensor products of Pauli matrices.
Assuming translation-invariance,
a family of Hamiltonians that belong to the same phase of matter
is described by a map between modules over the translation-group algebra,
so homological methods are applicable.
In any dimension every point-like charge
appears as a vertex of a fractal operator, and
can be isolated with energy barrier at most logarithmic in the separation distance.
For a topologically ordered system in three dimensions,
there must exist a point-like nontrivial charge.
A connection between the ground state degeneracy and the number of points on
an algebraic set is discussed.
Tools to handle local Clifford unitary transformations are given.
\end{abstract}
\maketitle
{\small \tableofcontents}

Commuting Pauli Hamiltonians form a small class of Hamiltonians
that are consisted of products of Pauli matrices such that
each term commutes with any other terms.
Classical examples are the Ising models in one or two dimensions.
Albeit its simplicity of the energy spectrum,
there are many intriguing models in this class
for which the long range entanglement of the ground state plays a very important role.
Prototypical is the Kitaev's toric code model~\cite{Kitaev2003Fault-tolerant},
which has been a solid testbed of ideas for topologically ordered systems.

The topological ordered models exhibit, as the name suggests, many properties
that are insensitive local changes or defects.
They had been discussed for the states of
the fractional quantum Hall effects and the spin liquids;
see e.g. Wen~\cite{Wen1991SpinLiquid}.
Perhaps, the most well-defined characteristic of the topological order
is the local indistinguishability of the degenerate ground states;
two different ground states gives the same expectation 
value for any local observables.
(Note that this characteristic is not directly applicable to
e.g. the topological insulators~\cite{HasanKane2010TIReview},
for which certain symmetry properties distinguish them from trivial phases.)
Due to the local indistinguishability,
the topologically ordered systems are thought
to be candidate media
on which quantum information processing is performed.
As a special application, the topologically ordered system can be used as
a quantum memory, just like the ferromagnetic system is used as a classical memory.

However, the quantum memories in the topologically ordered systems
often suffer from thermal instability.
For example, the toric code model has point-like excitations,
which can freely propagate by external noise from the thermal bath.
Although a local operator can never access to the ground space,
their accumulation may be able to.
Indeed, by the thermal fluctuation, a ground state
is often mapped to a different state,
and the anticipated protection of the stored quantum information is not viable.
The excitations that affects the stability of the quantum memory
may be called ``topological charges''.
A charge is an excitation that cannot be created alone locally
but can be created with some other excitations.
Indeed, the 4D toric code~\cite{DennisKitaevLandahlEtAl2002Topological}
has no charge at all,
and can be used as a quantum memory whose failure probability
decreases exponentially with the system size at low enough 
temperatures~\cite{AlickiHorodeckiHorodeckiEtAl2010thermal,
ChesiLossBravyiEtAl2010Thermodynamic}.

The situation in three dimensions is more subtle but interesting.
Models like 3D toric code model have charges
that can freely propagate across the system by the interaction with the thermal bath,
thereby two different ground states become mixed.
On the other hand, as in the cubic code~\cite{Haah2011Local,BravyiHaah2011Energy,BravyiHaah2011Memory},
there can exist charges that cannot propagate by any means.
This class of models provides modest reliability as a quantum memory at nonzero temperature.
However, the scaling of the memory time, until which the system is reliable as a memory medium,
is not as favorable as it is for the 4D toric code model;
the memory time grows with the system size according to a power law
whose exponent is proportional to the inverse temperature,
provided that the system size does not exceed some critical value
determined by the temperature.

The very existence of the charges seems to adversely affect the memory time.
In order to have a quantum memory,
one needs to devise a read-out procedure explicitly ---
a classical analog is the measurement of the average magnetization of 2D Ising model.
Though the charges may not propagate,
they can be separated arbitrary far from their partner charges
at a modest energy cost~\cite{BravyiHaah2011Energy}.
No good read-out procedure (sometimes called decoder)
is known for the configurations of far separated charges.
This should be contrasted to the 4D toric code model,
in which any excited state consists of several loops.
The large loops are suppressed by the Boltzmann factor,
and the small loops can be almost perfectly treated by the read-out procedure.
In short, the large entropy due to the point-likeness of the charges
would likely drag the ground state to a hard-to-decipher state.
See the discussion in~\cite{BravyiHaah2011Memory}.

Apart from the issue of the thermal stability and the possibility of quantum memory,
the cubic code model apparently necessitates new tools to analyze it.
When defined on a finite system with periodic boundary conditions,
it shows exotic dependence of the ground state degeneracy on the system size.
The degeneracy is sensitive to the number theoretic property of
the linear system size $L$. For example, when $L$ is a power of 2,
the degeneracy grows exponentially with $L$, but becomes a constant if $L = 2^p +1$.
This was a numerical observation, and was not rigorously treated~\cite{Haah2011Local}.

\begin{center}
{\bf Results}
\end{center}

In this paper, we systematically study commuting Pauli Hamiltonians
that are translation-invariant.
We always assume that our Hamiltonians are frustration-free;
every term in the Hamiltonian is minimized on the ground space.

The main observation is that there is a purely algebraic description of
commuting Pauli Hamiltonians in terms of maps between 
free modules over a Laurent polynomial ring by exploiting the translation-invariance.
A Pauli matrix can be written as two binary numbers if we ignore the phase factors.
For example, $I = (00), \sigma_x=(10), \sigma_z = (01), \sigma_y=(11)$.
We write these binary numbers in the coefficients of Laurent polynomials.
The exponents of the Laurent polynomials will represent the positions at which the Pauli matrices act.
If the Hamiltonian is translation-invariant, and there are finitely many distinct interaction types,
then it follows that only a finite number of the Laurent polynomials convey all data of the Hamiltonian.
We view this finite data as a map between two free modules over the translation-group algebra.
We will show that the physical phase is solely determined by the image of this associated map.

We provide a few tools to compute the transformations of the Hamiltonians 
by local unitary operators and coarse-graining when a translation structure is given.
They come down to a well-defined set of elementary row operations
on the matrices associated to the Hamiltonians~\cite{CalderbankRainsShorEtAl1997Quantum,KitaevShenVyalyi2002CQC}.
As we restrict our scope to the commuting Pauli Hamiltonians,
the local unitary operators are also restricted to the Clifford operators.
(Clifford operators maps a tensor product of Pauli operators to 
a tensor product of Pauli operators.)

We define the characteristic dimension $d$ associated to the Hamiltonian.
If a Hamiltonian gives rise to a map between free modules,
it is natural to think of the determinantal ideal of this map.
The characteristic dimension is the Krull dimension of
the algebraic set defined by this ideal.
It is always upper bounded by the spatial dimension $D$.
Moreover, $d$ is less than or equal to $D-2$ if the Hamiltonian is locally topologically ordered.

The characteristic dimension $d$ 
controls the rate at which the ground state degeneracy may increase.
Roughly speaking, the logarithm of degeneracy can grow like $L^d$ 
where $L$ is the linear system size.
Thus, $D=3$ is the minimal spatial dimension such that the degeneracy
of a topologically ordered system can be diverging.
For instance, the toric code models in various dimensions
all correspond to the characteristic dimension 0,
while the 3D cubic code model has characteristic dimension 1.
However, it should be pointed out that
the actual degeneracy does not behave as smooth as the function $L^d$;
it can depend very sensitively on the system size.
Indeed, it shall be shown that the degeneracy is related to the number of points in an algebraic set.
The boundary condition imposes a global constraint on the relevancy of the points in the algebraic set.
The numerically observed phenomena for the cubic code model shall be exactly calculated.

We characterize topological charges in terms of torsion submodule.
In particular, this characterization implies that
a charge always appears as a vertex of a fractal operator,
and can be separated arbitrarily far from its partners
by a local process with energy cost \emph{at most logarithmic} in the separation distance.
Here, the local process means a sequence of Pauli operators
that are obtained by successive applications of single qubit Pauli operators.
The fractal operators may be regarded as finite cellular automata~\cite{MartinOdlyzkoAndrewWolfram1984}.

Using our formalism we derive a number of consequences of the low physical dimensionality.
In one dimension, we algorithmically show how to transform
an arbitrary Hamiltonian into several copies of the Ising models~\cite{Yoshida2011Classification}.
In two dimensions, we characterize how the charges behave for topologically ordered models.
Specifically, we prove that any excited state is
a configuration of finitely many kinds of the charges,
and the charges can be moved to an arbitrary position by a string operator.
The result is a refined formulation of \cite{
Yoshida2011Classification,
Bombin2011Structure,
BombinDuclosCianciPoulin2012}.
In three dimensions, we prove that there always exists a point-like charge
for any locally topologically ordered translation-invariant commuting Pauli Hamiltonian.
This is a fundamental property of the three dimensions.
It suggests that we might not be able to have a topologically ordered
system in three dimensions where the excitations are all loop-like as in 4D toric code model.
If it is further assumed that the ground state degeneracy is constant
independent of system size, then we prove that the charges are attached to strings~\cite{Yoshida2011feasibility}.

The use of Laurent polynomials is not completely new.
For classical cyclic linear codes, the use of polynomial is routine~\cite{MacWilliamsSloane1977}.
Multivariate polynomials appear in the error correcting code theory
in the topic of multi-dimensional cyclic codes; see e.g.~\cite{GueneriOezbudak2008}.
Also, there is an algebraic-geometry based design like Goppa codes~\cite{Goppa1983}.
However, the focus is different.
We are interested in a fixed set of generators,
while in the classical error correcting codes one is interested in good codes of a fixed length.
Exact sequences of modules describing the topological order is only relevant
if we are interested in infinite systems, or infinite code length.
The question of minimum code distance is not addressed
since we take a Hamiltonian viewpoint for the codes.
The fact that the commuting Pauli operators are represented as matrix,
is very well-known in the theory of quantum error correcting 
codes~\cite{CalderbankShor1996Good,Steane1996Multiple,Gottesman1996Saturating,CalderbankRainsShorEtAl1997Quantum}.
Our treatment is different in that the entries are not the binary values but the Laurent polynomials.

Only the system of qubits, or spin-$1/2$, will be discussed,
but all of our results and argument straight-forwardly
generalize to the system of qudits of prime dimensions.
Technically, the ground field $\FF_2$ for the qubit should be
replaced by $\FF_p$ for a prime integer $p$.
It is important that the ground field is finite.
Some numerical value $2$ should be replace by the characteristic $p$ of the field.
With this generalization in mind, we keep necessary minus ($-$) signs in the statements,
which should be ignored for qu\emph{b}its.
Examples in this direction can be found in~\cite{Kim2012qupit}.

We start by deriving the matrix representation of a commuting Pauli Hamiltonian,
and explain in detail how the translation-invariance is exploited.
The notion of modules over the translation-group algebra shall naturally emerge.
Some operations on modules are induced by local unitary transformations on the physical system.
They will define an equivalence relation between the matrix representations, and hence between Hamiltonians.
We move on to the topological order and translate the conditions
into those on a chain complex of modules.
The characterization of charges will be given
by exploiting the positive characteristic of the ground field.
Finally, consequences of the topological order condition
in two and three spatial dimensions will be derived.
Explicit calculations and more examples are presented in the last section.
All ring in the present paper shall be commutative with 1.

\emph{Acknowledgements:}
The author would like to thank
Sergey Bravyi, 
Lawrence Chung, 
Alexei Kitaev, 
John Preskill, 
Eric Rains, 
and Ari Turner
for useful discussions.
The author thanks Tom Graber for giving an intuitive explanation
for Proposition~\ref{prop:support-torsion-fitting-ideal}.
The author is supported in part by the
Institute for Quantum Information and Matter, an NSF Physics Frontier Center,
and the Korea Foundation for Advanced Studies.

\begin{table}[t]
\begin{tabular}{c|l}
\hline
$\FF_2$ & binary field $\{0,1\}$\\
$D$ & spatial dimension \\
$R$ & $\FF_2[x_1, x_1^{-1}, \ldots, x_D, x_D^{-1}]$ \\
$\bb_L$ & ideal $(x_1^L-1,\ldots,x_D^L -1)$\\
$q$ & number of qubits per site \\
$t$ & number of interaction types \\
$G$ & free $R$-module of the interaction labels (rank $t$) \\
$P$ & free $R$-module of Pauli operators (rank $2q$)\\
$E$ & free $R$-module of excitations (rank $t$)\\
$\sigma$ & $G \to P$, generating matrix or map for the stabilizer module \\
$\epsilon$ & $P \to E$, generating matrix or map for excitations \\
$r \mapsto \bar r$ & antipode map of the group algebra $R$.\\
$\dagger$ & transpose followed by antipode map \\
$\lambda_q$ & anti-symmetric $2q \times 2q$ matrix $\begin{pmatrix} 0 & \id \\ -\id & 0 \end{pmatrix}$ \\
\hline
\end{tabular}
\caption{Reserved symbols. Any ring in this paper is commutative with 1.}
\end{table}

\section{Algebraic structure of commuting Pauli Hamiltonians}

\subsection{Pauli group as a vector space}

The Pauli matrices
\[
 \sigma_x = \begin{pmatrix} 0 & 1 \\ 1 & 0 \end{pmatrix}, \quad
 \sigma_y = \begin{pmatrix} 0 & -i \\ i & 0 \end{pmatrix}, \quad
 \sigma_z = \begin{pmatrix} 1 & 0 \\ 0 & -1 \end{pmatrix}
\]
satisfy
\[
 \sigma_a \sigma_b = i \varepsilon_{abc} \sigma_c, \quad
 \{ \sigma_a , \sigma_b \} = 2 \delta_{ab}.
\]
Thus, the Pauli matrices together with scalars $\pm 1, \pm i$
form a group under multiplication.
Given a system of qubits,
the set of all possible tensor products of the Pauli matrices form a group,
where the group operation is the multiplication of operators.
If the system is infinite, physically meaningful operators
are those of finite support, i.e.,
acting on all but finitely many qubits by the identity.
We shall only consider this Pauli group of finite support,
and call it simply the {\em Pauli group}.
An element of the Pauli group is called a {\em Pauli operator}.

Since any two elements of the Pauli group either commute or anti-commute,
ignoring the phase factor altogether, one obtains an {\em abelian} group.
Moreover, since any element $O$ of the Pauli group satisfies $O^2 = \pm I$,
An action of $\mathbb{Z}/2\mathbb{Z}$ on Pauli group modulo phase factors
$P / \{\pm 1, \pm i\}$ is well-defined,
by the rule $ n \cdot O = O^n$ where $n \in \mathbb{Z}/2\mathbb{Z}$.
For $\FF_2 = \mathbb{Z}/2\mathbb{Z}$ being a field,
$P / \{\pm 1, \pm i\}$ becomes a vector space over $\FF_2$.
The group of single qubit Pauli operators up to phase factors
is identified with the two dimensional $\FF_2$-vector space.
If $\Lambda$ is the index set of all qubits in the system,
the whole Pauli group up to phase factors is the direct sum $\bigoplus_{i \in \Lambda} V_i$
where $V_i$ is the vector space of the Pauli operators for the qubit at $i$.
Explicitly, $I = (00), \sigma_x = (10), \sigma_z = (01), \sigma_y = (11)$.
A multi-qubit Pauli operator
is written as a finite product of the single qubit Pauli operators,
and hence is written as a binary string in which all but finitely many entries are zero.
A pair of entries of the binary string describes a single qubit component in the tensor product expression.
The multiplication of two Pauli operators corresponds to entry-wise addition of the two binary strings modulo 2.

The commutation relation may seem at first lost,
but one can recover it by introducing a symplectic form~\cite{CalderbankRainsShorEtAl1997Quantum}.
Let $\lambda_1 = \begin{pmatrix} 0 & 1 \\ -1 & 0 \end{pmatrix}$
be a symplectic form on the vector space $(\FF_2)^2$ of a single qubit Pauli operators.
\footnote{The minus sign is not necessary for qubits, but is for qudits of prime dimensions}
One can easily check that the commutation relation of two Pauli matrices $O_1, O_2$
is precisely the value of this symplectic form
evaluated on the pair of vectors representing $O_1$ and $O_2$.
Two multi-qubit Pauli operator (anti-)commutes if and only if
there are (odd)even number of pairs of the anti-commuting single qubit Pauli operators
in their tensor product expression.
So, the two Pauli operator (anti-)commutes
precisely when the value of the direct sum of symplectic form $\bigoplus_{q \in \Lambda} \lambda_1$
is (non-)zero. ($\Lambda$ could be infinite but the form is well-defined since any vector representing a Pauli operator is of finite support.)
We shall call the value of the symplectic form the {\em commutation value}.

\subsection{Pauli space on a group}
\label{sec:Pauli-space-on-group}

Let $\Lambda$ be the index set of all qubits,
and suppose now that $\Lambda$ itself is an abelian group.
There is a natural action of $\Lambda$ on the Pauli group modulo phase factors
induced from the group action of $\Lambda$ on itself by multiplication.
For example, if $\Lambda = \mathbb{Z}$,
the action of $\Lambda$ is the translation on the one dimensional chain of qubits.
If $R=\FF_2[\Lambda]$ is the group algebra with multiplicative identity denoted by $1$,
the Pauli group modulo phase factors acquires a structure of an $R$-module.
We shall call it the {\em Pauli module}.
The Pauli module is free and has rank 2.

Let $r \mapsto \bar r$ be the antipode map of $R$, 
i.e., the $\FF_2$-linear map into itself 
such that each group element is mapped to its inverse.
Since $\Lambda$ is abelian, the antipode map is an algebra-automorphism.
Let the coefficient of $a \in R$ at $g \in \Lambda$ be denoted by $a_g$.
Hence, $a = \sum_{g \in \Lambda} a_g g$ for any $a \in R$.
One may write $a_g = (a \bar g)_1$.

Define
\[
 \tr(a) = a_1
\]
for any $a \in R$.
\begin{prop}
\cite{CalderbankRainsShorEtAl1997Quantum}
Let $(a,b), (c,d) \in R^2$ be two vectors representing Pauli operators $O_1, O_2$ up to phase factors:
\begin{align*}
 O_1 &= \left( \bigotimes_{g \in \Lambda} (\sigma_x^{(g)})^{a_g} \right) 
       \left( \bigotimes_{g \in \Lambda} (\sigma_z^{(g)})^{b_g} \right) ,
\\
 O_2 &= \left( \bigotimes_{g \in \Lambda} (\sigma_x^{(g)})^{c_g} \right) 
       \left( \bigotimes_{g \in \Lambda} (\sigma_z^{(g)})^{d_g} \right) 
\end{align*}
where $\sigma^{(g)}$ denotes the single qubit Pauli operator at $g \in \Lambda$.
Then, $O_1$ and $O_2$ commute if and only if  
\[
 \tr \left(
\begin{pmatrix} \bar a & \bar b \end{pmatrix}
\begin{pmatrix} 0 & 1 \\ -1 & 0 \end{pmatrix}
\begin{pmatrix} c \\ d \end{pmatrix}
\right) = 0 .
\]
\label{prop:trace-formula-commutation-value}
\end{prop}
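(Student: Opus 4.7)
The plan is to reduce the bilinear expression on the right to a sum over sites and then match it against the site-wise commutation value described in the preceding paragraph. Expanding the $2\times 2$ product gives
\[
 \begin{pmatrix} \bar a & \bar b \end{pmatrix}
 \begin{pmatrix} 0 & 1 \\ -1 & 0 \end{pmatrix}
 \begin{pmatrix} c \\ d \end{pmatrix}
 \;=\; \bar a\, d - \bar b\, c \;\in\; R,
\]
so the criterion to prove reads $\tr(\bar a d - \bar b c) = 0$.

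The first ingredient I would establish is the pairing identity $\tr(\bar u v) = \sum_{g \in \Lambda} u_g v_g$ for any $u, v \in R$. Since $\bar u = \sum_{g} u_g\, g^{-1}$, the product $\bar u v = \sum_{g,h} u_g v_h\, g^{-1}h$ contributes to the coefficient at $1 \in \Lambda$ precisely when $h = g$, yielding the stated diagonal sum. This is essentially the only place where the antipode is used in a nontrivial way: it is what converts the convolution built into multiplication in the group algebra $R$ into an honest $\FF_2$-inner product on coefficient vectors.

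Next I would invoke the site-wise computation already on the table. For a single qubit, $\sigma_x^{\alpha}\sigma_z^{\beta}$ and $\sigma_x^{\gamma}\sigma_z^{\delta}$ commute up to the sign $(-1)^{\alpha\delta - \beta\gamma}$, i.e.\ the value of $\lambda_1$ on the two column vectors; this is immediate from $\sigma_x \sigma_z = -\sigma_z \sigma_x$. Because the site-wise factors of $O_1$ and $O_2$ act on disjoint qubits, these phases multiply independently, so
\[
 O_1 O_2 \;=\; (-1)^{\sum_{g}(a_g d_g - b_g c_g)} \, O_2 O_1,
\]
the sum being finite since $a,b,c,d$ have finite support. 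Applying the pairing identity term-by-term produces
\[
 \tr(\bar a d - \bar b c) \;=\; \sum_{g}\bigl(a_g d_g - b_g c_g\bigr) \;\in\; \FF_2,
\]
and hence $\tr(\bar a d - \bar b c) = 0$ if and only if $O_1$ and $O_2$ commute.

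No step here is a serious obstacle; the only items requiring a moment of care are the role of the antipode in the pairing identity, and keeping the sign in the off-diagonal entries of $\lambda_1$ so that the same calculation carries over, unchanged, to qudits of prime characteristic $p$.
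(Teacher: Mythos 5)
Your proposal is correct and follows essentially the same route as the paper: reduce to the site-wise commutation value $\alpha\delta - \beta\gamma$ for a single qubit, and observe that $\tr(\bar u v) = \sum_g u_g v_g$ converts the group-algebra product into the coefficient-wise pairing that sums those site-wise values. The paper phrases the last step as checking the formula on single-site generators and appealing to bilinearity, while you isolate the pairing identity and compute the total commutation sign directly; these are the same argument, differently packaged.
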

\begin{proof}
The commutation value of $(\sigma_x^{(g)})^{n} (\sigma_z^{(g)})^{m}$
and $(\sigma_x^{(g)})^{n'} (\sigma_z^{(g)})^{m'}$
is $nm' - mn' \in \FF_2$.
Viewed as pairs of group algebra elements,
$(\sigma_x^{(g)})^{n} (\sigma_z^{(g)})^{m}$
and $(\sigma_x^{(g)})^{n'} (\sigma_z^{(g)})^{m'}$
are $(n g, m g)$ and $(n' g, m' g)$, respectively.
We see that
\[
 nm' - mn' =  \tr \left(
\begin{pmatrix} n g^{-1} &  m g^{-1} \end{pmatrix}
\begin{pmatrix} 0 & 1 \\ -1 & 0 \end{pmatrix}
\begin{pmatrix} n' g \\ m' g \end{pmatrix}
\right) .
\]
Since any Pauli operator is a finite product of these,
the result follows by linearity.
\end{proof}

We wish to characterize a $\FF_2$-subspace $S$ of the Pauli module
invariant under the action of $\Lambda$, i.e., a submodule,
on which the commutation value is always zero.
As we will see in the next subsection,
this particular subspace yields a local Hamiltonian whose energy spectrum is exactly solvable,
which is the main object of this paper.
Let $(a,b)$ be an element of $S \subseteq R^2 = (\FF_2[\Lambda])^2$.
For any $r \in R$, $(ra, rb)$ must be a member of $S$.
Demanding that the symplectic form on $S$ vanish,
by Proposition~\ref{prop:trace-formula-commutation-value} we have
\[
 \tr (ra \bar b - rb \bar a) = 0 .
\]
Since $r$ was arbitrary, we must have $a \bar b - b \bar a = 0$.%
\footnote{{A symmetric bilinear form $ \langle r, s \rangle = \tr(r \bar{s}) $ on $R$ is non-degenerate.}}
Let us denote $\begin{pmatrix}\bar a & \bar b \end{pmatrix}$ 
as $\begin{pmatrix} a \\ b \end{pmatrix}^\dagger$,
and write any element of $R^2$ as a $2 \times 1$ matrix.
We conclude that $S$ is a submodule of $R^2$ over $R$ generated by $s_1, \ldots, s_t$
such that any commutation value always vanishes, if and only if
\[
 s_i^\dagger \lambda_1 s_j = 0
\]
for all $i,j = 1,\ldots,t$.

The requirement that $\Lambda$ be a group might be too restrictive.
One may have a coarse group structure on $\Lambda$, the index set of all qubits.
We consider the case that the index set is a product of a finite set and a group.
By abuse of notation, we still write $\Lambda$ to denote the group part,
and insist that to each group element are associated $q$ qubits ($q \ge 1$).
Thus obtained Pauli module should now be identified with $R^{2q}$,
where $R = \FF_2[\Lambda]$ is the group algebra
that encodes the notion of translation.
We write an element $v$ of $R^{2q}$ by a $2q \times 1$ matrix,
and denote by $v^\dagger$ the transpose matrix of $v$ whose each entry is applied by the antipode map.
We always order the entries of $v$ such that the upper $q$ entries
describes the $\sigma_x$-part and the lower the $\sigma_z$-part.
Since the commutation value on $R^{2q}$ is the sum of commutation values on $R^2$,
we have the following:
If $S$ is a submodule of $R^{2q}$ over $R$ generated by $s_1, \ldots, s_t$,
the commutation value always vanishes on $S$, if and only if for all $i,j = 1, \ldots, t$
\[
 s_i^\dagger \lambda_q s_j = 0
\]
where $\lambda_q = \begin{pmatrix} 0 & \id_q \\ -\id_q & 0 \end{pmatrix}$ is a $2q \times 2q$ matrix.

Let us summarize our discussion so far.
\begin{prop}
On a set of qubits $\Lambda \times \{1,\ldots,q\}$ where $\Lambda$ is an abelian group,
the group of all Pauli operators of finite support up to phase factors,
form a free module $P=R^{2q}$ over the group algebra $R = \FF_2[\Lambda]$.
The commutation value
\[ 
 \langle a, b \rangle = \tr( a^\dagger \lambda_q b )
\]
for $a,b \in P$ is zero
if and only if the Pauli operators corresponding to $a$ and $b$ commute.
If $\sigma$ is a $2q \times t$ matrix whose columns generate a submodule $S \subseteq P$,
then the commutation value on $S$ always vanishes if and only if
\[
 \sigma^\dagger \lambda_q \sigma = 0 .
\]
\label{prop:pauli-module}
\end{prop}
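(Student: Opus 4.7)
The plan is to assemble the three claims from material already established. The free module structure on $P = R^{2q}$ is immediate from the construction in the preceding discussion: single-qubit Pauli operators modulo phases form a two-dimensional $\FF_2$-vector space, and taking $q$ qubits per group element with $\Lambda$ acting by translation gives a free $R$-module of rank $2q$, with the convention that the first $q$ entries encode $\sigma_x$-components and the last $q$ entries encode $\sigma_z$-components.

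Next I would establish the commutation formula by reducing to Proposition~\ref{prop:trace-formula-commutation-value}. That proposition gives the identity for $q=1$. For general $q$, single-qubit Pauli operators at distinct internal labels commute, so the total commutation value of two multi-qubit Pauli operators is the $\FF_2$-sum over $i \in \{1,\ldots,q\}$ of the commutation values in the $i$-th slot. Writing $a = (a^x_1,\ldots,a^x_q,a^z_1,\ldots,a^z_q)^T$ and similarly for $b$, the contribution from slot $i$ is $\tr(\bar a^x_i b^z_i - \bar a^z_i b^x_i)$, which is exactly the $i$-th diagonal term of $\tr(a^\dagger \lambda_q b)$. Summing over $i$ yields $\langle a,b\rangle = \tr(a^\dagger \lambda_q b)$, and the claim that this vanishes iff the operators commute follows.

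For the submodule criterion, I expand arbitrary elements of $S$ as $\sum_i r_i s_i$ and $\sum_j r'_j s_j$ with $r_i, r'_j \in R$. By the sesquilinearity that $\dagger$ induces, the commutation value is
\[
 \sum_{i,j} \tr\bigl( \bar r_i \, r'_j \, s_i^\dagger \lambda_q s_j \bigr).
\]
If $s_i^\dagger \lambda_q s_j = 0$ for all $i,j$, every commutation value in $S$ vanishes. Conversely, fixing $i,j$ and varying $r_i, r'_j$ freely shows that the element $f_{ij} := s_i^\dagger \lambda_q s_j \in R$ satisfies $\tr(\rho\, f_{ij}) = 0$ for all $\rho \in R$. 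The pairing $(\rho,f) \mapsto \tr(\rho \bar f)$ on the group algebra is non-degenerate — reading off the coefficient of $f_{ij}$ at $g \in \Lambda$ amounts to testing against $\rho = \bar g$ — so $f_{ij} = 0$.

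The only non-formal ingredient is the non-degeneracy of the trace pairing on $R$; everything else is either a restatement of Proposition~\ref{prop:trace-formula-commutation-value} or a routine bilinear expansion. I expect the non-degeneracy step to be the one place where the group-algebra structure is essential, but it is also the shortest, since the group elements of $\Lambda$ form an $\FF_2$-basis of $R$ dual to themselves under this pairing.
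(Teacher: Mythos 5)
Your proof is correct and follows essentially the same route as the paper: reduce the commutation formula to Proposition~\ref{prop:trace-formula-commutation-value} by summing over the $q$ internal slots, then use the non-degeneracy of the trace pairing $\tr(\rho\bar f)$ on $R$ to pass from vanishing of all commutation values on $S$ to the matrix identity $\sigma^\dagger\lambda_q\sigma=0$. The paper's inline argument for the submodule criterion is stated only for a single element $(a,b)\in S$ paired with its $R$-multiples; your version, which fixes a pair of generator indices $i,j$ and varies the two coefficients independently, is a slightly more explicit polarization of the same idea.
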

Proposition~\ref{prop:trace-formula-commutation-value}~\cite{CalderbankRainsShorEtAl1997Quantum}
is a special case of Proposition~\ref{prop:pauli-module} when $\Lambda$ is a trivial group.

\subsection{Local Hamiltonians on groups}
Recall that we place $q$ qubits on each {\em site} of $\Lambda$.
The total system of the qubits is $\Lambda \times \{1,\ldots,q\}$.
\begin{defn}
Let
\[
 H = -\sum_{g \in \Lambda} h_{1,g} + \cdots + h_{t,g}
\]
be a local Hamiltonian consisted of Pauli operators that is (i)commuting, (ii) translation-invariant up to signs, and (iii) frustration-free.
We call $H$ a {\em code Hamiltonian} (also known as {\em stabilizer Hamiltonian}).
The {\em stabilizer module} of $H$ is the submodule of the Pauli module $P$
generated by the images of $h_1,\ldots,h_t$ in $P$.
The number of {\em interaction types} is $t$.
\end{defn}
The energy spectrum of the code Hamiltonian is trivial;
it is discrete and equally spaced.

\begin{example}
One dimensional Ising model is the Hamiltonian
\[
 H = - \sum_{i \in \mathbb{Z}} \sigma_z^{(i)} \otimes \sigma_z^{(i+1)} .
\]
The lattice is the additive group $\mathbb{Z}$,
and the group algebra is $R=\FF_2[x,\bar x]$.
The Pauli module is $R^2$ and the stabilizer module $S$ is generated by
\[
\begin{pmatrix}
 0 \\
 1+x
\end{pmatrix} .
\]
One can view this as the matrix $\sigma$ of Proposition~\ref{prop:pauli-module}.
$H$ is commuting; $\sigma^\dagger \lambda_1 \sigma = 0$.
\hfill $\Diamond$
\label{eg:1d-ising}
\end{example}

\subsection{Excitations}

For a code Hamiltonian $H$,
an excited state is described by the terms in the Hamiltonian that have eigenvalues $-1$.
Each of the flipped terms is interpreted as an {\em excitation}.
Although the actual set of all possible configurations of excitations
that are obtained by applying some operator to a ground state, may be quite restricted,
it shall be convenient to think of a larger set.
Let $E$ be the set of all configurations of finite number of excitations without asking physical relevance.
Since an excitation is by definition a flipped term in $H$,
the set $E$ is equal to the collection of all finite sets consisted of the terms in $H$.

If Pauli operators $U_1, U_2$ acting on a ground state creates excitations $e_1, e_2 \in E$,
their product $U_1 U_2$ creates excitations $(e_1 \cup e_2) \setminus (e_1 \cap e_2)$.
Here, we had to remove the intersection because each excitation is its own annihilator;
any term in the $H$ squares to the identity.
Exploiting this fact, we make $E$ into a vector space over $\FF_2$.
Namely, we take formal linear combinations of terms in $H$
with the coefficient $1 \in \FF_2$ when the terms has $-1$ eigenvalue,
and the coefficient $0 \in \FF_2$ when the term has $+1$ eigenvalue.
The symmetric difference is now expressed as the sum of two vectors $e_1 + e_2$ over $\FF_2$.
In view of Pauli group as a vector space,
$U_1 U_2$ is the sum of the two vectors $v_1 + v_2$ that respectively represents $U_1, U_2$.
Therefore, the association $U_i \mapsto e_i$ induces 
a linear map from the Pauli space to the space of virtual excitations $E$.

The set of all excited states obeys the translation-invariance as the code Hamiltonian $H$ does.
So, $E$ is a module over the group algebra $R=\FF_2[\Lambda]$.
The association $U_i \mapsto e_i$ clearly respects this translation structure.
Our discussion is summarized by saying that the excitations are described by an $R$-linear map
\[
 \epsilon : P \to E
\]
from the Pauli module $P$ to the {\em module of virtual excitations $E$}.

As the excitation module is the collection of all finite sets of the terms in $H$,
we can speak of the {\em module of generator labels} $G$,
which is equal to $E$ as an $R$-module.
$G$ is a free module of rank $t$ if there are $t$ types of interaction.
The matrix $\sigma$ introduced in Section~\ref{sec:Pauli-space-on-group} can be viewed as
\[
 \sigma : G \to P
\]
from the module of generator labels to the Pauli module.
\begin{prop}
If $\sigma$ is the generating map for the stabilizer module of a code Hamiltonian,
then 
\[
 \epsilon = \sigma^\dagger \lambda_q .
\]
\end{prop}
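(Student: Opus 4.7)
The plan is to unpack what the map $\epsilon$ does at the level of individual Hamiltonian terms and compare with the matrix expression $\sigma^\dagger \lambda_q$. The columns of $\sigma$ are vectors $s_1,\ldots,s_t \in P$ representing the $t$ interaction types; the full list of Hamiltonian terms is then $\{ g \cdot s_i : g \in \Lambda,\ i=1,\ldots,t\}$, and under the identification of $G$ and $E$, the excitation ``$g\cdot s_i$ is flipped'' corresponds to the basis element $g\cdot e_i \in E = R^t$. So the task reduces to computing, for an arbitrary $v \in P$, the coefficient at $g \in \Lambda$ of the $i$-th component of $\epsilon(v)$, and showing it equals the coefficient at $g$ of the $i$-th component of $\sigma^\dagger \lambda_q v$.

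First I would fix $v \in P$ and note that by definition $\epsilon(v)$ records which terms $g\cdot s_i$ anti-commute with (the Pauli operator represented by) $v$: the coefficient at $g$ in $[\epsilon(v)]_i$ is $1$ precisely when $gs_i$ and $v$ anti-commute, and $0$ otherwise. By Proposition~\ref{prop:trace-formula-commutation-value} (applied with the rank-$q$ symplectic form as generalized in Proposition~\ref{prop:pauli-module}), this anti-commutation is measured by the commutation value
\[
 \langle g s_i, v \rangle = \tr\bigl( (g s_i)^\dagger \lambda_q v \bigr).
\]

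Next I would simplify using the antipode map: since $(g s_i)^\dagger = \bar g \, s_i^\dagger$ and $\bar g$ commutes past the constant matrix $\lambda_q$, setting $x_i := s_i^\dagger \lambda_q v \in R$ gives
\[
 \tr\bigl( (g s_i)^\dagger \lambda_q v\bigr) = \tr(\bar g \, x_i) = (x_i)_g,
\]
where I use the basic identity $a_g = (a\bar g)_1 = \tr(a\bar g)$ recorded right before Proposition~\ref{prop:trace-formula-commutation-value}. Summing over $g \in \Lambda$ then yields
\[
 [\epsilon(v)]_i \;=\; \sum_{g \in \Lambda} (x_i)_g\, g \;=\; x_i \;=\; s_i^\dagger \lambda_q v,
\]
and stacking these $t$ equations into a single matrix identity gives $\epsilon(v) = \sigma^\dagger \lambda_q v$ for all $v$, i.e. $\epsilon = \sigma^\dagger \lambda_q$ as desired.

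The whole argument is essentially bookkeeping; the only mild subtlety is checking that an excitation configuration should be read off directly from the commutation values of $v$ with \emph{each individual translate} $g\cdot s_i$ (not just the generators $s_i$ themselves), and that the antipode map converts this pointwise data into the polynomial coefficients of $s_i^\dagger \lambda_q v$. Once that correspondence is clear, the rest is a one-line computation using the trace identity.
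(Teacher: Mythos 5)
Your proof is correct and follows essentially the same route as the paper's: identify the $g$-coefficient of $[\epsilon(v)]_i$ with the commutation value $\tr\bigl((g s_i)^\dagger \lambda_q v\bigr)$, pull $\bar g$ out through the antipode, and recognize the resulting sum over $g$ as the Laurent-polynomial $s_i^\dagger \lambda_q v$. The extra expository remarks about reading off excitations from each translate $g\cdot s_i$ make the bookkeeping explicit but do not change the substance.
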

The matrix $\epsilon$ can be viewed as a generalization of
the parity check matrix of the standard theory of classical or quantum error correcting codes%
~\cite{MacWilliamsSloane1977,CalderbankShor1996Good,Steane1996Multiple,Gottesman1996Saturating},
when a translation structure is given.

\begin{proof}
This is a simple corollary of Proposition~\ref{prop:pauli-module}.
Let $h_{i,g}$ be the terms in the Hamiltonian where $i = 1,\ldots, t$, and $ g \in \Lambda $.
In the Pauli module, they are expressed as $g h_i$ where $h_i$ is the $i$-th column of $\sigma$.
For any $u \in P$, let $\epsilon(u)_i$ be the $i$-th component of $\epsilon(u)$. By definition,
\[
 \epsilon( u )_i 
= \sum_{g \in \Lambda} g~ \tr \left( (g h_i)^\dagger \lambda_q u \right)
= \sum_{g \in \Lambda} g~ \tr \left( \bar g h_i^\dagger \lambda_q u  \right)
= h_i^\dagger \lambda_q u
\]
Thus, $h_i^\dagger \lambda_q$ is the $i$-th row of $\epsilon$.
\end{proof}

\begin{rem}
The commutativity condition in Proposition~\ref{prop:pauli-module} of the code Hamiltonian
is recast into the condition that
\[
 G \xrightarrow{\sigma} P \xrightarrow{\epsilon} E
\]
be a complex, i.e., $\epsilon \circ \sigma = 0$.
Equivalently,
\[
 \im \sigma \subseteq (\im \sigma)^\perp = \ker \epsilon
\]
where $\perp$ is with respect to the symplectic form.
\end{rem}

\section{Equivalent Hamiltonians}

The stabilizer module entirely determines
the physical phase of the code Hamiltonian
in the following sense.
\begin{prop}
Let $H$ and $H'$ be code Hamiltonians on a system of qubits,
and suppose their stabilizer modules are the same.
Then, there exists a unitary
\[
 U = \bigotimes_{g \in \Lambda} U_{g}
\]
mapping the ground space of $H$ onto that of $H'$.
Moreover, there exist a continuous one-parameter family of gapped Hamiltonians
connecting $U H U^\dagger$ and $H'$.
\label{prop:associated-module-invariance}
\end{prop}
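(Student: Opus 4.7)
The plan is to split the proposition into two parts: first construct a site-factored unitary $U = \bigotimes_g U_g$ carrying the ground space of $H$ onto that of $H'$, and then exhibit a gapped continuous family joining $U H U^\dagger$ to $H'$.

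For the first part, observe that a Hermitian tensor product of Pauli matrices is determined by its image in $P$ up to an overall sign. Since the stabilizer modules of $H$ and $H'$ coincide, every element of the signed stabilizer group $\mathcal{S}_H$ of $H$ has a unique counterpart in $\mathcal{S}_{H'}$ with the same image in $P$, and the ratio of their signs defines a function $\delta : S \to \{\pm 1\}$. Using commutativity and Hermiticity of stabilizers, together with the fact that frustration-freeness excludes $-I$ from both $\mathcal{S}_H$ and $\mathcal{S}_{H'}$, one checks that $\delta$ is a group homomorphism, i.e.\ an $\FF_2$-linear functional on $S$ under $\{\pm 1\} \cong \FF_2$. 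By Proposition~\ref{prop:pauli-module} the pairing $\tr(v^\dagger \lambda_q s)$ is nondegenerate on $P$; in finite volume the restriction map $P \to S^*$ to the $\FF_2$-linear dual, $v \mapsto \tr(v^\dagger \lambda_q\, \cdot\,)|_S$, has kernel $S^\perp$ of $\FF_2$-codimension $\dim S = \dim S^*$, so it is surjective. Choose $U \in P$ realizing $\delta$; by Proposition~\ref{prop:pauli-module} once more, conjugation by $U$ flips each stabilizer $s$ by $\delta(s)$, sending $\mathcal{S}_H$ to $\mathcal{S}_{H'}$ and hence the ground space of $H$ onto that of $H'$. Any Pauli operator factors automatically as $\bigotimes_g U_g$ with $U_g$ a product of single-qubit Paulis at site $g$, so the desired form is free.

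For the second part, let $\mathcal{G}$ be the common ground space of $U H U^\dagger$ and $H'$. Shift each by a constant so that it becomes positive semidefinite with kernel exactly $\mathcal{G}$, giving local operators $\tilde H$ and $\tilde H'$, and set $H(t) = (1-t) \tilde H + t \tilde H'$. This is a norm-continuous family of PSD local Hamiltonians whose kernel at every $t \in [0,1]$ is still $\mathcal{G}$: if $H(t) v = 0$, then $\langle v | \tilde H | v \rangle$ and $\langle v | \tilde H' | v \rangle$ both vanish by nonnegativity, forcing $v \in \ker \tilde H \cap \ker \tilde H' = \mathcal{G}$. In finite volume the spectral gap is then a strictly positive continuous function on the compact interval $[0,1]$, hence uniformly bounded below.

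The main obstacle is promoting this gap bound to the thermodynamic limit, since finite-dimensional continuity does not by itself deliver a system-size-independent gap. I would address this with a translation-invariant frustration-free gap criterion (for instance a Knabe-type local bound, or a detectability-lemma estimate) applied uniformly in $t$, exploiting that the two endpoints have identical local structure at the module level. A subsidiary point is making sense of $U$ as an automorphism of the quasi-local algebra when $\Lambda$ is infinite, which is handled by patching finite-window Pauli lifts: such lifts exist for every window by the surjectivity argument above, and they assemble consistently into a site-factored product because $\delta$ is determined by translation-equivariant module data.
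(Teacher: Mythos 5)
Your approach to the first part parallels the paper's: interpret the sign discrepancy between the two sign assignments on $S$ as an $\FF_2$-linear functional $\delta$ (your check that $\delta$ is a homomorphism is the same content as the paper's statement that $p_\alpha \mapsto e_\alpha$ defines a dual vector), then realize $\delta$ as commutation with a Pauli operator via nondegeneracy of the symplectic pairing. Where your version falls short is the infinite lattice. Your surjectivity of $P \to S^*$ is a finite-dimension count, and the remedy you sketch --- patching finite-window Pauli lifts and asserting that they ``assemble consistently'' because $\delta$ comes from translation-equivariant data --- is not a proof: a lift of $\delta|_{S_L}$ on a window of size $L$ is unique only modulo $S_L^\perp \cap P_L$, so coherence of a tower of such lifts needs an inverse-limit argument you have not given. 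The paper sidesteps the issue by noting that the algebraic dual of $P = \bigoplus_{\Lambda \times \{1,\ldots,q\}} \FF_2^2$ is the direct product $\prod_{\Lambda \times \{1,\ldots,q\}} \FF_2^2$, which is precisely the set of Pauli operators of \emph{arbitrary} (possibly infinite) support; one extends $\delta$ from $S$ to all of $P$ by choosing a linear complement, and the extension is then read off as a single element of the product.

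For the second part, your concern that continuity-plus-compactness only yields a gap in fixed finite volume is correct as far as it goes, but the repair you propose (Knabe-type bounds, detectability lemma) is unnecessary and misses the elementary estimate that makes the paper's path work. Once $\tilde H$ and $\tilde H'$ are written as sums of mutually commuting projectors whose common null space is $\mathcal{G}$, one has the operator inequalities $\tilde H \geq \Pi_{\mathcal{G}}^{\perp}$ and $\tilde H' \geq \Pi_{\mathcal{G}}^{\perp}$, where $\Pi_{\mathcal{G}}^{\perp}$ projects onto the orthogonal complement of $\mathcal{G}$; hence $H(t) = (1-t)\tilde H + t\tilde H' \geq \Pi_{\mathcal{G}}^{\perp}$ for all $t \in [0,1]$, giving a gap of at least $1$ uniformly in $t$ and in the system size. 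This is exactly what makes the paper's terse piecewise-linear path $H(1,0) \to H(1,1) \to H(0,1)$ gapped without further argument.
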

\begin{proof}
Let $\{ p_\alpha \}$ be a maximal set of $\FF_2$-linearly independent Pauli operators of finite support
that generates the common stabilizer module $S$.
$\{p_\alpha\}$ is not necessarily translation-invariant.
Any ground state $\ket \psi$ of $H$ is a common eigenspace of $\{p_\alpha\}$
with eigenvalues $p_\alpha \ket \psi = e_\alpha \ket \psi$, $e_\alpha = \pm 1$.
Similarly, the ground space of $H'$ gives the eigenvalues $e'_\alpha = \pm 1$ for each $p_\alpha$.

The abelian group generated by $\{ p_\alpha \}$ is precisely the vector space $S$,
and the assignment $p_\alpha \mapsto e_\alpha$ defines a dual vector on $S$.
If $U$ is a Pauli operator of possibly infinite support,
then $ p_\alpha U \ket \psi = e''_\alpha e_\alpha U \ket \psi$ for some $e''_\alpha = \pm 1$,
where $e''_\alpha$ is determined by the commutation relation between $U$ and $p_\alpha$.
Thus, the first statement follows if we can find $U$
such that the commutation value between $U$ and $p_\alpha$ is precisely $e''_\alpha$.
This is always possible since the dual space of the vector space $P$
is isomorphic to the direct product $\prod _{\Lambda \times \{1,\ldots, q\}} \FF_2^2$,
which is vector-space-isomorphic to the Pauli group of arbitrary support up to phase factors.%
\footnote{{If $V$ is a finite dimensional vector space over some field,%
the dual vector space of $\bigoplus_I V$ %
is isomorphic to $\prod_{I} V$ where $I$ is an arbitrary index set.}}

Now, $U H U^\dagger$ and $H'$ have the same eigenspaces, and in particular, the same ground space.
Consider a continuous family of Hamiltonians
\[
H(u,u') = u U H U^\dagger + u' H'
\]
where $u,u' \in \mathbb{R}$.
It is clear that
\[
 H = H(1,0) \to H(1,1) \to H(0,1) = H'
\]
is a desired path.
\end{proof}

The criterion of Proposition~\ref{prop:associated-module-invariance}
to classify the physical phases is too narrow.
Physically meaningful universal properties should be invariant
under simple and local changes of the system. More concretely,
\begin{defn}
Two code Hamiltonians $H$ and $H'$ are {\em equivalent} if
their stabilizer modules become the same under a finite composition
of symplectic transformations, coarse-graining, and tensoring ancillas.
\label{defn:equiv-H}
\end{defn}
We shall define the symplectic transformations,
the coarse-graining, and the tensoring ancillas shortly.

\subsection{Symplectic transformations}
\label{sec:symplectic-transformations}

\begin{defn}
A {\em symplectic transformation} $T$ is an automorphism of the Pauli module
induced by a unitary operator on the system of qubits
such that 
\[
  T^\dagger \lambda_q T = \lambda_q
\]
where $\dagger$ is the transposition followed by the entry-wise antipode map.
\label{defn:symplectic-transformation}
\end{defn}
When the translation group is trivial these transformations are given by so-called \emph{Clifford operators}.
See \cite[Chapter 15]{KitaevShenVyalyi2002CQC}.

Only the unitary operator on the physical Hilbert space 
that respects the translation can induce a symplectic transformation.
By definition, a symplectic transformation maps each local Pauli operator to a local Pauli operator, and 
preserves the commutation value for any pair of Pauli operators.
\begin{prop}
Any two unitary operators $U_1, U_2$ that induce the same symplectic transformation
differ by a Pauli operator (of possibly infinite support).
\end{prop}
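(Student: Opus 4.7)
\medskip
\noindent\textbf{Proof plan.}
Set $V = U_2^{-1} U_1$, so that conjugation by $V$ induces the trivial symplectic transformation on $P / \{\pm 1,\pm i\}$. The plan is to show first that $V$ commutes with every Pauli operator \emph{up to a sign}, then to trade those signs for a single Pauli operator $Q$ of possibly infinite support, and finally to conclude that $Q^{-1}V$ lies in the commutant of the Pauli algebra and is therefore a scalar.

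\medskip
\noindent\textbf{Step 1: a $\pm 1$–valued character.}
For any finitely supported Pauli operator $P$, both $U_i P U_i^{-1}$ are Hermitian lifts of the same class $T(P) \in P/\{\pm 1,\pm i\}$, so each differs from a chosen Hermitian lift $\hat T(P)$ by a sign $\mu_i(P) \in \{\pm 1\}$ (the phase $\pm i$ is excluded because conjugation preserves Hermiticity). Hence $VPV^{-1} = \chi(P)\,P$ with $\chi(P) = \mu_1(P)\mu_2(P) \in \{\pm 1\}$. A direct computation $V(PP')V^{-1} = VPV^{-1}\cdot VP'V^{-1}$ shows $\chi$ is a homomorphism from the abelian group of Paulis modulo phases into $\FF_2 \cong \{\pm 1\}$, i.e.\ an $\FF_2$-linear functional on the vector space $P = \bigoplus_{g,i} V_{g,i}$.

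\medskip
\noindent\textbf{Step 2: realize $\chi$ by a Pauli of possibly infinite support.}
The dual of $\bigoplus_{g,i} V_{g,i}$ is $\prod_{g,i} V_{g,i}$, which, exactly as in the footnote to Proposition~\ref{prop:associated-module-invariance}, is the space of formal Pauli operators of arbitrary (possibly infinite) support. Because the symplectic form on each $V_{g,i}$ is non-degenerate, the commutation-value pairing realizes every linear functional on $P$ as $\langle Q,\cdot\rangle$ for a unique $Q$ in this product space. Applied to $\chi$, this produces a Pauli operator $Q$ (of possibly infinite support) such that $QPQ^{-1} = (-1)^{\langle Q,P\rangle}P = \chi(P)\,P$ for every finitely supported Pauli $P$. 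Note $Q$ is a well-defined unitary on the physical Hilbert space even with infinite support, being a tensor product of single-qubit Paulis.

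\medskip
\noindent\textbf{Step 3: $Q^{-1}V$ is scalar.}
By construction $Q^{-1}V$ commutes with every finitely supported Pauli operator, in particular with every single-qubit Pauli. Since the single-qubit Paulis at a given site $g$ generate $M_2(\mathbb{C})$ acting on that qubit, $Q^{-1}V$ commutes with all local operators, so by the irreducibility of the defining representation of the tensor algebra it acts as a scalar multiple of the identity. Hence $V = \lambda Q$ for some $\lambda \in U(1)$, i.e.\ $U_1 = \lambda\, U_2 Q$, proving that $U_1$ and $U_2$ differ by the Pauli operator $Q$ up to an irrelevant overall phase.

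\medskip
\noindent\textbf{Main obstacle.}
The only non-cosmetic point is the passage from finite-support Paulis to the infinite-support $Q$ of Step 2: one must invoke the dual-of-direct-sum fact from the preceding footnote to guarantee that the character $\chi$ can always be represented by some Pauli operator, even though no finitely supported Pauli need exist with the required commutation pattern. Once $Q$ is in hand the rest is the familiar Schur-type argument that the Pauli algebra has trivial commutant.
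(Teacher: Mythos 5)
Your proposal is correct and follows essentially the same route as the paper: identify that conjugation by $V=U_2^{-1}U_1$ preserves each Pauli up to a $\pm 1$ sign, use the dual-of-direct-sum observation (the same footnote invoked in Proposition~\ref{prop:associated-module-invariance}) together with non-degeneracy of the commutation pairing to realize this sign pattern by a Pauli $Q$ of possibly infinite support, and then conclude by the fact that Pauli operators span the local operator algebra. Your Steps 1 and 3 simply spell out details that the paper's terse proof compresses into single sentences.
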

If the translation group is trivial,
the proposition reduces to Theorem~15.6 of \cite{KitaevShenVyalyi2002CQC}
\begin{proof}
The symplectic transformation induced by $U = U_1^\dagger U_2$ is the identity.
Hence, $U$ maps each single qubit Pauli operator $\sigma_{x,z}^{(g,i)}$ to $\pm \sigma_{x,z}^{(g,i)}$.
By the argument as in the proof of Proposition~\ref{prop:associated-module-invariance},
there exists a Pauli operator $O$ of possibly infinite support
that acts the same as $U$ on the system of qubits.
Since Pauli operators form a basis of the operator algebra of qubits,
we have $O=U$.
\end{proof}

The effect of a symplectic transformation on the generating map $\sigma$
 is a matrix multiplication on the left.
\[
 \sigma \to U \sigma
\]
For example, the following is induced
by uniform Hadamard, controlled-Phase, and controlled-NOT gates.
For notational clarity,
define $E_{i,j}(a)\ (i \neq j)$ as the row-addition elementary $2q \times 2q$ matrix
\[
 \left[ E_{i,j}(a) \right]_{\mu \nu} = \delta_{\mu \nu} + \delta_{\mu i} \delta_{\nu j} a
\]
where $\delta_{\mu \nu}$ is the Kronecker delta and $a \in R = \FF_2[\Lambda]$.
Recall that we order the components of $P$ such that the first half components are for $\sigma_x$-part, 
and the second half components are for $\sigma_z$-part.
\begin{defn}
The following are {\em elementary symplectic transformations}:
\begin{itemize}
 \item (Hadamard) $E_{i,i+q}(-1) E_{i+q,i}(1) E_{i,i+q}(-1)$ where $1 \le i \le q$,
 \item (controlled-Phase) $E_{i+q,i}(f)$ where $f = \bar f$ and $1 \le i \le q$,
 \item (controlled-NOT) $E_{i,j}(a) E_{j+q,i+q}(-\bar a)$ where $1 \le i \ne j \le q$.
\end{itemize}
\end{defn}
For the case of a trivial translation group,
these transformations explicitly appear in \cite{CalderbankRainsShorEtAl1997Quantum}
and \cite[Chapter~15]{KitaevShenVyalyi2002CQC}.

Recall that the Hadamard gate is a unitary transformation on a qubit given by
\[
U_H = \frac{1}{\sqrt 2}
 \begin{pmatrix}
  1 & 1 \\
  1 & -1
 \end{pmatrix}
\]
with respect to basis $\{ \ket 0 , \ket 1 \}$.
At operator level,
\[
 U_H X U_H^\dagger = Z, \quad U_H Z U_H^\dagger = X 
\]
where $X$ and $Z$ are the Pauli matrices $\sigma_x$ and $\sigma_z$, respectively.
Thus, the application of Hadamard gate on every $i$-th qubit of each site of $\Lambda$
swaps the corresponding $X$ and $Z$ components of $P$.

The controlled phase gate is a two-qubit unitary operator whose matrix is
\[
U_P =
 \begin{pmatrix}
  1 & 0 & 0 & 0 \\
  0 & 1 & 0 & 0 \\
  0 & 0 & 1 & 0 \\
  0 & 0 & 0 & -1
 \end{pmatrix}
\]
with respect to basis $\{ \ket{00}, \ket{01}, \ket{10}, \ket{11} \}$.
At operator level,
\begin{align*}
U_P (X \otimes I ) U_P^\dagger = X \otimes Z, & &
U_P (Z \otimes I ) U_P^\dagger = Z \otimes I, \\
U_P (I \otimes X ) U_P^\dagger = Z \otimes X, & &
U_P (I \otimes Z ) U_P^\dagger = I \otimes Z.
\end{align*}
Note that since $U_P$ is diagonal, any two $U_P$ on different pairs of qubits commute.
Let $(g,i)$ denote the $i$-th qubit at $g \in \Lambda$.
The uniform application
\[
U^{(i)}_g = \prod_{h \in \Lambda} U_P( (h,i), (h+g,i) )
\]
of $U_P$ throughout the lattice $\Lambda$
such that each $U_P( (h,i), (h+g,i) )$ acts on the pair of qubits $(h,i)$ and $(h+g,i)$
is well-defined. From the operator level calculation of $U_P$, we see that $U^{(i)}_g$
induces
\[
P \ni (\ldots,x_i,\ldots, z_i, \ldots) \mapsto ( \ldots, x_i,\ldots, z_i + (g+ \bar g)x_i, \ldots ) \in P
\]
on the Pauli module,
which is represented as $E_{i+q,i}(g+\bar g)$. 
The composition
\[
 U^{(i)}_{g_1} U^{(i)}_{g_2} \cdots U^{(i)}_{g_n}
\]
of finitely many controlled-Phase gates $U^{(i)}_g$ with different $g$
is represented as $E_{i+q,i}(f)$ where $f = \bar f = \sum_{k=1}^{n} g_k + \bar g_k$.
The single qubit phase gate
\[
 \begin{pmatrix}
  1 & 0 \\
  0 & i
 \end{pmatrix}
\]
maps $X \leftrightarrow Y$ and $Z \mapsto Z$. On the Pauli module $P$, it is
\[
P \ni (\ldots,x_i,\ldots, z_i, \ldots)^T \mapsto ( \ldots, x_i,\ldots, z_i + x_i, \ldots )^T \in P .
\]
which is $E_{i+q,i}(1)$.
Note that any $f \in R$ such that $f = \bar f$ is always of form
$f = \sum g_k + \bar g_k$ or $f = 1 + \sum g_k + \bar g_k$
where $g_k$ are monomials.
Thus, the Phase gate and the controlled-Phase gate induce
transformations $E_{i+q,i}(f)$ where $f = \bar f$. 

The controlled-NOT gate is a two-qubit unitary operator whose matrix is
\[
U_N =
 \begin{pmatrix}
  1 & 0 & 0 & 0 \\
  0 & 1 & 0 & 0 \\
  0 & 0 & 0 & 1 \\
  0 & 0 & 1 & 0
 \end{pmatrix}
\]
with respect to basis $\{ \ket{00}, \ket{01}, \ket{10}, \ket{11} \}$.
That is, it flips the {\em target} qubit conditioned on the {\em control} qubit.
At operator level,
\begin{align*}
U_N (X \otimes I ) U_N^\dagger = X \otimes X, & &
U_N (Z \otimes I ) U_N^\dagger = Z \otimes I, \\
U_N (I \otimes X ) U_N^\dagger = I \otimes X, & &
U_N (I \otimes Z ) U_N^\dagger = Z \otimes Z.
\end{align*}
If $i < j$, the uniform application
\[
U^{(i,j)}_g = \bigotimes_{h \in \Lambda} U_P( (h,i), (h+g,j) )
\]
such that each $U_N( (h,i), (h+g,j) )$ acts on the pair of qubits $(h,i)$ and $(h+g,j)$
with one at $(h,i)$ being the control
induces
\begin{align*}
P \ni & (\ldots,x_i,\ldots,x_j,\ldots, z_i, \ldots, z_j, \ldots)^T \\
& \mapsto ( \ldots, x_i, \ldots, x_j + g x_i, \ldots, z_i + \bar g z_j, \ldots, z_j, \ldots )^T \in P .
\end{align*}
Thus, any finite composition of controlled-NOT gates with various $g$ is of form $E_{i,j}(a) E_{j+q,i+q}(\bar a)$.
It might be useful to note that the controlled-NOT and the Hadamard combined,
induces a symplectic transformation
\begin{itemize}
\item (controlled-NOT-Hadamard) $E_{i+q,j}(a) E_{j+q,i}(\bar a)$ where $a \in R$ and $ 1 \le i \ne j \le q$.
\end{itemize}

Remark that an arbitrary row operation on the upper $q$ components
can be compensated by a suitable row operation on the lower $q$ components
so as to be a symplectic transformation.

\subsection{Coarse-graining}

Not all unitary operators conform with the lattice translation.
In Example~\ref{eg:1d-ising} the lattice translation has period 1.
Then, for example, the Hadamard gate on every second qubit 
does not respect this translation structure;
it only respects a coarse version of the original translation.
We need to shrink the translation group to treat such unitary operators.

Let $\Lambda$ be the original translation group of the lattice with $q$ qubits per site,
and $\Lambda'$ be its subgroup of finite index: $|\Lambda/\Lambda'| = c < \infty$.
The total set of qubits $\Lambda \times \{1,\ldots,q\}$ is set-theoretically the same as
$\Lambda' \times \{ 1, \ldots, c \} \times \{1,\ldots,q\} = \Lambda' \times \{ 1, \ldots, cq\}$.
We take $\Lambda'$ as our new translation group under coarse-graining.
The Pauli group modulo phase factors remains the same as a $\FF_2$-vector space
for it depends only on the total index set of qubits.
We shall say that the system is {\em coarse-grained by $R'=\FF_2[\Lambda']$}
if we restrict the scalar ring $R$ to $R'$ for all modules pertaining to the system.

For example, suppose $\Lambda = \mathbb{Z}^2$,
so the original base ring is $R = \FF_2[x,y,\bar x,\bar y]$.
If we coarse-grain by $R' = \FF_2[x',y', \bar x', \bar y']$ where $x' = x^2, y' = y^2$,
we are taking the sites $1,x,y,xy$ of the original lattice as a single new site.

\subsection{Tensoring ancillas}
We have considered possible transformations
on the stabilizer modules of code Hamiltonians,
and kept the underlying index set of qubits invariant.
It is quite natural to allow tensoring ancilla qubits in trivial states.
In terms of the stabilizer module $S \subseteq P=R^{2q}$,
it amounts to embed $S$ into the larger module $R^{2q'}$ where $q' > q$.
Concretely,
let $\sigma = \begin{pmatrix} \sigma_X \\ \sigma_Z \end{pmatrix}$
be the generating matrix of $S$ as in Proposition~\ref{prop:pauli-module}.
By {\em tensoring ancilla}, we embed $S$ as
\[
\begin{pmatrix} \sigma_X \\ \sigma_Z \end{pmatrix}
 \to 
\begin{pmatrix}
 \sigma_X & 0 \\
 0        & 0 \\
 \sigma_Z & 0 \\
 0        & 1
\end{pmatrix} .
\]
This amounts to taking the direct sum of the original complex
\[
 G \xrightarrow{\sigma} P \xrightarrow{\epsilon} E
\]
and the trivial complex
\[
0
\to
 R
 \xrightarrow{ \begin{pmatrix} 0 \\ 1 \end{pmatrix} }
 R^2
 \xrightarrow{ \begin{pmatrix} 1 & 0 \end{pmatrix} }
 R
\to 0
\]
to form
\[
  G \oplus R \xrightarrow{} P \oplus R^2 \xrightarrow{} E \oplus R .
\]

\section{Topological order}
\label{sec:topological-order}
From now on we assume that $\Lambda$ is isomorphic to $\mathbb{Z}^D$ as an additive group.
$D$ shall be called the {\em spatial dimension} of $\Lambda$.

\begin{defn}
Let $\sigma: G \to P$ be the generating map for the stabilizer module of
a code Hamiltonian $H$. We say $H$ is {\em exact}
if $(\im \sigma )^\perp = \im \sigma$, or equivalently
\[
 G \xrightarrow{\sigma} P \xrightarrow{\epsilon=\sigma^\dagger \lambda_q} E
\]
is exact, i.e., $\ker \epsilon = \im \sigma$.
\end{defn}
\noindent It follows that the exactness condition is
a property of the equivalence class of code Hamiltonians in the sense of Definition~\ref{defn:equiv-H}.

By imposing periodic boundary conditions,
a translation-invariant Hamiltonian yields
a family of Hamiltonians $\{ H(L) \}$ defined on a finite system consisted of $L^D$ sites.
One might be concerned that some $H(L)$ would be frustrated.
We intentionally exclude such a situation.
The frustration might indeed occur, but it can easily be resolved
by choosing the signs of terms in the Hamiltonian.
In this way, one might loose the translation-invariance in a strict sense.
However, we retain the physical phase regardless of the sign choice
because different sign choices are related by a Pauli operator acting on
the whole system which is a product unitary operator.
Hence, the entanglement property of the ground state 
and the all properties of excitations do not change.

\begin{defn}
Let $H(L)$ be Hamiltonians on a finite system of linear size $L$
in $D$ dimensional physical space,
and $\Pi_L$ be the corresponding ground space projector.
$H(L)$ is called {\em topologically ordered} if
for any $O$ supported inside a hypercube of size $(L/2)^D$
one has
\begin{equation}
 \Pi_L O \Pi_L \propto \Pi_L .
\label{eq:tqo}
\end{equation}
\end{defn}
This means that no local operator is capable of distinguishing different ground states.
This condition is trivially satisfied if $H(L)$ has a unique ground state.
A technical condition that is used in the proof of the stability of topological order
against small perturbations is the following `local topological order'
condition~\cite{MichalakisPytel2011stability,
BravyiHastingsMichalakis2010stability,
BravyiHastings2011short}.
We say a {\em diamond region $A(r)$ of radius $r$ at $o \in \ZZ^D$} for the set
\[
 A(r)_o = \left\{ (i_1,\ldots,i_D) + o \in \ZZ^D ~\middle| \sum_\mu |i_\mu| \le r \right\} .
\]
\begin{defn}
Let $H(L)$ be code Hamiltonians on a finite system of linear size $L$
in $D$ dimensional physical space.
For any diamond region $A=A(r)$ of radius $r$,
let $\Pi_A$ be the projector onto the common eigenspace of the most negative eigenvalues of terms
in the Hamiltonian $H(L)$ that are supported in $A$.
For $b > 0$, denote by $A^b$ the distance $b$ neighborhood of $A$.
$H(L)$ is called {\em locally topologically ordered} if
there exists a constant $b > 0$ such that
for any operator $O$ supported on a diamond region $A$ of radius $r < L/2$
one has
\begin{equation}
 \Pi_{A^b} O \Pi_{A^b} \propto \Pi_{A^b} .
\label{eq:local-tqo}
\end{equation}
\end{defn}
Since any operator is a $\mathbb{C}$-linear combination of Pauli operators,
if Eq.~\eqref{eq:tqo},\eqref{eq:local-tqo} are satisfied for Pauli operators,
then the (local) topological order condition follows.
If a Pauli operator $O$ is anti-commuting with a term in a code Hamiltonian $H(L)$,
The left-hand side of Eq.~\eqref{eq:tqo},\eqref{eq:local-tqo} are identically zero.
In this case, there is nothing to be checked.
If $O$ acting on $A$ is commuting with every term in $H(L)$ supported inside $A^b$,
Eq.~\eqref{eq:tqo} demands that it act as identity on the ground space,
i.e., $O$ must be a product of terms in $H(L)$ up to $\pm i,\pm 1$.
Eq.~\eqref{eq:local-tqo} further demands that
$O$ must be a product of terms in $H(L)$ supported inside $A^b$ up to $\pm i,\pm 1$.

\begin{lem}
A code Hamiltonian $H$ is exact
if and only if $H(L)$ is locally topologically ordered for
all sufficiently large $L$.
\label{lem:local-tqo=exact}
\end{lem}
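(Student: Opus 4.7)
The plan is to prove the two implications separately, using the infinite lattice $\Lambda=\ZZ^D$ as a bridge between the periodic finite systems $H(L)$. For the direction local TQO $\Rightarrow$ exact, I would take an arbitrary $u\in\ker\epsilon$ in the infinite Pauli module $P$. Such a $u$ has finite support inside some diamond $A$ of radius $r$. Picking $L$ so that $r+b<L/2$ and embedding $u$ into $H(L)$ as a Pauli operator supported on $A$, the commutation $\epsilon(u)=0$ on the infinite system forces $u$ to commute with every generator of $H(L)$ supported in $A^b$. Local TQO then expresses $u$ as a product of generators of $H(L)$ lying in $A^b$, and because $A^b$ sits strictly inside the periodic cell, each such generator lifts unambiguously to a generator on the infinite lattice, giving $u\in\im\sigma$. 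The reverse inclusion $\im\sigma\subseteq\ker\epsilon$ is the commutativity requirement already built into the definition of a code Hamiltonian.

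For the converse, suppose $H$ is exact and let $w$ denote the maximal support radius of a single generator. Since $R=\FF_2[\Lambda]$ is Noetherian, $\ker\sigma\subseteq G$ is finitely generated, say by $k_1,\dots,k_m$ with each $k_i$ supported in a ball of radius $s$ around the origin. I fix $b$ to be a constant depending only on $w$ and $s$, to be pinned down by the localization lemma below. For $L$ large and a Pauli $O$ on $H(L)$ supported in a diamond $A$ of radius $r<L/2$ that commutes with all generators in $A^b$, I first observe that $O$ in fact commutes with every generator of $H(L)$: generators disjoint from $A$ commute trivially, while generators overlapping $A$ have their support inside $A^w\subseteq A^b$ and are covered by hypothesis. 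Lifting $O$ to the infinite lattice via the faithful embedding $A\hookrightarrow\ZZ^D$ (legitimate for $L$ large), I conclude $\epsilon(O)=0$ on the infinite system, so exactness yields some $v\in G$ of finite support with $\sigma(v)=O$.

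The crux is then a uniform \emph{localization lemma}: under exactness, $v$ can be chosen with support contained in $A^b$ for a fixed $b$ independent of $O$, $A$, and $L$. I would prove this by cleaning $v$ from the outside inward using translates of the $k_j$'s: if $v_g\neq 0$ at a site $g$ farthest from $A$, the local contribution $\sigma(v_g e_g)$ must be cancelled by contributions of $v$ at sites within distance $w$ of $g$, which forces the pattern of $v$ near $g$ to match some element of $\ker\sigma$ modulo components closer to $A$, so subtracting an appropriate translate of some $k_j$ removes $v_g$ at the cost of enlargement by at most $s$ toward $A$. Finite generation bounds the number of cleaning moves required per outer site, and the iterative procedure stabilizes at a constant $b$ determined only by $w$ and $s$; once $v$ is supported in $A^b$, the identity $O=\sigma(v)$ exhibits $O$ as a product of generators in $A^b$, establishing local TQO. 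The main obstacle is precisely this uniformity: naive exactness furnishes only an existential lift of unbounded support, so one must invoke Noetherianity of $R$ to cap the cleaning depth by a universal constant, and the delicate bookkeeping lies in showing that the outer-layer reductions do not accumulate drift in $r$ as the induction proceeds.
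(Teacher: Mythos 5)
The direction ``local TQO $\Rightarrow$ exact'' matches the paper's argument: embed a representative of $v\in\ker\epsilon$ in a large torus, apply local TQO to write it as a product of nearby generators, and observe that the product expression lifts unambiguously to the infinite lattice.

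The forward direction ``exact $\Rightarrow$ local TQO'' is where your proposal has a genuine gap. You want to show that any $O\in\im\sigma$ supported in a diamond $A$ admits a preimage $v$ with $\sigma(v)=O$ and $\operatorname{supp}(v)\subseteq A^b$ for a fixed $b$. Your mechanism is to start from an arbitrary preimage and ``clean'' it inward by subtracting translates of generators $k_1,\dots,k_m$ of $\ker\sigma$. This fails at the outset whenever $\ker\sigma=0$, which is not a degenerate case: for the 2D toric code $\sigma:R^2\to R^4$ is injective over the domain $R$, so there are no $k_j$'s at all and the preimage is unique. The localization claim is then genuinely nontrivial and your cleaning step has nothing to act on. Even when $\ker\sigma\neq 0$, the step where you assert that the condition ``$\sigma(v)=0$ outside $A$'' forces the outermost layer of $v$ to match (modulo inner terms) a \emph{global} syzygy is not justified: the constraint at sites near $g$ only says that $\sigma(v)$ vanishes \emph{locally} there, which is far weaker than saying the local pattern of $v$ near $g$ extends to an element of $\ker\sigma$. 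There is no a priori reason for a kernel element to cancel $v_g$ while only pushing support ``toward $A$''; absent that directionality, the bookkeeping you flag as delicate can indeed run away.

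The paper attacks the problem from the other side: it fixes a Gr\"obner basis of the submodule $\im\sigma\subseteq P$ over the Laurent polynomial ring, in the Pauer--Unterkircher framework where the term order is compatible with a cone decomposition of $\ZZ^D$ and hence with absolute degree. Running the division algorithm on $O$ eliminates outermost terms one at a time, and each reduction step multiplies a Gr\"obner basis element by a monomial whose absolute degree does not exceed that of the current remainder. Since the remainder reaches zero (because $O\in\im\sigma$), the accumulated coefficients $c_i$ with $O=\sum_i c_i\sigma_i$ have absolute degree at most $r$ plus a constant depending only on $\sigma$. This handles the injective case directly and gives the uniform bound without ever invoking the syzygy module. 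If you want to repair your approach, replace the syzygy-cleaning by a Gr\"obner-basis division argument on $\im\sigma$; the key extra input is a term order on Laurent polynomials that is monotone in the absolute degree.
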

In order to see this, it will be important to use {\em Laurent polynomials}
to express elements of the group algebra 
$R = \FF_2[\mathbb{Z}^D] \cong \FF_2[x_1,x_1^{-1},\ldots,x_D,x_D^{-1}]$.
See also \cite{GueneriOezbudak2008}.
For example, 
\[
 x y^2 z^2 + x y^{-1} \quad \Longleftrightarrow \quad 1(1,2,2)+1(1,-1,0) .
\]
The sum of the absolute values of exponents of a monomial will be referred to as {\em absolute degree}.
The absolute degree of a Laurent polynomial is defined to be the maximum absolute degree of its terms.
The degree measures the distance or size in the lattice.

The Laurent polynomial viewpoint enables us to apply Gr\"obner basis techniques.
The long division algorithm for polynomials in one variable yields an effective and efficient
test whether a given polynomial is divisible by another.
When two or more but finitely many variables are involved,
a more general question is how to test
whether a given polynomial is a member of an ideal.
For instance, $f=xy-1$ is a member of an ideal $J = (x-1,y-1)$
because $xy -1 = y(x-1) + (y-1)$.
But, $g=xy$ is not a member of $J$ because $g = y(x-1) + (y-1) + 1$ and the `remainder' 1 cannot be removed.
Here, the first term is obtained by looking at the initial term $xy$ of $f$
and comparing with the initial terms $x$ and $y$ of the generators of $J$.
While one tries to eliminate the initial term of $f$ and to eventually reach zero,
if one cannot reach zero as for $g$, then the membership question is answered negatively.

Systematically, an well-ordering on the monomials, i.e., a \emph{term order}, is defined
such that the order is preserved by multiplications.
And a set of generators $\{ g_i \}$ for the ideal is given with a special property 
that any element in the ideal has an initial term (leading term)
divisible by an initial term of some $g_i$.
A Gr\"obner basis is precisely such a generating set.
This notion generalizes to free modules over polynomial ring by refining the term order with the basis of the modules.
An example is as follows. Let
\[
 \sigma_1 = 
 \begin{pmatrix}
  \mathbf{x^2} - y \\
  x^2 + 1
 \end{pmatrix} \quad
 \sigma_2 = 
 \begin{pmatrix}
  1 \\
  \mathbf{y}
 \end{pmatrix}
\]
generate a submodule $M$ of $S^2$ where $S = \FF[x,y]$ is a polynomial ring.
They form a Gr\"obner basis, and the initial terms are marked as bold.
A member of $S^2$
\[
 \begin{pmatrix}
 x^2 + x^2 y - y^2 \\
 y+2 x^2 y
 \end{pmatrix}
\]
is in $M$ because the following ``division'' results in zero.
\[
  \begin{pmatrix}
 x^2+\mathbf{x^2 y}-y^2 \\
 y+2 x^2 y
 \end{pmatrix}
 \xrightarrow{-y \sigma_1}
 \begin{pmatrix}
 x^2 \\
 \mathbf{x^2 y}
 \end{pmatrix}
 \xrightarrow{-x^2 \sigma_2} 0
\]
A comprehensive material can be found in \cite[Chapter~15]{Eisenbud}.

The situation for Laurent polynomial ring is less discussed,
but is not too different.
A direct treatment is due to Pauer and Unterkircher~\cite{PauerUnterkircher1999}.
One introduces a well-order on monomials,
that is preserved by multiplications with respect to a so-called \emph{cone decomposition}.
An ideal $J$ over a Laurent polynomial ring can be thought of
as a collection of configurations of coefficient scalars written on the sites of the integral lattice $\ZZ^D$.
If we take a cone, say,
\[
 C = \{ (i_1,i_2,i_3) \in \ZZ^3 | i_1 \le 0, i_2 \ge 0, i_3 \ge 0 \} ,
\]
then $J_C = J \cap \FF[C]$ looks very similar to an ideal $I$ over a polynomial ring $\FF[x,y,z]$.
Concretely, $I$ can be obtained by applying $x^{-1} \mapsto x, y \mapsto y, z \mapsto z$ to $J_C$.
The initial terms of $J_C$ should be treated similarly as those in $I$.
This is where the cone decomposition plays a role.
The lattice $\ZZ^D$ decomposes into $2^D$ cones,
and the initial terms of $J$ is considered in each of the cones.
Correspondingly, a Gr\"obner basis is defined 
to generate the initial terms of a given module in each of the cones.
An intuitive picture for the division algorithm is 
to consider the support of a Laurent polynomial as a finite subset of $\ZZ^D$ around the origin (the least element of $\ZZ^D$),
and to eliminate outmost points so as to finally reach the origin.
If $m$ is a column matrix of Laurent polynomials,
each step in the division algorithm by a Gr\"obener basis $\{g\}$
replaces $m$ with $m' = m - c g$, where $c$ is a monomial,
such that the initial term of $m'$ is strictly smaller than that of $m$.
Note that the absolute degree of $c$ does not exceed that of $m$.%
\footnote{Strictly speaking, one can introduce a term order such that this is true.}

\begin{proof}[Proof of \ref{lem:local-tqo=exact}]
We have to show that 
if $v \in \ker \epsilon = \im \sigma$ is supported in the diamond of radius $r$ centered at the origin,
then $v$ can be expressed as a linear combination
\[
 v = \sum_i c_i \sigma_i
\]
of the columns $\sigma_i$ of $\sigma$ 
such that the coefficients $c_i \in R$ have absolute degree not exceeding $w+r$.
for some fixed $w$.
A Gr\"obner basis~\cite{PauerUnterkircher1999} is computed solely from the matrix $\sigma$,
and the division algorithm yields desired $c_i$.%
\footnote{{This part can be adapted to an error correcting procedure or a decoder.
The bottleneck of the universal decoder presented in~\cite{BravyiHaah2011Memory}
is the routine that tests 
whether a given cluster of excitations can be created by a Pauli operator 
supported in the box that envelops the cluster.
The Gr{\"o}bner basis for $\im \epsilon$ in the degree monomial order
provides a fast algorithm for it: 
The division algorithm yields zero remainder with respect to the Gr\"obner basis,
if and only if
the given cluster is in $\im \epsilon$.
Note also that this argument proves that 
the topological order condition as defined in~\cite{BravyiHaah2011Memory} is always satisfied
if the code Hamiltonian is exact.
}}

Conversely,
suppose $v \in \ker \epsilon$. We have to show $v \in \im \sigma$.
Choose so large $L$ that the Pauli operator $O$ representing $v$ is contained
in a pyramid region far from the boundary.
The local topological order condition implies
that $O$ is a product of terms near the pyramid region.
Since this product expression is independent of the boundary,
we see $v \in \im \sigma$.
\end{proof}

The Buchsbaum-Eisenbud theorem~\cite{BuchsbaumEisenbud1973Exact}
below characterizes an exact sequence
from the properties of connecting maps.
(See also \cite[Theorem~20.9, Proposition~18.2]{Eisenbud},\cite[Chapter~6 Theorem~15]{Northcott}.)
A few notions should be recalled.
Let $\mathbf M$ be a matrix, not necessarily square, over a ring.
A minor is the determinant of a square submatrix of $\mathbf M$.
{\em $k$-th determinantal ideal} $I_k(\mathbf M)$ is the ideal generated by all $k \times k$ minors of $\mathbf M$.
It is not hard to see that the determinantal ideal is invariant under any invertible
matrix multiplication on either side. 
The {\em rank} of $\mathbf M$ is the largest $k$ such that $k$-th determinantal ideal is nonzero.
Thus, the rank of a matrix over an arbitrary ring is defined,
although the dimension of the image in general is not defined or is infinite.
The $0$-th determinantal ideal is taken to be the unit ideal by convention.
For a map $\phi$ between free modules,
we write $I(\phi)$ to denote the $k$-th determinantal
ideal of the matrix of $\phi$ where $k$ is the rank of that matrix.
Fitting Lemma~\cite[Corollary-Definition~20.4]{Eisenbud}
states that determinantal ideals only depend on $\coker \phi$.

The {\em (Krull) dimension} of a ring is the supremum of lengths of chains of prime ideals.
Here, the length of a chain of prime ideals
\[
 \pp_0 \subsetneq \pp_1 \subsetneq \cdots \subsetneq \pp_n
\]
is defined to be $n$.
Most importantly, the dimension of $\FF[x_1,\ldots,x_n]$ is $n$ where $\FF$ is a field, as
\[
 (0) \subset (x_1) \subset (x_1,x_2) \subset \cdots \subset (x_1,\ldots,x_n) .
\]
Dimensions are in general very subtle,
but intuitively, it counts the number of independent `variables.'
Geometrically, a ring is a function space of a geometric space,
and the independent variables define a coordinate system on it.
So the Krull dimension correctly captures the intuitive dimension.
For instance, $y-x^2=0$ defines a parabola in a plane,
and the functions that vanish on the parabola form an ideal $(y-x^2) \subset \FF[x,y]$.
Thus, the function space is identified with $\FF[x,y]/(y-x^2) \cong \FF[x]$,
whose Krull dimension is, as expected, 1.

Facts we need are quite simple:
\begin{itemize}
 \item In a zero-dimensional ring, every prime ideal is maximal.
 \item $\dim R = \dim \FF_2[x_1^{\pm 1},\ldots,x_D^{\pm 1}] = D$
 \item When $I$ is an ideal of $R$, $\dim R/I + \codim I = D$.%
\footnote{
The {\em codimension} or {\em height} of a prime ideal $\pp$ is 
the supremum of the lengths of chains of prime ideals contained in $\pp$.
That is, the codimension of $\pp$ is the Krull dimension of the local ring $R_\pp$.
The codimension of an arbitrary ideal $I$ is the minimum of codimensions of primes that contain $I$.
If $S$ is an affine domain, i.e., 
a homomorphic image of a polynomial ring over a field with finitely many variables
such that $S$ has no zero-divisors,
it holds that $\codim I + \dim R/I = \dim S$~\cite[Chapter~13]{Eisenbud}.
}
\end{itemize}

We shall be dealing with three different kinds of `dimensions':
The first one is the spatial dimension $D$,
which has an obvious physical meaning.
The second one is the Krull dimension of a ring, just introduced.
The Krull dimension is upper bounded by the spatial dimension in any case.
The last one is the dimension of some module as a vector space.
Recall that all of our base ring contains a field -- $\FF_2$ for qubits.
The vector space dimension arises naturally
when we actually count the number of orthogonal ground states.
The dimension as a vector space will always be denoted with a subscript like $\dim_{\FF_2}$.

\begin{prop}
\cite{BuchsbaumEisenbud1973Exact}%
\footnote{The original result is stronger than what is presented here.
It is stated with the \emph{depth}s of the determinantal ideals.}
If a complex of free modules over a ring
\[
 0 \to F_n \xrightarrow{\phi_n} F_{n-1} \to \cdots \to F_1 \xrightarrow{\phi_1} F_0
\]
is exact, then
\begin{itemize}
\item $\rank F_k = \rank \phi_k + \rank \phi_{k+1}$ for $k=1,\ldots,n-1$
\item $\rank F_n = \rank \phi_n$.
\item $I(\phi_k)=(1)$ or else $\codim I(\phi_{k}) \ge k$ for $k=1,\ldots,n$.
\end{itemize}
\label{prop:exact-sequence}
\end{prop}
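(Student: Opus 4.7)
My plan is to prove (1), (2), (3) by induction on the length $n$ of the complex, with McCoy's theorem as the fundamental input. For the base case $n=1$, exactness of $0 \to F_1 \xrightarrow{\phi_1} F_0$ means $\phi_1$ is injective; McCoy's theorem---an $R$-linear map between finite free modules is injective iff the ideal of maximal minors of its matrix contains a non-zerodivisor---applied to $\phi_1$ gives both $\rank \phi_1 = \rank F_1$ and the existence of a non-zerodivisor in $I(\phi_1)$, hence $\codim I(\phi_1) \geq 1$.

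For the rank identities (1) and (2) in general, I would localize at an arbitrary minimal prime $\mathfrak{p}$ of $R$. Since localization is exact, the complex remains exact over the zero-dimensional local ring $R_\mathfrak{p}$, where every finitely generated module has finite length and length is additive on short exact sequences. Breaking the complex into short exact pieces $0 \to \im \phi_{k+1} \to F_k \to \im \phi_k \to 0$ and counting lengths yields additivity relations; dividing through by $\mathrm{length}(R_\mathfrak{p})$ gives rank identities over $R_\mathfrak{p}$, and varying $\mathfrak{p}$ over minimal primes so that a given maximal minor of $\phi_k$ survives recovers the global ranks defined via determinantal ideals. At the top, $\ker \phi_n = 0$ together with the convention $\rank \phi_{n+1}=0$ specializes to (2).

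For (3) I would induct on $k$; the case $k=1$ is the McCoy argument above. For the inductive step, suppose $\mathfrak{p} \supseteq I(\phi_k)$ has $\codim \mathfrak{p} = h$; the task is to show $h \geq k$. Localize at $\mathfrak{p}$: by the inductive codim hypothesis combined with the rank identities already established, every $\phi_j$ with $j<k$ localized at $\mathfrak{p}$ has a unit-valued maximal minor, so the tail $F_{k-1} \to \cdots \to F_0$ over $R_\mathfrak{p}$ can be trimmed by successively splitting off trivial free direct summands. What remains is a short exact complex of free modules over the local ring $R_\mathfrak{p}$ of dimension $h$ in which $(\phi_k)_\mathfrak{p}$ has all its maximal minors inside $\mathfrak{p} R_\mathfrak{p}$. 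Applying McCoy once more to this shortened exact complex forces a non-zerodivisor in $(\mathfrak{p} R_\mathfrak{p})$, and iterating the argument (or directly chasing depths of a regular sequence inside $\mathfrak{p}$) forces $h \geq k$.

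The main obstacle is closing the contradiction in (3) cleanly: the splitting-off step and the iteration of McCoy must be done in a way that does not circularly invoke the statement for $\phi_k$ itself. A more transparent route, following Buchsbaum--Eisenbud's original treatment, is to prove the stronger depth statement $\operatorname{grade} I(\phi_k) \geq k$, since grade interacts naturally with Koszul complexes and $\mathrm{Ext}$ computations and is preserved under the splitting maneuver above. In our setting $R = \FF_2[x_1^{\pm 1},\ldots,x_D^{\pm 1}]$ is a regular, hence Cohen--Macaulay, ring, so $\operatorname{grade}$ and $\codim$ coincide on all ideals and no information is lost in passing to the weaker codim form stated here.
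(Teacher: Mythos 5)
The paper itself offers no proof of Proposition~\ref{prop:exact-sequence}: it is cited directly from Buchsbaum--Eisenbud, and the footnote already records the only substantive remark, namely that the original theorem bounds depth rather than codimension. So there is no internal argument for me to compare your proposal against. Your closing observation---that $R$ is regular, hence Cohen--Macaulay, so that depth and codimension of ideals agree and the weaker codim form stated here costs nothing in the paper's setting---is correct and is exactly what the footnote alludes to.

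There are, however, genuine gaps in the sketch itself. In the rank argument, localizing at a minimal prime $\pp$ does make $R_\pp$ Artinian and length is additive, but additivity alone does not show that the length of $\im\phi_k$ over $R_\pp$ is a multiple of the length of $R_\pp$, nor does it connect that length to the determinantal definition of $\rank$. The missing ingredient is the lemma that over a local ring of depth zero an injective map of finite free modules splits: if it did not, then after a change of basis on the source some column of its matrix would lie in the maximal ideal, and multiplying the corresponding basis vector by a nonzero socle element would contradict injectivity. Only once the whole localized complex splits are the images free direct summands, are their lengths genuine multiples of the length of $R_\pp$, and does the split normal form exhibit a unit maximal minor identifying the local split rank with the global determinantal rank; your telescoping count is justified only after this step. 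For part (3), you acknowledge yourself that the induction is not closed, and the honest route is the one you name: the ``splitting off trivial free summands'' maneuver is this same lemma again, and pushing the argument to $\codim I(\phi_k)\ge k$ directly by iterating McCoy is awkward because McCoy supplies one nonzerodivisor at a time and the iteration wants the regular-sequence and Koszul bookkeeping that lives naturally with depth. Proving the depth bound first and then passing to codimension, as Buchsbaum--Eisenbud do, is the clean finish.
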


\begin{rem}
For an exact code Hamiltonian,
we have a exact sequence $G \xrightarrow{\sigma} P \xrightarrow{\epsilon=\sigma^\dagger \lambda} E$.
As we will see in Lemma~\ref{lem:coker-epsilon-resolution-length-D},
$\coker \sigma$ has a finite free resolution,
and we may apply the Proposition~\ref{prop:exact-sequence}.
Since $\overline{ I_k(\sigma) } = I_k(\epsilon)$ for any $k \ge 0$, we have
\[
 2q = \rank P = \rank \sigma + \rank \epsilon = 2~ \rank \sigma.
\]
The size $2q \times t$ of the matrix $\sigma$ satisfies $t \ge q$.
If $I_q(\sigma) \ne R$, then $\codim I_q(\sigma) \ge 2$.
\label{rem:rank-sigma-m}
\end{rem}

\section{Ground state degeneracy}
\label{sec:degeneracy}

Let $H(L)$ be the Hamiltonians on finite systems
obtained by imposing periodic boundary conditions 
as in Section~\ref{sec:topological-order}.
A symmetry operator of $H(L)$ is 
a $\mathbb{C}$-linear combination of Pauli operator that commutes with $H(L)$.
In order for a Pauli symmetry operator
to have a nontrivial action on the ground space,
it must not be a product of terms in $H(L)$.
In addition, since $H(L)$ is a sum of Pauli operators,
a symmetry Pauli operator must commute with each term in $H(L)$.
Hence, a symmetry Pauli operator $O$ with nontrivial action on the ground space
must have image $v$ in the Pauli module such that
\[
 v(O) \in \ker \epsilon_L \setminus \im \sigma_L
\]
where
\[
 G / \bb_L G \xrightarrow{ \sigma_L } P / \bb_L P \xrightarrow{ \epsilon_L } E / \bb_L E 
\]
and
\[
 \bb_L = (x^L_1 -1,\ldots, x^L_D -1) \subseteq R, 
\]
which effectively imposes the periodic boundary conditions.
Since each term in $H(L)$ acts as an identity on the ground space,
if $O'$ is a term in $H(L)$, the symmetry operator $O$ and the product $OO'$
has the same action on the ground space. 
$OO'$ is expressed in the Pauli module as $v(O) + v'(O')$ for some $v' \in \im \sigma_L$.
Therefore, the set of Pauli operators of distinct actions on the ground space is
in one-to-one correspondence with the factor module
\[
 K(L) = \ker \epsilon_L ~/~ \im \sigma_L .
\] 
The vector space dimension $\dim_{\FF_2} K(L)$ is precisely
the number of independent Pauli operators that have nontrivial action on the ground space.
Since $\ker \epsilon_L = (\im \sigma_L)^\perp$ by definition of $\epsilon$,
and $\im \sigma_L$ as an $\FF_2$-vector space is a null space of the symplectic vector space $P/\bb_L P$,
it follows that $\ker \epsilon_L = \im \sigma_L \oplus W$ for some hyperbolic subspace $W$.
The quotient space $K(L) \cong W$ is thus hyperbolic and has even vector space dimension $2k$.
Choosing a symplectic basis for $K(L)$, 
it is clear that $K(L)$ represents the tensor product of $k$ qubit-algebras.
Therefore, the ground space degeneracy is exactly $2^k$~\cite{Gottesman1996Saturating,CalderbankRainsShorEtAl1997Quantum}.
In the theory of quantum error correcting codes,
$k$ is called the number of logical qubits,
and the elements of $K(L)$ are called the logical operators.
In this section, $k$ will always denote $\frac{1}{2} \dim_{\FF_2} K$.

\begin{defn}
The {\em associated ideal} for a code Hamiltonian
is the $q$-th determinantal ideal $I_q(\sigma) \subseteq R$
of the generating map $\sigma$.
Here, $q$ is the number of qubits per site.
The {\em characteristic dimension} is the Krull dimension $\dim R / I_q(\sigma)$.
\end{defn}

The associated ideals appears in Buchsbaum-Eisenbud theorem~(Proposition~\ref{prop:exact-sequence}),
which says that the homology $K(L)$ is intimately related to the associated ideal.
Imposing boundary conditions such as $x^L=1$
amounts to treating $x$ not as variables any more,
but as a `solution' of the equation $x^L-1=0$.
In order for $K(L)$ to be nonzero,
the `solution' $x$ should make the associated ideal to vanish.
Hence, by investigating the solutions of $I_q(\sigma)$
one can learn about the relation between the degeneracy and the boundary conditions.
Roughly, a large number of solutions of $I_q(\sigma)$ 
compatible with the boundary conditions
means a large degeneracy.
As $d = \dim R/I_q(\sigma)$ is the geometric dimension of the algebraic set defined by $I_q(\sigma)$,
a larger $d$ means a larger number of solutions.
Hence, the characteristic dimension $d$
controls the growth of the degeneracy as a function of the system size.

For example, consider a chain complex over $R = \FF[x^{\pm 1},y^{\pm 1}]$.
\[
0 \to
 R^1 
\xrightarrow{ \partial_2 = \begin{pmatrix} x-1 \\ y-1 \end{pmatrix} } 
 R^2
\xrightarrow{ \partial_1 = \begin{pmatrix} y-1 & -x +1 \end{pmatrix} }
 R^1
\]
It is exact at $R^2$.
The smallest nonzero determinantal ideal $I$ for either $\partial_1$ or $\partial_2$ is $I=(x-1,y-1)$.
If we impose `boundary conditions' such that $x=1$ and $y=1$,
then $I$ becomes zero, and according to Buchsbaum-Eisenbud theorem,
the homology $K$ at $R^2$ should be nontrivial.
Since the solution of $I$ consists of a single point $(1,1)$ on a 2-plane,
it is conceivable that `boundary conditions' of form $\bb_L$
would always give $K(L)$ of a constant $\FF$-dimension,
which is true in this case.
If we insist that the complex is over $R' = \FF[x^{\pm 1},y^{\pm 1},z^{\pm 1}]$,
then the zero set of $I$ is a line $(1,1,z)$ in 3-space;
there are many `solutions.'
In this case, $K^{R'}(L)$ has $\FF$-dimension $2L$.

An obvious example where the homology $K$ is always zero regardless of the boundary conditions
is this:
\[
0 \to
 R^1 
\xrightarrow{ \begin{pmatrix} 1 \\ 0 \end{pmatrix} } 
 R^2
\xrightarrow{ \begin{pmatrix} 0 & 1 \end{pmatrix} }
 R^1
\]
Here, the determinantal ideal is $(1)=R$, and thus has no solution.

The intuition from these examples are made rigorous below.

\subsection{Condition for degenerate Hamiltonians}

A routine yet very important tool is \emph{localization}.
The origin of all difficulties in dealing with general rings
is that nonzero elements do not always have multiplicative inverse;
one cannot easily solve linear equations.
The localization is a powerful technique to get around this problem.
As we build rational numbers from integers by \emph{declaring} that
nonzero numbers have multiplicative inverse,
the localization enlarges a given ring
and \emph{formally allows} certain elements to be invertible.
It is necessary and sometimes desirable not to invert all nonzero elements,
in order for the localization to be useful.
For a consistent definition, we need a multiplicatively closed subset $S$
containing 1, but not containing 0,
of a ring $R$ and declare that the elements of $S$ is invertible.
The new ring is written as $S^{-1}R$,
in which a usual formula $\frac{r_1}{s_1} + \frac{r_2}{s_2} = \frac{r_1 s_2 + r_2 s_1}{s_1 s_2}$ holds.
The original ring naturally maps into $S^{-1}R$ as $\phi : r \mapsto \frac{r}{1}$.
The localization means that one views all data as defined over $S^{-1}R$ via the natural map $\phi$.%
\footnote{It is a functor from the category of $R$-modules to that of $S^{-1}R$-modules.}

A localized ring, by definition, has more invertible elements, and hence has less nontrivial ideals.
In fact, our Laurent polynomial ring is a localized ring of the polynomial ring by inverting monomials,
e.g., $\{ x^i y^j | i,j \ge 0 \}$.
Nontrivial ideals such as $(x)$ or $(x,y)$ in the polynomial ring
become the unit ideal $(1)$ in the Laurent polynomial ring.
Further localizations in this paper are with respect to prime ideals.
In this case, we say the ring is \emph{localized at a prime ideal $\pp$}.
A prime ideal $\pp$ has a defining property that
$a b \notin \pp$ whenever $a \notin \pp$ and $b \notin \pp$.
Thus, the set-theoretic complement of $\pp$ is a multiplicatively closed set containing 1.
In $(R\setminus \pp)^{-1}R$, denoted by $R_\pp$, any element outside $\pp$ is invertible,
and therefore $\pp$ becomes a unique maximal ideal of $R_\pp$.
Moreover, the localization sometimes simplifies the generators of an ideal.
For instance, if $R=\FF[x,x^{-1}]$ and $\pp = (x-1)$,
the ideal $((x -1)(x^5-x+1)) \subseteq R$ localizes to $(x-1)_\pp \subseteq R_\pp$
since $x^5-x+1$ is an invertible element of $R_\pp$.

An important fact about the localization is that a module is zero
if and only if its localization at every prime ideal is zero.
Further, the localization preserves exact sequences.
So we can analyze a complex by localizing at various prime ideals.
For a thorough treatment about localizations,
see Chapter 3 of \cite{AtiyahMacDonald}.
The term `localization' is from geometric considerations
where a ring is viewed as a function space on a geometric space.

\begin{lem}
Let $I$ be the associated ideal of an exact code Hamiltonian,
and $\mm$ be a prime ideal of $R$.
Then, $I \not\subseteq \mm$ implies that the localized homology
\[
K(L)_\mm = \ker (\epsilon_L)_\mm ~/~ \im (\sigma_L)_\mm
\]
is zero for all $L \ge 1$.
\label{lem:associated-maximal-localized-homology}
\end{lem}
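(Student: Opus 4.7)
The plan is to use the hypothesis $I \not\subseteq \mm$ to put $\sigma_\mm$ into a standard block form over the local ring $R_\mm$, so that the localized complex $G_\mm \to P_\mm \to E_\mm$ splits at $P_\mm$, and then to verify this splitting survives the reduction modulo $\bb_L$.

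Since $R_\mm$ is local and $I_q(\sigma) R_\mm = R_\mm$, with $I_q(\sigma)$ generated by the $q \times q$ minors of $\sigma$, at least one such minor is itself a unit in $R_\mm$ (otherwise every generator, and so the ideal they generate, would lie in the maximal ideal). After row and column permutations I assume this minor occupies the top-left $q \times q$ block of $\sigma_\mm$, and elementary row and column operations over $R_\mm$ reduce $\sigma_\mm$ to $\begin{pmatrix} I_q & 0 \\ 0 & C \end{pmatrix}$. By the Buchsbaum--Eisenbud rank formula applied to the exact complex $G \to P \to E$ (Remark~\ref{rem:rank-sigma-m}), $\rank \sigma = q$, and since rank is invariant under invertible row/column operations, $\rank C = 0$, forcing $C = 0$. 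Hence, in suitable bases, $\sigma_\mm$ is represented by $\begin{pmatrix} I_q & 0 \\ 0 & 0 \end{pmatrix}$; in particular $\im \sigma_\mm$ is a free rank-$q$ direct summand of $P_\mm$, and $\coker \sigma_\mm$ is free of rank $q$.

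Writing $P_\mm = \im \sigma_\mm \oplus N$ with $N \cong \coker \sigma_\mm$, and using $\ker \epsilon_\mm = \im \sigma_\mm$ from exactness, $\epsilon_\mm$ factors as the split surjection $P_\mm \twoheadrightarrow N$ followed by an injection $\bar\epsilon : N \hookrightarrow E_\mm$ of free $R_\mm$-modules. The analogous reduction on the $\epsilon$-side --- available because $\rank \epsilon = q$ and $I_q(\epsilon) = \overline{I_q(\sigma)}$ by Remark~\ref{rem:rank-sigma-m} --- shows $\bar\epsilon$ is a split injection, so $\im \epsilon_\mm$ is also a free direct summand of $E_\mm$. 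Every short exact sequence extracted above is then split; tensoring with $M = R_\mm/(\bb_L)_\mm$ preserves splitting, so $\bar\epsilon \otimes M$ remains injective, $\im(\sigma_\mm \otimes M)$ remains the first summand $\im \sigma_\mm \otimes M$ of $P_\mm \otimes M$, and a small diagram chase gives $\ker(\epsilon_\mm \otimes M) = \im(\sigma_\mm \otimes M)$, yielding $K(L)_\mm = 0$ for every $L \ge 1$. The main obstacle I anticipate is precisely this parallel $\epsilon$-side reduction: because $I_q(\sigma)$ and $I_q(\epsilon) = \overline{I_q(\sigma)}$ are related only via the antipode of $R$, converting the hypothesis on $\sigma$ into the analogous splitting on the $\epsilon$-side requires either restricting attention to antipode-compatible primes or extracting a left inverse of $\bar\epsilon$ directly from the relation $\epsilon = \sigma^\dagger \lambda_q$ combined with the standard form of $\sigma_\mm$.
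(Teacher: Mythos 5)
Your anticipated obstacle is genuine, and in fact the paper's own proof contains the same gap. The paper writes $(I_q(\epsilon))_\mm = \overline{(I_q(\sigma))_\mm} = (1)$, but the identity $I_q(\epsilon) = \overline{I_q(\sigma)}$ holds in $R$, and localizing it at $\mm$ gives $(I_q(\epsilon))_\mm = (\overline{I_q(\sigma)})_\mm$, which is the unit ideal precisely when $\overline{I_q(\sigma)} \not\subseteq \mm$, i.e.\ when $I_q(\sigma) \not\subseteq \bar\mm$. That is not the stated hypothesis unless $\bar\mm = \mm$. Neither of your proposed repairs closes this: restricting to antipode-fixed primes does not cover the applications (Lemma~\ref{lem:K-L-number-points} sums over all maximal ideals $(x_1-a_1,\ldots,x_D-a_D) \supseteq \bb_L$, which are antipode-fixed only when $a_i = a_i^{-1}$ for all $i$), and the identity $\epsilon = \sigma^\dagger\lambda_q$ cannot be localized into a relation between $\epsilon_\mm$ and $\sigma_\mm$ because the antipode is simply not a map of $R_\mm$ when $\mm \neq \bar\mm$. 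So the splitting you produce on the $\sigma$-side at $\mm$ has no way to reach the $\epsilon$-side at $\mm$, and your ``$\bar\epsilon$ is split'' claim is unsupported.

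The missing ingredient is the sesquilinearity of the commutation pairing. One may assume $\bb_L \subseteq \mm$, so $\mm$ is maximal, and since $\bar\bb_L=\bb_L$ also $\bb_L\subseteq\bar\mm$. The nondegenerate form $\langle a,b\rangle = \tr(a^\dagger\lambda_q b)$ on $K(L)$ obeys $\langle ra,b\rangle = \langle a,\bar r b\rangle$ for $r\in R/\bb_L$; hence in the decomposition $K(L)=\bigoplus_\pp K(L)_\pp$ of Proposition~\ref{prop:Artin-ring} the idempotent $e_\pp$ is adjoint to $e_{\bar\pp}$, so $K(L)_\mm$ pairs to zero against $K(L)_\pp$ for $\pp\neq\bar\mm$ and pairs perfectly with $K(L)_{\bar\mm}$. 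In particular $\dim_{\FF_2}K(L)_\mm=\dim_{\FF_2}K(L)_{\bar\mm}$, so it suffices to show $K(L)_{\bar\mm}=0$, and there the $\epsilon$-side reduction goes through cleanly: $I_q(\epsilon)=\overline{I_q(\sigma)}\not\subseteq\bar\mm$, so row and column operations over $R_{\bar\mm}$ bring $\epsilon_{\bar\mm}$ to the block form with $\id_q$ in the top-left corner and zero elsewhere. Exactness of the localized sequence then forces $\im\sigma_{\bar\mm}=\ker\epsilon_{\bar\mm}$ to be the complementary free rank-$q$ direct summand of $P_{\bar\mm}$ --- no hypothesis on $I_q(\sigma)_{\bar\mm}$ is required for this, which is why the paper's invocation of ``$I_q(\sigma)=(1)$'' in the last step is superfluous. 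Both the summand $\im\sigma_{\bar\mm}$ and the normal form of $\epsilon_{\bar\mm}$ persist after tensoring with $R_{\bar\mm}/(\bb_L)_{\bar\mm}$, giving $\ker(\epsilon_L)_{\bar\mm}=\im(\sigma_L)_{\bar\mm}$ and hence $K(L)_{\bar\mm}=0$, as needed.
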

It is a simple variant of a well-known fact that
a module over a local ring is free if its first non-vanishing Fitting ideal is the unit ideal~\cite[Chapter~1 Theorem~12]{Northcott}.
\begin{proof}
Recall that the localization and the factoring commute.
By assumption, $(I_q(\epsilon))_\mm = \overline{ (I_q(\sigma))_\mm } = (1) = R_\mm =: S$.
Recall that the local ring $S$ has the unique maximal ideal $\mm$,
and any element outside the maximal ideal is a unit.
If every entry of $\epsilon$ is in $\mm$, then $I_q(\epsilon) \subseteq \mm \ne S$.
Therefore, there is a unit entry, and by column and row operations,
$\epsilon$ is brought to
\[
 \epsilon \cong 
\begin{pmatrix}
 1 & 0 \\
 0 & \epsilon'
\end{pmatrix}
\]
where $\epsilon'$ is a submatrix.
It is clear that $I_{q-1}(\epsilon') \subseteq I_q(\epsilon)$
since any $q-1 \times q-1$ submatrix of $\epsilon'$ can be thought of
as a $q \times q$ submatrix of $\epsilon$ where the first column and first row
have the unique nonzero entry 1 at $(1,1)$.
It is also clear that $I_{q-1}(\epsilon') \supseteq I_q(\epsilon)$
since any $q \times q$ submatrix of $\epsilon$ contains
either zero row or column, or the $(1,1)$ entry $1$ of $\epsilon$.
Hence, $I_{q-1}(\epsilon') = (1)$, and
we can keep extracting unit elements
into the diagonal by row and column operations~\cite[Chapter~1 Theorem~12]{Northcott}.
After $q$ steps,
$t \times 2q$ matrix $\epsilon$ becomes precisely
\[
\epsilon \cong 
\begin{pmatrix}
 \id_q & 0 \\
 0 & 0
\end{pmatrix}
\]
where $\id_q$ is the $q \times q$ identity matrix.
Since localization preserves the exact sequence $G \to P \to E$,
$\sigma$ maps to the lower $q$ components of $P$ with respect to the basis
where $\epsilon$ is in the above form.
Since $I_q(\sigma) = (1)$, we must have (after basis change)
\[
\sigma \cong 
\begin{pmatrix}
 0 & 0 \\
 \id_q & 0
\end{pmatrix}.
\]
Therefore, even after factoring by the proper ideal $\bb_L$, 
the homology $K(L) = \ker \epsilon_L ~/~ \im \sigma_L$ is still zero.
\end{proof}
\begin{cor}
The associated ideal of an exact code Hamiltonian
is the unit ideal, i.e., $I_q(\sigma) = R$,
if and only if
\[
K(L) = \ker \epsilon_L ~/~ \im \sigma_L = 0
\]
for all $L \ge 1$.
\label{cor:unit-characteristic-ideal-means-nondegeneracy}
\end{cor}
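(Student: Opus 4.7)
The plan is to prove the two implications separately. The forward direction drops straight out of Lemma~\ref{lem:associated-maximal-localized-homology}, while the converse requires pinpointing a specific $L$ at which a rank count exhibits a nontrivial class in $K(L)$.

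For the forward direction I assume $I_q(\sigma) = R$ and show $K(L) = 0$ for every $L \ge 1$. Since the associated ideal is the unit ideal, $I_q(\sigma) \not\subseteq \mm$ for every prime $\mm \subset R$; Lemma~\ref{lem:associated-maximal-localized-homology} then gives $K(L)_\mm = 0$ at every prime, and because a finitely generated $R$-module vanishes iff all its prime localizations do, $K(L) = 0$.

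For the converse I argue the contrapositive: assume $I_q(\sigma) \ne R$ and pick a maximal ideal $\mm \supseteq I_q(\sigma)$. Since $R$ is a finitely generated $\FF_2$-algebra, the residue field $\FF := R/\mm$ is finite of some size $2^n$, so each $\bar x_i \in \FF^\times$ has finite order dividing $2^n - 1$. Setting $L := \mathrm{lcm}_i\,\mathrm{ord}(\bar x_i)$ achieves both $\bb_L \subseteq \mm$ and, crucially, $\gcd(L,2) = 1$. The key step is to observe that this coprimality makes $x^L - 1$ separable, hence $\bb_L$ radical and $R/\bb_L$ reduced. A reduced Artinian ring is a product of fields indexed by its maximal ideals, so localizing at $\mm$ extracts the single factor $\FF$: the Artinian local ring $A := (R/\bb_L)_\mm$ coincides with $\FF$ itself. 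Under this identification $(\sigma_L)_\mm$ becomes $\sigma_\FF := \sigma \otimes_R \FF$, $(\epsilon_L)_\mm$ becomes $\epsilon_\FF$, and
\[
K(L)_\mm = \ker \epsilon_\FF / \im \sigma_\FF .
\]
A rank count then finishes. The containment $I_q(\sigma) \subseteq \mm$ makes every $q\times q$ minor of $\sigma$ vanish in $\FF$, so $\rank_\FF \sigma_\FF \le q - 1$; and $\rank_R \epsilon = q$ from Remark~\ref{rem:rank-sigma-m} forces $\rank_\FF \epsilon_\FF \le q$. Hence
\[
\dim_\FF K(L)_\mm \ge (2q - \rank \epsilon_\FF) - \rank \sigma_\FF \ge q - (q-1) = 1,
\]
which makes $K(L) \ne 0$ and contradicts the hypothesis.

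The main obstacle is the identification $A = \FF$. If one carelessly picked $L$ divisible by $2$, then $\bb_L R_\mm$ could sit strictly inside $\mm R_\mm$ and $A$ would be a nontrivial Artinian local ring; the residue-field rank count would still go through, but extracting the resulting class as an element of $K(L)_\mm$ would pick up $\Tor^A_1$ obstructions from the inclusions $\ker \epsilon_A \hookrightarrow P_A$ and $\im \epsilon_A \hookrightarrow E_A$. The separability trick — available precisely because the ground field has positive characteristic — dodges this entirely, and is where the finite-ground-field hypothesis enters the proof.
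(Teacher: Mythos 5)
Your proof is correct and follows essentially the paper's argument: the forward direction localizes at primes via Lemma~\ref{lem:associated-maximal-localized-homology}, and the converse chooses an odd $L$ so that $(R/\bb_L)_\mm$ becomes the residue field at a maximal ideal $\mm \supseteq I_q(\sigma)$, after which a rank count over that field gives $\dim K(L)_\mm \ge 1$. The only difference is cosmetic --- the paper first extends scalars to $\overline{\FF}_2$ and uses the Nullstellensatz normal form $\mm = (x_1-a_1,\ldots,x_D-a_D)$ to verify directly that $(\bb_L)_\mm = \mm_\mm$, whereas you stay over $\FF_2$ and use Zariski's lemma (finite residue field) together with separability of $x^L-1$ to identify $(R/\bb_L)_\mm$ with $R/\mm$.
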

\begin{proof}
If $I(\sigma) = R$,
$I(\sigma)$ is not contained in any prime ideal $\mm$.
The above lemma says $K(L)_\mm = 0$.
Since a module is zero 
if and only if its localization at every prime ideal is zero,
$K(L) = 0$ for all $L \ge 1$.

For the converse, observe that
if $\FF$ is any extension field of $\FF_2$,
for any $\FF_2$-vector space $W$,
we have $\dim_\FF \FF \otimes_{\FF_2} W = \dim_{\FF_2} W$.
We replace the ground field $\FF_2$ with its algebraic closure $\FF^a$ to test whether $K(L) \ne 0$.
If $I_q(\sigma)$ is not the unit ideal, then it is contained in a maximal ideal $\mm \subsetneq R$.
By Nullstellensatz, $\mm = (x_1 - a_1,\ldots,x_D - a_D)$ for some $a_i \in \FF^a$.
Since in $R$ any monomial is a unit, we have $a_i \ne 0$.
Therefore, there exists $L \ge 1$ such that $a_i^L = 1$ and $2 \nmid L$.
The equation $x^L-1=0$ has no multiple root.

We claim that $K(L) \ne 0$. It is enough to verify this for the localization at $\mm$.
Since anything outside $\mm$ is a unit in $R_\mm$
and each $x_i^L-1$ contains exactly one $x_i - a_i$ factor,
we see $(\bb_L)_\mm = \mm_\mm$.
Therefore, $(\epsilon_L)_\mm = \epsilon_\mm / (\bb_L)_\mm$ and
$(\sigma_L)_\mm = \sigma_\mm / (\bb_L)_\mm$ is a matrix over the field $R/\mm = \FF^a$.
Since $I_q(\sigma) \subseteq \mm$, we have $I_q(\sigma_L)_\mm = 0$.
That is, $\rank_{\FF^a} (\sigma_L)_\mm < q$.
It is clear that $\dim_{\FF^a} K(L)_\mm = \dim_{\FF^a} \ker (\epsilon_L)_\mm / \im (\sigma_L)_\mm \ge 1$.
\end{proof}
This corollary says that in order to have a {\em degenerate} Hamiltonian $H(L)$,
one must have a proper associated ideal.
We shall simply speak of a {\em degenerate} code Hamiltonian
if its associated ideal is proper.

\subsection{Counting points in an algebraic set}
It is important that the factor ring
\[
 R/\bb_L = \FF_2 [x_1, \ldots, x_D]~/~(x^L_1 -1,\ldots,x^L_D -1)
\]
is finite dimensional as a vector space over $\FF_2$,
and hence is Artinian. In fact, $\dim_{\FF_2} R/\bb_L = L^D$.
This ring appears also in \cite{GueneriOezbudak2008}.
Due to the following structure theorem of Artinian rings,
$K(L)$ can be explicitly analyzed by the localizations.
\begin{prop}
\cite[Chapter~8]{AtiyahMacDonald}\cite[Section~2.4]{Eisenbud}
Let $S$ be an Artinian ring.
(For example, $S$ is a homomorphic image of a polynomial ring 
over finitely many variables with coefficients in a field $\FF$,
and is finite dimensional as a vector space over $\FF$.)
Then, there are only finitely many maximal ideals of $S$, and
\[
 S \cong \bigoplus_\mm S_\mm
\]
where the sum is over all maximal ideals $\mm$ of $S$ and $S_\mm$ is the localization of $S$ at $\mm$. 
\label{prop:Artin-ring}
\end{prop}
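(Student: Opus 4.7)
The plan is to derive the decomposition from four classical ingredients in commutative algebra: finiteness of the set of maximal ideals in an Artinian ring, nilpotency of the Jacobson radical, the Chinese Remainder Theorem, and an identification $S/\mm^k \cong S_\mm$. Everything is driven by the descending chain condition, used in the disguise of selecting minimal elements in carefully chosen families of ideals.

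First, I would establish that the set of maximal ideals of $S$ is finite. Consider the descending chain of finite intersections $\mm_1 \supseteq \mm_1 \cap \mm_2 \supseteq \cdots$ over all distinct maximal ideals; by the Artinian hypothesis this chain stabilizes at some $J_0 = \mm_1 \cap \cdots \cap \mm_n$. For any other maximal ideal $\mm$, the stabilized value forces $\mm \supseteq \mm_1 \cap \cdots \cap \mm_n \supseteq \mm_1 \cdots \mm_n$, so primality yields $\mm \supseteq \mm_i$ and hence $\mm = \mm_i$ for some $i$.

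Next, I would prove that $J = \rad(S) = \bigcap_i \mm_i$ is nilpotent. The descending chain $J \supseteq J^2 \supseteq \cdots$ stabilizes at some $J^k$, and assuming $J^k \ne 0$, I would pick a minimal ideal $I$ in the family $\{ I' \subseteq S : I' J^k \ne 0 \}$. Choosing $x \in I$ with $x J^k \ne 0$ forces $I = (x)$ by minimality, and then $(x) J^k = (x)$ yields $x = (1-j)x$ for some $j \in J$; since $1 - j$ is a unit (as $j$ lies in every maximal ideal), $x = 0$, contradicting the choice of $x$. Hence $J^k = 0$. Since distinct $\mm_i$ are pairwise comaximal, so are their $k$-th powers, and the Chinese Remainder Theorem applied to $\bigcap_i \mm_i^k \subseteq J^k = 0$ gives
\[
  S \cong \prod_{i=1}^n S/\mm_i^k .
\]

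Finally, I would identify each factor $S/\mm_i^k$ with the localization $S_{\mm_i}$. Each $S/\mm_i^k$ is a local ring whose unique maximal ideal is $\mm_i/\mm_i^k$, so every element of $S \setminus \mm_i$ is already a unit in $S/\mm_i^k$ and localizing at $\mm_i$ changes nothing. Conversely, localizing the $j$-th factor $S/\mm_j^k$ at $\mm_i$ for $j \ne i$ kills it, because $\mm_i$ contains an element outside $\mm_j$, which becomes a unit after localization but whose image in $S/\mm_j^k$ is still a unit, forcing $\mm_i$ to equal the unit ideal of the localized factor (which is therefore zero). Applying localization at $\mm_i$ to the product decomposition then gives $S_{\mm_i} \cong S/\mm_i^k$, and substituting back yields $S \cong \bigoplus_\mm S_\mm$. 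The main obstacle in this outline is the nilpotency of $J$: the Nakayama-style minimality selection is the one place where the Artinian condition is used in a non-formal way, while the remaining steps are essentially bookkeeping with CRT and the universal property of localization.
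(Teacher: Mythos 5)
The paper gives no proof of this proposition; it cites the standard structure theorem for Artinian rings from Atiyah--MacDonald and Eisenbud. Your argument correctly reconstructs the textbook proof: the finiteness of $\mathrm{Max}(S)$ by applying DCC to finite intersections, the nilpotency of the Jacobson radical by picking a minimal ideal $I$ with $IJ^k \ne 0$ and running a Nakayama-style argument, and the Chinese Remainder decomposition $S \cong \prod_i S/\mm_i^k$ from the pairwise comaximality of the $\mm_i^k$ together with $\bigcap_i \mm_i^k = \prod_i \mm_i^k = (\mm_1\cdots\mm_n)^k \subseteq J^k = 0$. These steps are all sound.

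The only place that would benefit from tightening is the final identification $(S/\mm_j^k)_{\mm_i} = 0$ for $j \ne i$. You pick $a \in \mm_i \setminus \mm_j$, note that $\bar a$ is already a unit in the local ring $S/\mm_j^k$, and conclude that $\mm_i$ extends to the unit ideal in the localization, ``which is therefore zero.'' That last inference (unit extension of the maximal ideal forces the module to vanish) is exactly Nakayama's lemma applied to the cyclic $S_{\mm_i}$-module $(S/\mm_j^k)_{\mm_i}$, and it would be worth saying so explicitly, especially since the wording ``becomes a unit after localization'' is misleading for an element of $\mm_i$ (elements of $\mm_i$ are precisely the ones \emph{not} inverted in $S_{\mm_i}$). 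A cleaner route that avoids Nakayama altogether: since $\mm_i + \mm_j^k = S$, choose $b \in \mm_j^k$ with $b \notin \mm_i$; then $b$ is in the multiplicative set inverted by localizing at $\mm_i$ but maps to $0$ in $S/\mm_j^k$, so $0$ becomes a unit and $(S/\mm_j^k)_{\mm_i} = 0$ immediately. Either way the conclusion is correct, so the proof as a whole is valid.
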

The following calculation tool is sometimes useful.
Recall that a group algebra is equipped with a non-degenerate 
scalar product $\langle v,w \rangle = \tr (v \bar w)$.
This scalar product naturally extends to a direct sum of group algebras.
\begin{lem}
Let $\FF$ be a field,
and $S = \FF[\Lambda]$ be the group algebra of a finite abelian group $\Lambda$.
If $N$ is a submodule of $S^n$, then the dual vector space $N^*$
is vector-space isomorphic to $S^n / N^\perp$,
where $\perp$ is with respect to the scalar product $\langle \cdot , \cdot \rangle$.
\end{lem}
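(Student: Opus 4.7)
The plan is to exhibit an explicit surjective $\FF$-linear map $\Psi\colon S^n \to N^*$ whose kernel is exactly $N^\perp$, and then apply the first isomorphism theorem of vector spaces. Since $\Lambda$ is finite, $S^n$ is a finite-dimensional $\FF$-vector space, which is what makes everything work cleanly.

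First, I would define $\Psi$ by $\Psi(v)(w) = \langle v, w \rangle = \tr(v \bar w)$ for $v \in S^n$ and $w \in N$. This is $\FF$-linear in $v$, so $\Psi$ is a well-defined $\FF$-linear map into $N^* = \mathrm{Hom}_{\FF}(N, \FF)$. By the very definition of $N^\perp$, we have $v \in \ker \Psi$ iff $\langle v, w \rangle = 0$ for every $w \in N$, i.e.\ iff $v \in N^\perp$. So $\ker \Psi = N^\perp$ comes essentially for free.

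The substantive step is surjectivity of $\Psi$. Here I would use that $\langle \cdot, \cdot \rangle$ is a non-degenerate $\FF$-bilinear form on the finite-dimensional vector space $S^n$ (non-degeneracy on $S$ was recalled just before the lemma, and it extends to $S^n$ component-wise). Hence the map $S^n \to (S^n)^*$ sending $v \mapsto \langle v, \cdot\rangle$ is an $\FF$-linear isomorphism. Now, given any $f \in N^*$, I would extend $f$ to an $\FF$-linear functional $\tilde f \in (S^n)^*$ using the fact that $N$ is a subspace of a finite-dimensional $\FF$-vector space and hence admits a vector-space complement. By the isomorphism $S^n \cong (S^n)^*$, there exists $v \in S^n$ with $\tilde f = \langle v, \cdot \rangle$ on $S^n$, and restricting to $N$ gives $\Psi(v) = f$.

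Combining the two steps, $\Psi$ descends to an $\FF$-linear isomorphism $S^n / N^\perp \xrightarrow{\sim} N^*$. The main (only) subtlety is conceptual rather than technical: one has to remember that the claim is about vector-space isomorphism, not $S$-module isomorphism, so the extension of $f$ from $N$ to $S^n$ is done at the level of $\FF$-subspaces, where complements always exist, and not at the level of $S$-submodules, where they may not. Everything else is a direct consequence of non-degeneracy of $\langle \cdot, \cdot \rangle$ on the finite-dimensional space $S^n$.
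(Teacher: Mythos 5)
Your proof is correct and takes the same route as the paper: exhibit the map $v \mapsto \langle v,\cdot\rangle|_N$, observe its kernel is $N^\perp$ by definition, and get surjectivity from non-degeneracy of the trace form on the finite-dimensional space $S^n$. The paper states the surjectivity step in one line; you simply spell it out via extension of functionals, and you correctly flag that the isomorphism is only one of $\FF$-vector spaces, not of $S$-modules.
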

\begin{proof}
Consider $\phi : S^n \ni x \mapsto \langle \cdot, x \rangle \in N^*$.
The map $\phi$ is surjective since the scalar product is non-degenerate
and $S^n$ is a finite dimensional vector space.
The kernel of $\phi$ is precisely $N^\perp$.
\end{proof}
\begin{cor}
Put $2k = \dim_{\FF_2} K(L)$. Then,
\[
 k = qL^D - \dim_{\FF_2} \im \sigma_L = \dim_{\FF_2} \ker \epsilon_L - qL^D.
\]
Further, if $q=t$, then
\[
 k = \dim_{\FF_2} \coker \epsilon_L .
\]
\label{cor:k-formulas}
\end{cor}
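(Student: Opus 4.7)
The plan is to exploit the symplectic structure on the finite-dimensional $\FF_2$-vector space $P/\bb_L P \cong \FF_2^{2qL^D}$ together with the identity $\epsilon = \sigma^\dagger \lambda_q$. First I would note that the bilinear form $\langle a,b\rangle = \tr(a^\dagger \lambda_q b)$ descends to $P/\bb_L P$ and is non-degenerate there: it restricts site-by-site to the standard symplectic form on $\FF_2^{2q}$, and the $L^D$ sites contribute an orthogonal direct sum. Likewise, the scalar product $\tr(c \bar d)$ on $E/\bb_L E = (R/\bb_L)^t$ is non-degenerate, because $R/\bb_L$ is the group algebra of the finite abelian group $(\ZZ/L\ZZ)^D$ over $\FF_2$ and the trace pairing on such a group algebra is non-degenerate (the same fact that powered the lemma just preceding the corollary).

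Next, I would establish $\ker \epsilon_L = (\im \sigma_L)^\perp$ by the short calculation
$$\langle \sigma_L u, v \rangle = \tr\bigl(u^\dagger \sigma^\dagger \lambda_q v\bigr) = \tr\bigl(u^\dagger \epsilon_L(v)\bigr),$$
so $v \perp \im \sigma_L$ iff $\tr(u^\dagger \epsilon_L(v)) = 0$ for every $u \in G/\bb_L G$, which by non-degeneracy of the scalar product on $E/\bb_L E$ is equivalent to $v \in \ker \epsilon_L$. Non-degeneracy of the symplectic form on $P/\bb_L P$ then gives
$$\dim_{\FF_2} \ker \epsilon_L + \dim_{\FF_2} \im \sigma_L = 2qL^D.$$
Combining this with the containment $\im \sigma_L \subseteq \ker \epsilon_L$ (the complex condition $\epsilon \sigma = 0$) and the definition $2k = \dim_{\FF_2} K(L) = \dim_{\FF_2} \ker \epsilon_L - \dim_{\FF_2} \im \sigma_L$ yields both
$$k = qL^D - \dim_{\FF_2} \im \sigma_L \quad\text{and}\quad k = \dim_{\FF_2} \ker \epsilon_L - qL^D.$$

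Finally, when $q = t$, $E/\bb_L E$ has $\FF_2$-dimension $qL^D$, and rank–nullity applied to $\epsilon_L : P/\bb_L P \to E/\bb_L E$ gives $\dim_{\FF_2} \im \epsilon_L = 2qL^D - \dim_{\FF_2} \ker \epsilon_L = qL^D - k$. Hence $\dim_{\FF_2} \coker \epsilon_L = qL^D - (qL^D - k) = k$, as claimed. There is no real obstacle here: the whole argument reduces to non-degeneracy of the trace pairing on the finite group algebra $R/\bb_L$ together with bookkeeping of $\FF_2$-dimensions, so the only thing worth pausing on is confirming that the symplectic form on $P/\bb_L P$ remains non-degenerate after quotienting by $\bb_L$, which is immediate from the site-by-site description.
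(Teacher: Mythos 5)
Your proof is correct and follows essentially the same route as the paper's: both hinge on the identity $\langle \sigma_L u, v\rangle = \tr(u^\dagger\epsilon_L v)$ to identify $\ker\epsilon_L$ with $(\im\sigma_L)^\perp$ and then count $\FF_2$-dimensions via non-degeneracy of the trace pairing on $R/\bb_L$. The only cosmetic difference is that the paper routes through the auxiliary identity $\ker\sigma_L^\dagger = \lambda_q\ker\epsilon_L = (\im\sigma_L)^\perp$ and, for the $q=t$ case, through $\dim\ker\sigma_L$, whereas you work with the symplectic form on $P/\bb_L P$ directly and apply rank--nullity to $\epsilon_L$; these are the same dimension count.
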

\noindent
The first formula is a rephrasing of the fact that the number of encoded qubits
is the total number of qubits minus the number of independent stabilizer 
generators~\cite{Gottesman1996Saturating,CalderbankRainsShorEtAl1997Quantum}.
\begin{proof}
Put $S = R/\bb_L$. 
If $v_1,\ldots, v_t$ denote the columns of $\sigma_L$, we have
\begin{equation}
\ker \sigma_L^\dagger = \lambda_q \ker \epsilon_L 
= \bigcap_i v_i^\perp = \left( \sum_i S v_i \right)^\perp = \left( \im \sigma_L \right)^\perp.
\label{eq:orthogonal-to-dual}
\end{equation}
Hence, $\dim_{\FF_2} \ker \epsilon_L = \dim_{\FF_2} S^{2q} - \dim_{\FF_2} \im \sigma_L.$
Since $\dim_{\FF_2} S = L^D$ and $K(L) = \ker \epsilon_L / \im \sigma_L$,
the first claim follows.

Since $\im \sigma_L \cong S^t / \ker \sigma_L$, if $t=q$,
we have $k = \dim_{\FF_2} \ker \sigma_L$ by the first claim.
From Eq.~\eqref{eq:orthogonal-to-dual}, we conclude that 
$k = \dim_{\FF_2} S^t / \im \sigma_L^\dagger = \dim_{\FF_2} \coker \epsilon_L$.
\end{proof}
\noindent We will apply these formulas in Section~\ref{sec:eg}.

The characteristic dimension is related to the rate at which
the degeneracy increases as the system size increases
in the following sense.
Recall that $2k = \dim_{\FF_2} K(L)$ and the ground state degeneracy is $2^k$.
\begin{lem}
Suppose $2 \nmid L$. Let $\FF^a$ be the algebraic closure of $\FF_2$.
If $N$ is the number of maximal ideals in $\FF^a \otimes_{\FF_2} R$ 
that contains $\bb_L + I_q(\sigma)$, 
then
\[
N \le \dim_{\FF_2} K(L) \le 2q N.
\]
\label{lem:K-L-number-points}
\end{lem}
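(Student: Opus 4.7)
The plan is to reduce to the algebraic closure $\FF^a$ and apply the Artinian structure theorem (Proposition~\ref{prop:Artin-ring}) to decompose $K(L)$ as a direct sum indexed by the maximal ideals of $T := \FF^a \otimes_{\FF_2} R$ that contain $\bb_L$. Since base change from $\FF_2$ to $\FF^a$ preserves $\FF_2$-dimensions (as used in the proof of Corollary~\ref{cor:unit-characteristic-ideal-means-nondegeneracy}), it suffices to bound $\dim_{\FF^a} \tilde K(L)$, where $\tilde \sigma_L$ and $\tilde \epsilon_L$ denote the maps base-changed and reduced modulo $\bb_L$.

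The crucial consequence of the hypothesis $2 \nmid L$ is that each $x_i^L - 1$ has distinct roots in $\FF^a$, so $\bb_L$ is a radical ideal of $T$. Thus $S = T/\bb_L$ is a reduced finite-dimensional $\FF^a$-algebra; by Proposition~\ref{prop:Artin-ring},
\[
 S \cong \bigoplus_{\mm \supseteq \bb_L} S_\mm, \qquad S_\mm \cong \FF^a,
\]
the latter equality because radicality of $\bb_L$ forces $(\bb_L)_\mm = \mm_\mm$. Localization is exact, so this induces $\tilde K(L) \cong \bigoplus_\mm \tilde K(L)_\mm$.

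Next, apply Lemma~\ref{lem:associated-maximal-localized-homology}: if $\mm$ does not contain $I_q(\sigma)$, then $\tilde K(L)_\mm = 0$. Therefore only the $N$ maximal ideals containing $\bb_L + I_q(\sigma)$ can contribute. For each such $\mm$, both $(\tilde\sigma_L)_\mm$ and $(\tilde\epsilon_L)_\mm$ become matrices over the field $\FF^a$ and satisfy $(\tilde\epsilon_L)_\mm (\tilde\sigma_L)_\mm = 0$, so
\[
 \dim_{\FF^a} \tilde K(L)_\mm = 2q - \rank (\tilde\sigma_L)_\mm - \rank (\tilde\epsilon_L)_\mm.
\]
Discarding the two nonnegative ranks gives the upper bound $\dim \tilde K(L)_\mm \le 2q$. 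For the lower bound, $\rank(\tilde\sigma_L)_\mm < q$ because $I_q(\sigma) \subseteq \mm$ kills all $q\times q$ minors, while $\rank(\tilde\epsilon_L)_\mm \le q$ globally by Remark~\ref{rem:rank-sigma-m}, so $\dim \tilde K(L)_\mm \ge 1$. Summing over the $N$ contributing maximal ideals gives $N \le \dim_{\FF_2} K(L) \le 2qN$.

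The main subtlety is step two — identifying $S_\mm$ with $\FF^a$ — because this is what converts the homological question into finite-dimensional linear algebra over a field; it fails if $\bb_L$ is not radical, which is exactly why the parity assumption $2 \nmid L$ is needed. Once this identification is in hand, the rank estimates and the direct sum decomposition are standard, and the bounds drop out immediately.
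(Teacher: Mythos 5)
Your proof is correct and takes essentially the same approach as the paper: reduce to the algebraic closure, use the Artinian decomposition $K(L) \cong \bigoplus_\mm K(L)_\mm$, exclude maximal ideals not containing $I_q(\sigma)$ via Lemma~\ref{lem:associated-maximal-localized-homology}, and bound the dimension of each contributing summand over $\FF^a$. The paper is terser (it points back to the proof of Corollary~\ref{cor:unit-characteristic-ideal-means-nondegeneracy} rather than writing out the rank--nullity count), but the underlying argument — including the crucial use of $2 \nmid L$ to force $(\bb_L)_\mm = \mm_\mm$, the rank bound $\rank(\sigma_L)_\mm < q$ from $I_q(\sigma)\subseteq\mm$, and $\rank(\epsilon_L)_\mm\le q$ from Remark~\ref{rem:rank-sigma-m} — is identical.
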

\begin{proof}
We replace the ground field $\FF_2$ with $\FF^a$.
Any maximal ideal of an Artinian ring $\FF^a[x_i^{\pm 1}]/\bb_L$ is of form
$\mm = (x_1 - a_1, \ldots, x_D - a_D)$ where $a_i^L = 1$ by Nullstellensatz.
Since $2 \nmid L$, we see that $(\bb_L)_\mm = \mm_\mm$ 
and that $(R/\bb_L)_\mm \cong \FF^a$ is the ground field.
(See the proof of Corollary~\ref{cor:unit-characteristic-ideal-means-nondegeneracy}.)

Now, $I_q(\sigma) + \bb_L \subseteq \mm$
iff $I_q(\sigma)_\mm + (\bb_L)_\mm \subseteq \mm_\mm = (\bb_L)_\mm$
iff $I_q(\sigma)$ becomes zero over $R_\mm / (\bb_L)_\mm \cong \FF^a$
iff $1 \le \dim_{\FF^a} K(L)_\mm \le 2q$.
Since by Proposition~\ref{prop:Artin-ring},
$\dim_{\FF^a} K(L)$ is a finite direct sum of localized ones,
we are done.
\end{proof}

\begin{lem}
Let $I$ be an ideal such that $\dim R/I = d$. We have
\[
\dim_{\FF_2} R / (I+\bb_L) \le cL^d
\]
for all $L \ge 1$ and some constant $c$ independent of $L$.
\end{lem}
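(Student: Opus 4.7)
The plan is to reduce the estimate from the Laurent polynomial ring $R$ to the polynomial subring $R_0 = \FF_2[x_1,\ldots,x_D] \subset R$, and then to invoke the Hilbert polynomial bound for a finitely generated $\FF_2$-algebra of Krull dimension $d$.

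First I would set $J = I \cap R_0$. Since $x_i \cdot x_i^{L-1} = x_i^L \equiv 1 \pmod{\bb_L}$, every $x_i$ is invertible in $R/\bb_L$, so the composition $R_0 \hookrightarrow R \twoheadrightarrow R/(I+\bb_L)$ is surjective; its kernel contains $J+\bb_L$, yielding a surjection
\[
  R_0/(J+\bb_L) \twoheadrightarrow R/(I+\bb_L).
\]
Moreover $\dim R_0/J = d$: the minimal primes of $J$ are exactly the contractions $\mathfrak{q}_i \cap R_0$ of the minimal primes $\mathfrak{q}_i$ of $I$ in $R$, because any prime of $R_0$ containing both $J$ and some $x_j$ must strictly contain one of these contractions (none of which contains any $x_j$). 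Since inverting monomials is automatic once one localizes at $\mathfrak{q}_i \cap R_0$, one has $(R_0)_{\mathfrak{q}_i \cap R_0} = R_{\mathfrak{q}_i}$, hence $\dim R_0/(\mathfrak{q}_i \cap R_0) = \dim R/\mathfrak{q}_i$, so $\dim R_0/J = \max_i \dim R/\mathfrak{q}_i = d$.

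Next, the relation $x_i^L \equiv 1$ lets me represent every class of $R_0/(J+\bb_L)$ by a polynomial with $0 \le \alpha_i < L$ in each exponent, hence of total degree at most $D(L-1)$. Writing $F_n \subseteq R_0/J$ for the image of the polynomials of total degree $\le n$, this gives
\[
  \dim_{\FF_2} R_0/(J+\bb_L) \le \dim_{\FF_2} F_{D(L-1)}.
\]
Finally, the associated graded of $R_0/J$ for the degree filtration is $R_0/\mathrm{in}(J)$, where $\mathrm{in}(J)$ is the ideal of top-degree forms of elements of $J$; it is a graded $\FF_2$-algebra of the same Krull dimension $d$. By Hilbert's theorem its graded pieces eventually agree with a polynomial of degree $d-1$, so $\dim_{\FF_2} F_n \le c_1 n^d$ for some constant $c_1$ depending only on $J$ and all $n \ge 1$. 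Putting everything together,
\[
  \dim_{\FF_2} R/(I+\bb_L) \le c_1 \bigl(D(L-1)\bigr)^d \le c L^d.
\]

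The one genuinely technical point is the Hilbert bound in the last step: one must know that the degree-filtration associated graded preserves Krull dimension, so that Hilbert's theorem actually delivers the $O(n^d)$ estimate. A cleaner combinatorial substitute I would keep in reserve is to fix a degree-compatible term order, pass to the initial ideal $\mathrm{in}(J)$, and directly count the standard monomials inside the box $[0,L)^D$: for a monomial ideal of Krull dimension $d$, an elementary staircase argument shows that the number of such standard monomials is $O(L^d)$, which yields the bound without any explicit reference to Hilbert polynomials.
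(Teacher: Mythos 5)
Your proof is correct and takes a genuinely different route from the paper's. The paper passes to the algebraic closure $\FF^a$, applies Noether normalization to choose coordinates $y_1,\ldots,y_d$ over which $R/I$ is module-finite, builds an explicit surjection $\phi:S=\FF^a[z_1,\ldots,z_D]\twoheadrightarrow R/(I+\bb_L)$, and then bounds $\dim S/\ker\phi$ by exhibiting monomials $z_j^{n_j}$ (from integrality) and $z_i^L$ (from $\bb_L$) in a lex initial ideal, invoking Macaulay's theorem to count standard monomials. You instead contract $I$ to the polynomial subring $R_0\subset R$ and bound everything there via the degree filtration and the Hilbert polynomial; the only geometric input you need is that contraction along the open immersion $\mathrm{Spec}\,R\hookrightarrow\mathrm{Spec}\,R_0$ preserves the Krull dimension $d$, which follows because $V(I\cap R_0)=\overline{V(I)}$ is the closure of a dense open subset, so its components are the closures of those of $V(I)$ and none lies in a coordinate hyperplane. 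What this buys you is a proof with no change of variables and no field extension; what the paper's route buys is a more explicit constant $c = \prod_{j>d} n_j$ tied to the geometry of $V(I)$. The technical point you flag — that the degree-form ideal preserves Krull dimension so that the Hilbert polynomial has the right degree — is indeed the one place that needs care; the fallback you mention (refine to a degree-compatible term order, pass to the monomial initial ideal, and count standard monomials in the box $[0,L)^D$) is in fact very close in spirit to what the paper does, and is the cleanest way to discharge it. Both arguments are valid; they differ in the preparatory reduction (Noether normalization vs.\ contraction), and converge at the final counting step.
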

\begin{proof}
We replace the ground field with its algebraic closure $\FF^a$.
Write $\tilde x_i$ for the image of $x_i$ in $R/I$.
By Noether normalization theorem~\cite[Theorem~13.3]{Eisenbud},
there exist $y_1,\ldots, y_d \in R/I$ such that
$R/I$ is a finitely generated module over $\FF^a[y_1,\ldots,y_d]$.
Moreover, one can choose $y_i = \sum_{j=1}^D M_{ij} \tilde x_j$
for some rank $d$ matrix $M$ whose entries are in $\FF^a$.
Making $M$ into the reduced row echelon form,
we may assume $y_i = \tilde x_i + \sum_{j>d} a_{ij} \tilde x_j$
for each $1 \le i \le d$.

Let $S=\FF^a[z_1,\ldots,z_D]$ be a polynomial ring in $D$ variables.
Let $\phi : S \to R/( I + \bb_L )$ be the ring homomorphism
such that $z_i \mapsto y_i$ for $1 \le i \le d$
and $z_j \mapsto \tilde x_j$ for $ d < j \le D$.
By the choice of $y_i$, $\phi$ is clearly surjective.
Consider the ideal $J$ of $S$ generated by the initial terms of $\ker \phi$ 
with respect to the lexicographical monomial order in which $z_1 \prec \cdots \prec z_D$.
Since $\tilde x_j$ is integral over $\FF[y_1,\ldots,y_d]$,
the monomial ideal $J$ contains $z_j^{n_j}$ for some positive $n_j$ for $d < j \le D$.
Here, $n_j$ is independent of $L$.
Since $z_i^L \in J$ for $1 \le i \le d$, we conclude that
\[
 \dim_{\FF^a} R/(I+\bb_L) = \dim_{\FF^a} S/J \le L^d \cdot n_{d+1} n_{d+2} \cdots n_{D}
\]
by Macaulay theorem \cite[Theorem~15.3]{Eisenbud}.
\end{proof}

\begin{cor}
If $2 \nmid L$, and $d = \dim R/I_q(\sigma)$ is the characteristic dimension of a code Hamiltonian,
then 
\[
 \dim_{\FF_2} K(L) \le c L^d
\]
for some constant $c$ independent of $L$.
\end{cor}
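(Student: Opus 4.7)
The plan is to chain the two immediately preceding lemmas. Lemma~\ref{lem:K-L-number-points} bounds $\dim_{\FF_2} K(L)$ by $2qN$, where $N$ is the number of maximal ideals of $\FF^a \otimes_{\FF_2} R$ containing $I_q(\sigma) + \bb_L$. The unnamed lemma just before the corollary bounds $\dim_{\FF_2} R/(I_q(\sigma) + \bb_L)$ by $cL^d$. So the only missing link is that $N$ is controlled by this vector-space dimension.

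First, I would pass to the algebraic closure. Setting $R^a = \FF^a \otimes_{\FF_2} R$, the flatness of the field extension gives $\dim_{\FF^a} R^a/(I_q(\sigma) + \bb_L)R^a = \dim_{\FF_2} R/(I_q(\sigma) + \bb_L) \le cL^d$ with the same $c$. Next, since $R^a/(I_q(\sigma)+\bb_L)R^a$ is a finite-dimensional $\FF^a$-algebra, it is Artinian, and by Proposition~\ref{prop:Artin-ring} it decomposes as a finite direct sum of its localizations at the (finitely many) maximal ideals that contain $I_q(\sigma)+\bb_L$. Each summand is a nonzero local Artinian $\FF^a$-algebra, hence has $\FF^a$-dimension at least $1$. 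Therefore the number $N$ of such maximal ideals satisfies
\[
 N \le \dim_{\FF^a} R^a/(I_q(\sigma)+\bb_L)R^a \le cL^d.
\]

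Combining, $\dim_{\FF_2} K(L) \le 2qN \le 2qc\, L^d$, and $2qc$ is the desired constant independent of $L$. The only point requiring any care is that the hypothesis $2 \nmid L$ is needed for the applicability of Lemma~\ref{lem:K-L-number-points} (so that $\bb_L$ becomes the whole maximal ideal after localization, and the counting of points on the algebraic set is genuinely the right invariant); the bound on $\dim_{\FF_2} R/(I+\bb_L)$ does not itself require this parity assumption, so no additional work is needed there. I do not foresee a real obstacle: the argument is essentially bookkeeping, with the one subtlety being to remember to base-change to $\FF^a$ before counting maximal ideals, since the statement of Lemma~\ref{lem:K-L-number-points} is phrased over the algebraic closure.
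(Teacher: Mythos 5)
Your proof is correct and takes essentially the same route as the paper: both reduce to showing $N \le \dim_{\FF^a} \FF^a \otimes R/(I_q(\sigma)+\bb_L)$ and then invoke the preceding lemma. The paper phrases this via $N = \dim_{\FF^a} \FF^a \otimes R/\rad J \le \dim_{\FF^a} \FF^a \otimes R/J$, while you phrase it via the Artinian decomposition into $N$ nonzero local summands; these are the same observation.
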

\begin{proof}
If $J  = \bb_L + I(\sigma)$,
$N$ in Lemma~\ref{lem:K-L-number-points} is equal to $\dim_{\FF^a} \FF^a \otimes R /\mathop{\mathrm{rad}} J$.
This is at most $\dim_{\FF^a} \FF^a \otimes R / J = \dim_{\FF_2} R/J$.
\end{proof}

\begin{lem}
Let $d$ be the characteristic dimension.
There exists an infinite set of integers $\{ L_i \}$ such that
\[
 \dim_{\FF_2} K(L_i) \ge {L_i}^d /2
\]
\label{lem:k-growing-sequence-L}
\end{lem}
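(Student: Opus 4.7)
\emph{Proof plan.}
My plan is to apply the Lang--Weil bound on $\FF_q$-rational points of a geometrically irreducible variety over a finite field to a top-dimensional component of $V(I_q(\sigma))$, and then to invoke Lemma~\ref{lem:K-L-number-points} to convert the point count into a lower bound on $\dim_{\FF_2} K(L)$. Throughout I use $\mu_L$ for the set of $L$-th roots of unity in $\FF^a := \overline{\FF_2}$.

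First I would pass to $\FF^a$ and replace $I := I_q(\sigma)$ by its radical, which affects neither the set $V(I)$ nor $d = \dim R/I$. Since $\dim R/I = d$, the algebraic set $V(I)$ has an irreducible component $V_0$ of dimension $d$. Because $I$ is defined over $\FF_2$, the Galois group permutes the finitely many irreducible components of $V(I)_{\FF^a}$, so $V_0$ is defined over some finite extension $\FF_{2^m}$, and after possibly enlarging $m$ I may assume $V_0$ is geometrically irreducible over $\FF_{2^m}$. The Lang--Weil estimate then gives $|V_0(\FF_q)| = q^d + O(q^{d-1/2})$ as $q = 2^{mk} \to \infty$, so $|V_0(\FF_q)| \ge q^d/2 \ge (q-1)^d/2$ for all sufficiently large $k$.

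Now I would set $L_k := q - 1 = 2^{mk} - 1$, which is odd. Because $R$ inverts every $x_j$, any $\FF_q$-point of $V_0$ has coordinates in $\FF_q^\times = \mu_{L_k}$, whence $V_0(\FF_q) \subset V(I) \cap \mu_{L_k}^D$. For $L_k$ odd the ideal $\bb_{L_k}$ is radical over $\FF^a$ with zero locus $\mu_{L_k}^D$, so the maximal ideals of $\FF^a \otimes R$ containing $I + \bb_{L_k}$ correspond bijectively to the points of $V(I) \cap \mu_{L_k}^D$. Hence the integer $N$ of Lemma~\ref{lem:K-L-number-points} satisfies $N \ge |V_0(\FF_q)| \ge L_k^d/2$, and the lemma yields $\dim_{\FF_2} K(L_k) \ge N \ge L_k^d/2$, providing the required infinite sequence as $k$ ranges over all sufficiently large integers.

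The hardest step is the algebro-geometric input, namely producing a geometrically irreducible $d$-dimensional component $V_0$ of $V(I)$ defined over some finite field so that Lang--Weil applies; the rest just exploits the fact that $V_0$ lies in the torus, so its $\FF_q$-points automatically land in $\mu_{q-1}^D$, which directly couples the Lang--Weil count to the quantity $N$ in Lemma~\ref{lem:K-L-number-points}.
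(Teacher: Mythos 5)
Your proof is correct and follows essentially the same route as the paper: both identify a $d$-dimensional irreducible component of $V(I_q(\sigma))$ defined over a finite field (the paper phrases this via a minimal prime $\pp' \supseteq I(\sigma)$ and its contraction to the polynomial ring), apply the Lang--Weil estimate over growing finite extensions $\FF_q$ with $L = q-1$, and feed the resulting point count into Lemma~\ref{lem:K-L-number-points}. The only cosmetic difference is that you work intrinsically in the torus so the $\FF_q$-points automatically have unit coordinates, whereas the paper works in affine space and explicitly subtracts the lower-dimensional locus on the coordinate hyperplanes.
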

\begin{proof}
We replace the ground field with its algebraic closure $\FF^a$.
Let $\pp' \supseteq I(\sigma)$ be a prime of $R$ of codimension $D-d$.
Let $\pp$ be the contraction (pull-back) of $\pp'$ in the polynomial ring
$S = \FF^a[x_1,\ldots,x_D]$. Since the set of all primes of $R$ is in one-to-one
correspondence with the set of primes in $S$ that does not include monomials,
it follows that $\pp$ has codimension $D-d$ and does not contain any monomials.
Let $V$ denote the affine variety defined by $\pp=(g_1,\ldots,g_n)$.
Since $\pp$ contains no monomials, $V$ is not contained in any hyperplanes $x_i = 0$
($i = 1,\ldots, D$).

Let $A_1$ be a finite subfield of $\FF^a$ that contains all the coefficients of $g_i$,
so $V$ can be defined over $A_1$.
Let $A_n \subseteq \FF^a$ be the finite extension fields of $A_1$ of extension degree $n$.
Put $L_n = |A_n|-1$.
For any subfield $A$ of $\FF^a$, let us say a point of $V$ is \emph{rational} over $A$
if its coordinates are in $A$.
The number $N'(L_n)$ of points $(a_i) \in V$ satisfying $a_i^{L_n} = 1$
is precisely the number of the rational points of $V$ over $A_n$
that are not contained in the hyperplanes $x_i=0$.
Since $I(\sigma) \subseteq \pp'$,
the number $N$ in Lemma~\ref{lem:K-L-number-points} is at least $N'(L_n)$.
It remains to show $N'(L_n) \ge L_n^d /2$ for all sufficiently large $n$.

This follows from the result by Lang and Weil~\cite{LangWeil1954},
which states that the number of points of a projective variety
of dimension $d$ that are rational over a finite field of $m$ elements
is $m^d + O\left(m^{d-\frac{1}{2}} \right)$ asymptotically in $m$.
Since Lang-Weil theorem is for projective variety
and we are with an affine variety $V$,
we need to subtract the number of points in the hyperplanes $x_i = 0$
($i = 0,1,\ldots,D$)
from the Zariski closure of $V$.
The subvarieties in the hyperplanes, being closed,
have strictly smaller dimensions, and we are done.
\end{proof}

\section{Fractal operators and topological charges}

This section is to provide a characterization of topological charges,
and their dynamical properties.
Before we turn to a general characterization and define fractal operators,
let us review familiar examples.
Note that for two dimensions the base ring is $R = \FF_2[x,\bar x, y, \bar y]$.

\begin{example}[Toric Code]
\label{eg:2d-toric}
Although the original two-dimensional toric code has qubits on edges~\cite{Kitaev2003Fault-tolerant},
we put two qubits per site of the square lattice to fit it into our setting.
Concretely, the first qubit to each site represents the one on its east edge,
and the second qubit the one on its north edge. With this convention,
the Hamiltonian is the negative sum of the following two types of interactions:
\[
\xymatrix@!0{
XI \ar@{-}[r] & XX \ar@{-}[d] \\
II \ar@{-}[u] & IX \ar@{-}[l]
} \quad
\xymatrix@!0{
ZI \ar@{-}[r] & II \ar@{-}[d] \\
ZZ \ar@{-}[u] & IZ \ar@{-}[l]
} \quad \quad
\xymatrix@!0{
y \ar@{-}[r] & xy \ar@{-}[d] \\
1 \ar@{-}[u] & x \ar@{-}[l]
}
\]
where we used $X,Z$ to abbreviate $\sigma_x,\sigma_z$, and omitted the tensor product symbol.
Here, the third square specifies the coordinate system of the square lattice.
Since there are $q=2$ qubits per site, the Pauli module is of rank 4.
The corresponding generating map $\sigma : R^2 \to R^4$ is given by the matrix
\[
 \sigma_{\text{2D-toric}} = 
\begin{pmatrix}
 y+xy & 0 \\
 x+xy & 0 \\
\hline
 0    & 1 + y \\
 0    & 1 + x
\end{pmatrix}
\cong
\begin{pmatrix}
 1 + \bar x & 0 \\
 1 + \bar y & 0 \\
\hline
 0          & 1 + y \\
 0          & 1 + x
\end{pmatrix} .
\]
Here, the each column expresses each type of interaction.
It is clear that
\[
\epsilon_{\text{2D-toric}} = \sigma^\dagger \lambda_2 =
\begin{pmatrix}
 0         & 0         & 1 + x & 1 + y \\
 1+ \bar y & 1+ \bar x & 0     & 0
\end{pmatrix}
\]
and $\ker \epsilon = \im \sigma$;
the two dimensional toric code satisfies our exactness condition.
The associated ideal is $I(\sigma) = ( (1+x)^2, (1+x)(1+y), (1+y)^2 )$.
The characteristic dimension is $\dim R / I(\sigma) = 0$.
Note also that $\ann \coker \epsilon = (x-1,y-1)$.
The electric and magnetic charge are represented by
$\begin{pmatrix} 1 \\ 0 \end{pmatrix}, \begin{pmatrix} 0 \\ 1 \end{pmatrix} \in E \setminus \im \epsilon$,
respectively.

The connection with cellular homology should be mentioned.
$\sigma$ can be viewed as the boundary map
from the free module of all 2-cells with $\mathbb{Z}_2$ coefficients
of the cell structure of 2-torus
induced from the tessellation by the square lattice.
Then, $\epsilon$ is interpreted as the boundary map from the free module of all 1-cells
to that of all 0-cells.
$\sigma$ or $\epsilon$ is actually the direct sum of two boundary maps.
Indeed, the space 
$K(L) = \ker \epsilon_L / \im \sigma_L$
of operators acting on the ground space (logical operators)
has four generators
\begin{align*}
 l_y(X) = \begin{pmatrix} 1+y+\cdots+y^{L-1} \\ 0 \\ 0 \\ 0 \end{pmatrix}, & & 
 l_x(X) = \begin{pmatrix} 0 \\ 1+ x+ \cdots + x^{L-1} \\ 0 \\ 0 \end{pmatrix}, \\
 l_x(Z) = \begin{pmatrix} 0 \\ 0 \\ 1+x+\cdots+x^{L-1} \\ 0 \end{pmatrix}, & &
 l_y(Z) = \begin{pmatrix} 0 \\ 0 \\ 0 \\ 1+y+\cdots+y^{L-1} \end{pmatrix},
\end{align*}
which correspond to the usual nontrivial first homology classes of 2-torus.

The description by the cellular homology might be advantageous for the toric code
over our description with pure Laurent polynomials;
in this way, it is clear that the toric code can be defined on
an arbitrary tessellation of compact orientable surfaces.
However, it is unclear whether this cellular homology description
is possible after all for other topologically ordered code Hamiltonians.
\hfill $\Diamond$
\end{example}

\begin{example}[2D Ising model on square lattice]
The Ising model has nearest neighbor interactions that are horizontal and vertical.
In our formalism, they are represented as $1+x$ and $1+y$. Thus,
\[
 \sigma_\text{2D Ising} =
\begin{pmatrix}
 0   & 0 \\
 1+x & 1+y
\end{pmatrix} .
\]
As it is not topologically ordered, the complex $G \to P \to E$ is not exact.
Moreover, $\sigma$ is not injective.
\[
 \sigma_\text{2D Ising;1} =
\begin{pmatrix}
 1+y \\
 1+x
\end{pmatrix}
\]
generates the kernel of $\sigma$. That is, the complex 
$0 \to G_1 \xrightarrow{ \sigma_\text{2D Ising;1}} G \xrightarrow{ \sigma_\text{2D Ising} } P$ is exact.
\hfill $\Diamond$
\end{example}

In both examples, there exist isolated excitations.
In the toric code, the isolated excitation can be
(topologically) nontrivial since the electric charge is not in $\im \epsilon$.
On the contrary, in 2D Ising model, any isolated excitation is 
actually created by an operator of finite support
because any excitation created by some Pauli operator appears as several connected loops.
This difference motivates the following definition for charges.

Let $\tilde R$ be the set of all $\FF_2$-valued functions on the translation group $\Lambda$,
not necessarily finitely supported. For instance, if $\Lambda = \ZZ$,
\[
\tilde f = \cdots + x^{-4} + x^{-2} + 1 + x^2 + x^4 + \cdots \in \tilde R
\]
represents a function whose value is 1 at even lattice points, and 0 at odd points.
Note that $\tilde R$ is a $R$-module,
since the multiplication is a convolution between an arbitrary function and a finitely supported function.
For example,
\begin{align*}
 (1+x) \cdot \tilde f &= \cdots + x^{-2} + x^{-1} + 1 + x + x^2 + \cdots , \\
 (1+x)^2 \cdot \tilde f &= 0 .
\end{align*}
Let $\tilde P = \tilde R^{2q}$ be the module of Pauli operators of possibly infinite support.
Similarly, let $\tilde E$ be the module of virtual excitations of possibly infinitely many terms.
Formally, $\tilde P$ is the module of all $2q$-tuples of functions on the translation-group,
and $\tilde E$ is that of all $t$-tuples.
Clearly, $P \subseteq \tilde P$ and $E \subseteq \tilde E$.
The containment is strict if and only if the translation-group is infinite.
Since the matrix $\epsilon$ consists of Laurent polynomials with finitely many terms,
$\epsilon:P \to E$ extends to a map from $\tilde P$ to $\tilde E$.
\begin{defn}
A \emph{(topological) charge} $e = \epsilon( \tilde p ) \in E$ 
is an excitation of finite energy (an element of the virtual excitation module)
created by a Pauli operator $\tilde p \in \tilde P$ of possibly infinite support.
A charge $e$ is called \emph{trivial} if $e \in \epsilon(P)$.
\end{defn}
\noindent By definition, the set of all charges modulo trivial ones is
in one-to-one correspondence with the superselection sectors.
According to the definition, any charge of 2D Ising model is trivial.
A nontrivial charge may appear due to the following fractal generators.
\begin{defn}
We call zero-divisors on $\coker \epsilon$ as \emph{fractal generators}.
In other words, 
an element $f \in R \setminus \{0\}$ is a {\em fractal generator} if
there exists $v \in E \setminus \im \epsilon$ 
such that $f v \in \im \epsilon$.
\end{defn}

There is a natural reason the fractal generator deserves its name.
Consider a code Hamiltonian with a single type of interaction: $t=1$.
So each configuration of excitations is described by one Laurent polynomial.
For example, in two dimensions, $f = 1 + x + y = \epsilon (p)$ represents
three excitations, one at the origin of the lattice and the others at $(1,0)$ and $(0,1)$
created by a Pauli operator represented by $p$.
(This example is adopted from \cite{NewmanMoore1999Glassy}.)
In order to avoid repeating phrase,
let us call each element of the Pauli module a Pauli operator,
and instead of using multiplicative notation
we use module operation $+$ to mean the product of the corresponding Pauli operators.

Consider the Pauli operator $fp = p + x p + y p \in P$. 
It describes the Pauli operator $p$ at the origin 
multiplied by the translations of $p$ at $(1,0)$ and at $(0,1)$.
So $fp$ consists of three copies of $p$.
This Pauli operator maps the ground state to the excited state $f^2 = 1 + x^2 + y^2$.
The number of excitations is still three, but the excitations at $(1,0), (0,1)$ 
have been replaced by those at $(2,0),(0,2)$.
Similarly, the Pauli operator $f^{2+1} p = f^2 (fp)$ 
consists of three copies of $fp$, or $3^2$ copies of $p$.
The excited state created by $f^3 p$ is $f^4 = (f^2)^2 = 1 + x^{2^2} + y^{2^2}$.
Still it has three excitations, but they are further apart.
The Pauli operator $f^{2^n -1} p$ consists of $3^n$ copies of $p$ in a self-similar way,
and the excited state caused by $f^{2^n -1} p$ consists of a constant number of excitations.
More generally, if there are $t > 1$ types of terms in the Hamiltonian,
the excitations are described by a $t \times 1$ matrix.
If it happens to be of form $f v$ for some $f \in R$ consisted of two or more terms,
there is a family of Pauli operators $f^{2^n -1} p$ with self-similar support
such that it only creates a bounded number of excitations.
An obvious but uninteresting way to have such a situation is to put
$f v = \epsilon (f p')$ for a Pauli operator $p'$ where $v = \epsilon( p' )$.
Our definition avoids this triviality by requiring $v \notin \im \epsilon$.
The reader may wish to compare the fractals with finite cellular automata~\cite{MartinOdlyzkoAndrewWolfram1984}.

\begin{prop}
\cite[16.33]{BrunsVetter}
Suppose $\coker \epsilon \ne 0$.
Then, the following are equivalent:
\begin{itemize}
 \item There does not exist a fractal generator.
 \item $\coker \epsilon$ is torsion-free.
 \item There exists a free $R$-module $E'$ of finite rank such that
\[
 P \xrightarrow{\epsilon} E \to E'
\]
is exact.
\label{prop:nofractal-torsionless}
\end{itemize}
\end{prop}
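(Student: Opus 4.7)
I would prove the proposition by establishing the equivalences $(1) \Leftrightarrow (2)$ and $(2) \Leftrightarrow (3)$ separately.

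For $(1) \Leftrightarrow (2)$, a direct unpacking of definitions suffices: a fractal generator is, by definition, a nonzero $f \in R$ together with an element $v \in E \setminus \im \epsilon$ satisfying $fv \in \im \epsilon$, i.e., a pair $(f,[v])$ in $R \times \coker \epsilon$ with $f \neq 0$, $[v] \neq 0$, and $f \cdot [v] = 0$. Such a pair exists if and only if $\coker \epsilon$ has nonzero torsion, so the two conditions are literally the same statement.

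For $(3) \Rightarrow (2)$, I would use that $R = \FF_2[x_1^{\pm 1}, \ldots, x_D^{\pm 1}]$ is an integral domain, so every free $R$-module is torsion-free and every submodule of a torsion-free module is torsion-free. The exactness of $P \xrightarrow{\epsilon} E \xrightarrow{\eta} E'$ yields $\coker \epsilon = E/\ker \eta \cong \im \eta \subseteq E'$, so $\coker \epsilon$ inherits torsion-freeness from $E'$.

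For $(2) \Rightarrow (3)$, the plan is to embed $M := \coker \epsilon$ into a free $R$-module $E'$ of finite rank; the desired map $E \to E'$ is then the composition $E \twoheadrightarrow M \hookrightarrow E'$, whose kernel is exactly $\im \epsilon$. The key classical input is that over a Noetherian integral domain every finitely generated torsion-free module embeds into a free module of finite rank. I would supply the short standard argument: $R$ is Noetherian by the Hilbert basis theorem applied to the Laurent polynomial ring over a field, and $M$ is finitely generated since $E$ has finite rank $t$. Torsion-freeness makes $M \to M \otimes_R K$ injective, where $K$ denotes the fraction field of $R$, and the right-hand side is a finite-dimensional $K$-vector space, say of dimension $n$. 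Fixing a $K$-basis and rescaling it by a common denominator of the coordinates of a finite generating set of $M$ forces those generators, hence all of $M$, to lie inside $R^n \subset K^n$. Setting $E' := R^n$ completes the construction. The only substantive step is this embedding $M \hookrightarrow R^n$, which is a standard fact about finitely generated torsion-free modules over Noetherian domains (the content of the cited result in Bruns--Vetter); the remaining bookkeeping is entirely formal.
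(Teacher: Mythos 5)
Your proof is correct and has the same overall skeleton as the paper's: $(1)\Leftrightarrow(2)$ by unwinding definitions, and $(2)\Leftrightarrow(3)$ via the observation that exactness of $P\to E\to E'$ is the same as embedding $\coker\epsilon$ into the free module $E'$. Where you diverge is in how that embedding is produced. The paper invokes the cited Bruns--Vetter result (and the fact from Lemma~\ref{lem:coker-epsilon-resolution-length-D} that $\coker\epsilon$ admits a finite free resolution), a statement that holds over general Noetherian rings; you instead give the elementary clearing-denominators argument inside $M\otimes_R K$, which relies on $R$ being a Noetherian \emph{domain}. For the Laurent polynomial ring $R=\FF_2[x_1^{\pm1},\dots,x_D^{\pm1}]$ this is perfectly adequate and is arguably more self-contained; the paper's route is the one that survives the generalization from $\FF_p$ to coefficient rings that are not domains, and it is also what the finite-free-resolution bookkeeping elsewhere in the paper naturally supplies. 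One small caveat: your parenthetical ``the content of the cited result in Bruns--Vetter'' slightly misattributes the reference --- the cited result is the more general finite-free-resolution statement rather than the domain-specific embedding lemma you actually prove, though the mathematics you write is sound.
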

\begin{proof}
The first two are equivalent by definition.
The sequence above is exact if and only if
$ 0 \to \coker{\epsilon} \to E' $
is exact.
Since $\coker \epsilon$ has a finite free resolution,
the second is equivalent to the third.
\end{proof}
The following theorem states that the fractal operators produces all nontrivial charges.
\begin{theorem}
Suppose $\Lambda = \mathbb{Z}^D$ is the translation-group of the underlying lattice.
The set of all charges modulo trivial ones is in one-to-one correspondence
with the torsion submodule of $\coker \epsilon$.
\label{thm:charge-equals-torsion}
\end{theorem}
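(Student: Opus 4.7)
The claim is that the image of $\tilde\epsilon(\tilde P)\cap E$ in $\coker\epsilon=E/\epsilon(P)$ coincides with the torsion submodule $T(\coker\epsilon)$. The plan is to reduce both inclusions to linear algebra over the fraction field $K=\mathrm{Frac}(R)$, using the fact that $\tilde R$ is an injective $R$-module that contains $K$ as an $R$-submodule.

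The key input is ring-theoretic. As an $R$-module, $\tilde R$ is isomorphic to $\mathrm{Hom}_{\FF_2}(R,\FF_2)$ (possibly after twisting by the antipode automorphism). Because $\FF_2$ is a field, the adjunction $\mathrm{Hom}_R\bigl(M,\mathrm{Hom}_{\FF_2}(R,\FF_2)\bigr)\cong\mathrm{Hom}_{\FF_2}(M,\FF_2)$ is exact in $M$, so $\tilde R$ is an injective $R$-module. Since $R$ is a domain, $R\hookrightarrow K$ is an injection of $R$-modules, and by injectivity the natural inclusion $R\hookrightarrow\tilde R$ extends to an $R$-linear map $\iota:K\to\tilde R$. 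This $\iota$ is automatically injective: if $\iota(a/f)=0$ with $f\ne 0$, then $a=\iota(a)=f\,\iota(a/f)=0$ in $\tilde R$, whence $a=0$ in $R$.

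For the direction ``torsion implies charge,'' suppose $fe=\epsilon(p)$ for some $0\ne f\in R$ and $p\in P$. Over $K$ this reads $e=\epsilon_K(p/f)$ in $K^t$, where $\epsilon_K$ denotes the $K$-linear extension of the matrix $\epsilon$. Set $\tilde p$ to be the image of $p/f\in K^{2q}$ under the componentwise map $\iota:K^{2q}\to\tilde R^{2q}=\tilde P$. Because $\epsilon$ has entries in $R$ and $\iota$ is an $R$-linear map extending $R\hookrightarrow\tilde R$, applying $\iota$ componentwise commutes with multiplication by $\epsilon$, so $\tilde\epsilon(\tilde p)=\iota(\epsilon_K(p/f))=\iota(e)=e$; hence $e\in\tilde\epsilon(\tilde P)\cap E$.

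For the reverse direction, suppose $e=\tilde\epsilon(\tilde p)$ for some $\tilde p\in\tilde P$. It suffices to show $e\in\im\epsilon_K\subseteq K^t$, since clearing denominators then yields $fe\in\im\epsilon$ for some $0\ne f\in R$. Suppose for contradiction that $e\notin\im\epsilon_K$. Because $K$ is a field, there exists a row vector $r\in K^t$ with $r^T\epsilon_K=0$ and $r^T e\ne 0$; multiplying $r$ by a common denominator we may take $r\in R^t$, so $r^T\epsilon=0$ as a $1\times 2q$ matrix over $R$ while $0\ne r^T e\in R\subseteq\tilde R$. But the identity $r^T\epsilon=0$ remains valid over $\tilde R$, yielding $0\ne r^T e=r^T\tilde\epsilon(\tilde p)=0$ in $\tilde R$, a contradiction. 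The main obstacle is establishing the two structural properties of $\tilde R$ (injectivity and the containment of $K$); once these are in hand, both directions come down to the principle that an $R$-module map detects its image up to torsion after passing to the fraction field.
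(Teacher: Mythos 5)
Your proof is correct, and it is genuinely different from the one in the paper. You identify $\tilde R$ with the co-induced module $\mathrm{Hom}_{\FF_2}(R,\FF_2)$, deduce that $\tilde R$ is an injective $R$-module from the tensor--hom adjunction over the base field, and then use the resulting embedding $\iota: K = \mathrm{Frac}(R) \hookrightarrow \tilde R$ to reduce both inclusions to the fact that an $R$-linear map between free modules detects its image up to torsion after passing to $K$. This is shorter and more abstract than the paper's proof. The paper splits the ``charge implies torsion'' direction into a torsion-free case (handled with a finite free resolution via Proposition~\ref{prop:nofractal-torsionless}) and a reduction to it by adjoining generators of the torsion submodule; and it handles ``torsion implies charge'' constructively, by showing after a linear change of coordinates on $\mathbb{Z}^D$ that a fractal generator $f$ may be assumed to have nonnegative exponents with lowest term $1$, so that the sequence $f, f^{2}f, f^{4}f^{2}f, \ldots$ converges in a formal power series subring of $\tilde R$. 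What your argument buys is brevity, independence from Gr\"obner/resolution machinery, and immediate generality (it works for any $\FF_p$-algebra domain $R$ with $\tilde R \cong \mathrm{Hom}_{\FF_p}(R,\FF_p)$). What it loses is the explicit self-similar operator $f^{2^n-1}p$ that witnesses the charge; that explicit construction is not just a proof device in the paper --- it is reused directly in Theorem~\ref{thm:logarithmic-energy-barrier-for-fractal-operator} to extract the logarithmic energy barrier, and it underlies the geometric picture of charges as vertices of fractal operators. So your proof establishes Theorem~\ref{thm:charge-equals-torsion}, but does not by itself supply the local process needed for the energy-barrier result.
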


To illustrate the idea of the proof,
consider a (classical) excitation map%
\footnote{
It is classical because it is not derived from an interesting quantum commuting Pauli Hamiltonian.
For a classical Hamiltonian where all terms are tensor products of $\sigma_z$,
there is no need to keep a $t \times 2q$ matrix $\epsilon$
since the right half $\epsilon$ is zero.
Just the left half suffices, which can be arbitrary
since the commutativity equation $\epsilon \lambda \epsilon^\dagger =0$ is automatic.
Nevertheless, the excitations and fractal operators are relevant.
Our proof of the theorem is not contingent on the commutativity equation.
}
\[
 \phi =
 \begin{pmatrix}
  1+x+y & 0   \\
     0  & 1+x \\
     0  & 1+y \\
 \end{pmatrix}
 : R^2 \to R^3 .
\]
A nonzero element $f=1+x+y \in R$ is a fractal generator
since $\begin{pmatrix} 1 & 0 & 0 \end{pmatrix}^T \notin \im \phi$ 
and $(1+x+y)\begin{pmatrix} 1 & 0 & 0 \end{pmatrix}^T \in \im \phi$;
$f$ is a zero-divisor on a torsion element $\begin{pmatrix} 1 & 0 & 0 \end{pmatrix}^T \in \coker \epsilon$.
It is indeed a charge since 
$\phi(\tilde f \begin{pmatrix} 1 & 0 \end{pmatrix}^T) = \begin{pmatrix} 1 & 0 & 0 \end{pmatrix}^T$ where
\[
 \tilde f = \lim_{n \to \infty} f^{2^n-1} \in \FF[[x,y]]
\]
is a formal power series, which can be viewed as an element of $\tilde R$.
The limit is well-defined since $f^{2^{n+1}-1} - f^{2^n -1}$ only contains terms of degree $2^n$ or higher.
That is to say, only higer order `corrections' are added and lower order terms are not affected.
Of course, there is no natural notion of smallness in the ring $\FF[x,y]$.
But one can formally call the members of the ideal power $(x,y)^n \subseteq \FF[x,y]$ \emph{small}.
It is legitimate to introduce a topology in $R$ defined by the \emph{ever shrinking} ideal powers $(x,y)^n$.
They play a role analogous to the ball of radius $1/n$ in a metric topological space.
The \emph{completion} of $\FF[x,y]$
where every Cauchy sequence with respect to this topology is promoted to a convergent sequence,
is nothing but the formal power series ring $\FF[[x,y]]$.
For a detailed treatment, see Chapter 10 of \cite{AtiyahMacDonald}.

The completion and the limit only make sense in the polynomial ring $\FF[x,y]$.
The reason $\tilde f$ is well-defined is that $f \in \FF[x^{\pm 1},y^{\pm 1}]$ is accidentally expressed
as a usual polynomial with lowest order term $1$.
In the proof below we show that every fractal generator can be expressed in this way.
Hence, a torsion element of $\coker \epsilon$ is really a charge.

\begin{proof}
For a module $M$, let $T(M)$ denote the torsion submodule of $M$:
\[
 T(M) = \{ m \in M \ |\  \exists \, r \in R \setminus \{0\} \text{ such that } rm = 0 \}
\]

Suppose first that $T(\coker \epsilon) = 0$.
We claim that in this case there is no nontrivial charge.
Let $e = \epsilon( \tilde p ) \in E$ be a charge, where $\tilde p \in \tilde P$.
By Proposition~\ref{prop:nofractal-torsionless}
we have an exact sequence of finitely generated free modules
$ P \xrightarrow{\epsilon} E \xrightarrow{\epsilon_1} E_1$.
Since the matrix $\epsilon_1$ is over $R$,
the complex extends to a complex of modules of tuples of functions on the translation-group.
\[
 \tilde P \xrightarrow{\epsilon} \tilde E \xrightarrow{\epsilon_1} \tilde E_1
\]
(This extended sequence may not be exact.)
Then, $\epsilon_1(e) = \epsilon_1( \epsilon( \tilde p ) ) = 0$
since $\epsilon_1 \circ \epsilon = 0$ identically.
But, $e \in E$, and therefore, $e \in \ker \epsilon_1 \cap E = \epsilon(P)$.
It means that $e$ is a trivial charge, i.e., $e$ maps to zero in $\coker \epsilon$,
and proves the claim.%
\footnote{
One may wish to consider $\epsilon$ to consist of the second column of $\phi$ above.
Then $\epsilon_1 = \begin{pmatrix} 1+y & -1-x \end{pmatrix}$.
}

Now, allow $\coker (P \xrightarrow{\epsilon} E)$ to contain torsion elements.
$Q = (\coker \epsilon) / T(\coker \epsilon)$ is torsion-free,
and is finitely presented as $Q = \coker( \epsilon' : P' \to E )$
where $P'$ is a finitely generated free module.
In fact, we may choose $\epsilon'$ by adding more columns representing the generators of
the torsion submodule of $\coker \epsilon$ to the matrix $\epsilon$.
\[
 \epsilon = \begin{pmatrix} \# & \# \\ \# & \# \end{pmatrix} \quad \quad
 \epsilon'= \begin{pmatrix} \# & \# & * & * \\ \# & \# & * & * \end{pmatrix}
\]
Then, $P$ can be regarded as a direct summand of $P'$.%
\footnote{
If we take $\epsilon = \phi$ above,
then
\[
 \epsilon' =  \begin{pmatrix}
  1+x+y & 0   & 1 \\
     0  & 1+x & 0 \\
     0  & 1+y & 0 \\
 \end{pmatrix}.
\]
Note that $P' = P \oplus R$.
}

Let $e = \epsilon( \tilde p ) \in E$ be any charge.
Since the matrix $\epsilon'$ contains $\epsilon$ as submatrix,
we may write $e = \epsilon'(\tilde p) \in E$.
Since $T(\coker \epsilon')=0$,
we see by the first part of the proof that $e = \epsilon'( p' )$ for some $p' \in P'$.
Then, $e$ maps to zero in $Q$, and it follows that $e$ maps into $T(\coker \epsilon)$
in $\coker \epsilon$. 
In other words, the equivalence class of $e$ modulo trivial charges 
is a torsion element of $\coker \epsilon$.

Conversely,
we have to prove that for every element $e \in E$ 
such that $fe = \epsilon(p)$ for some $f \in R \setminus \{0\}$ and $p \in P$,
there exists $\tilde p \in \tilde P$ such that $ e = \epsilon( \tilde p )$.
Here, $\tilde P$ is the module of all $2q$-tuples of 
$\FF_2$-valued functions on the translation-group.
Consider the lexicographic total order on $\mathbb{Z}^D$ in which $x_1 \succ x_2 \succ \cdots \succ x_D$.
It induces a total order on the monomials of $R$.
Choose the least term $f_0$ of $f$.
By multiplying $f_0^{-1}$, we may assume $f_0 = 1$.%
\footnote{
If $D=1$, $f$ would be a polynomial of nonnegative exponents with the lowest order term being 1.
If $D=2$ and $f = y + y^2 + x$, then the least term is $y$.
After multiplying $f_0^{-1}$, it becomes $1+y+xy^{-1}$.
}

We claim that the sequence
\begin{equation}
 f,\ \  f^2 f,\ \  f^4 f^2 f,\ \  \ldots,\ \  f^{2^n}f^{2^{n-1}} \cdots f^2 f,\ \  \ldots
\label{eq:ftn-series}
\end{equation}
converges to $\tilde f \in \tilde R$,
where $\tilde R$ is the set of all $\FF_2$-valued functions on $\Lambda$.
Given the claim,
since $f^{2^n} e = e + (f-1)^{2^n}e = \epsilon(f^{2^{n-1}} \cdots f^2 f p)$ where $p \in P$,
we conclude that $e = \epsilon( \tilde f p )$ is a charge.

If $f$ is of nonnegative exponents, and hence $f \in S=\FF_2[x_1,\ldots,x_D]$,
then the claim is clearly true.
Indeed, the positive degree terms of $f^{2^n} = 1 + (f-1)^{2^n}$ 
are in the ideal power $(x_1,\ldots,x_D)^{2^p} \subset S$.
Therefore, the sequence Eq.~\eqref{eq:ftn-series} converges
in the formal power series ring $\FF_2[[x_1,\ldots,x_D]]$,
which can be regarded as a subset of $\tilde R$.
If $f$ is not of nonnegative exponents, one can introduce the following change of basis
of the lattice $\mathbb{Z}^D$ such that $f$ becomes of nonnegative exponents.
In other words, the sequence Eq.~\eqref{eq:ftn-series} is in fact contained in a ring 
that is isomorphic to the formal power series ring, where the convergence is clear.

For any nonnegative integers $m_1,\ldots,m_{D-1}$, define a linear transformation
\[
\zeta_m =
\zeta_{(m_1,m_2,\ldots,m_{D-1})} :
\begin{pmatrix}
 a_1 \\ a_2 \\ \vdots \\ a_D
\end{pmatrix}
\mapsto
 \begin{pmatrix}
  a'_1 \\ a'_2 \\ \vdots \\ a'_D
 \end{pmatrix}
=
\begin{pmatrix}
 1      & 0      & 0      & \cdots & 0       \\
 m_1    & 1      &  0     &        & 0       \\
 m_1    & m_2    &  1     &        & 0       \\
 \vdots &        &        & \ddots & \vdots  \\
 m_1    & m_2    & \cdots & m_{D-1}& 1       \\
\end{pmatrix}
\begin{pmatrix}
 a_1 \\ a_2 \\ \vdots \\ a_D
\end{pmatrix}
\text{ on } \mathbb{Z}^D .
\]
$\zeta_m$ induces the map $x_1^{a_1} \cdots x_D^{a_D} \mapsto x_1^{a'_1} \cdots x_D^{a'_D}$ on $R$.
Let $u=x_1^{a_1} \cdots x_D^{a_D}$ be an arbitrary term of $f$ other than $1$, so $u \succ 1$.
For the smallest $i \in \{1,\ldots,D\}$ such that $a_i \neq 0$,
one has $a_i > 0$ due to the lexicographic order.
Hence, if we choose $m_i$ large enough and set $m_j=0\,(j\neq i)$,
then $\zeta_m (u)$ has nonnegative exponents.
Since any $\zeta_{m}$ maps a nonnegative exponent term to a nonnegative exponent term,
and there are only finitely many terms in $f$,
it follows that
there is a finite composition $\zeta$ of $\zeta_m$'s 
which maps $f$ to a polynomial of nonnegative exponents.%
\footnote{
For our previous example $f = 1+y+xy^{-1}$,
one takes $\zeta : x^i y^j \mapsto x^i y^{i+j}$, so $\zeta(f) = 1 + y + x$.
}
\end{proof}

Since a nontrivial charge $v$ has finite {\em size} anyway
(the maximum exponent minus the minimum exponent of the Laurent polynomials 
in the $t \times 1$ matrix $v$),
we can say that the charge $v$ is {\em point-like}.
Moreover, we shall have a description how the point-like charge
can be separated from the other by a local process.
By the {\em local process} we mean a sequence of Pauli operators $[[o_1, \ldots, o_n]]$
such that $o_{i+1} - o_i$ is a monomial.
The number of excitations, i.e., {\em energy}, at an instant $i$ 
will be the number of terms in $\epsilon(o_i)$.

\begin{theorem}
\cite{NewmanMoore1999Glassy}
If there is a fractal generator of a code Hamiltonian,
then for all sufficiently large $r$,
there is a local process starting from the identity
by which a point-like charge is separated from the other excitations by distance at least $2^r$.
One can choose the local process in such a way that at any intermediate step
there are at most $c r$ excitations for some constant $c$ independent of $r$.
\label{thm:logarithmic-energy-barrier-for-fractal-operator}
\end{theorem}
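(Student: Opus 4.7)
The plan is to truncate the formal power series $\tilde{f}$ from the proof of Theorem~\ref{thm:charge-equals-torsion} at a finite stage $n$, producing a finite-support Pauli operator whose excitation is a bounded number of well-separated clusters, and to schedule its single-qubit factors in an order that keeps the intermediate excitation count of order $n$. I would first fix $v\in E\setminus\im\epsilon$ with $fv=\epsilon(p)$ and, as in the proof of Theorem~\ref{thm:charge-equals-torsion}, apply a lattice change of basis (which distorts distances on $\mathbb{Z}^D$ only by a constant factor) so that $f=1+u_1+\cdots+u_k$ with each $u_i$ a distinct monomial of strictly positive total degree. The decisive characteristic-$2$ identity $f^{2^n}=1+u_1^{2^n}+\cdots+u_k^{2^n}$ then permits writing $W_n:=f^{2^n-1}$, with $fW_n=f^{2^n}$ and $W_{n+1}=f^{2^n}W_n$.

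Next I would define local processes $\pi_n$ inductively: $\pi_0$ applies, in an arbitrary single-qubit order, the Pauli factors that compose $p$, and $\pi_{n+1}$ is the concatenation of $\pi_n$ with its $u_i^{2^n}$-translates for $i=1,\ldots,k$. The translate of a local process by a monomial is again a local process, so each $\pi_n$ is admissible, and a telescoping computation shows that $\pi_n$ terminates at the operator $W_n p$ with excitation $f^{2^n}v=(1+\sum_i u_i^{2^n})v$: this consists of $k+1$ translates of $v$ at the positions $1,u_1^{2^n},\ldots,u_k^{2^n}$, pairwise at distance at least $2^n\min_i\deg(u_i)-C(v)$, where $C(v)$ depends only on the absolute degree of $v$. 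Thus for $n\ge r+c_0$ the translate of $v$ at the origin is a point-like charge separated from the others by distance $\ge 2^r$.

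Finally, for the energy bound, let $T(n)$ denote the maximum number of terms of $\epsilon(o)$ over intermediate operators $o$ of $\pi_n$, and let $N(v)$ denote the number of terms of $v$. During the $i$-th translated copy of $\pi_n$ inside $\pi_{n+1}$, the running operator is $\bigl(\sum_{j=0}^{i-1}u_j^{2^n}\bigr)W_n p+u_i^{2^n}o'$ for some intermediate $o'$ of $\pi_n$, and its excitation has at most $i(k+1)N(v)+T(n)\le(k+1)^2 N(v)+T(n)$ terms, since the term count of a product of Laurent polynomials is at most the product of the term counts. The resulting recursion yields $T(n)=O(n)$, and choosing $n=r+c_0$ completes the proof. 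The hard part will be verifying that the accumulated term count $\bigl(\sum_{j=0}^{i-1}u_j^{2^n}\bigr)f^{2^n}v$ stays bounded independently of $n$: this rests crucially on the characteristic-$2$ identity $f^{2^n}=1+\sum_i u_i^{2^n}$, which is precisely what singles out the doubling iterate $f\mapsto f^{2^n}$ as the correct scaffold for the induction.
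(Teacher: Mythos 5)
Your proposal is correct and follows essentially the same route as the paper's proof: the recursive recipe $\pi_{n+1}$ built by concatenating translated copies of $\pi_n$ is exactly the paper's $s_{i+1}=(f_1^{2^i}\cdot s_i)\circ\cdots\circ(f_l^{2^i}\cdot s_i)$, the telescoping to $\int\pi_n=f^{2^n-1}p$ and excitation $f^{2^n}v$ is identical, and the affine-in-$n$ energy accounting (your $T(n+1)\le T(n)+(k+1)^2 N(v)$) matches the paper's $\Delta(r+1)\le\Delta(r)+el$. One remark: the lattice change of basis you import from Theorem~\ref{thm:charge-equals-torsion} to normalize $f=1+u_1+\cdots+u_k$ is not needed here and the paper omits it. Writing $f=\sum_{i=1}^l f_i$ with arbitrary monomials $f_i$ suffices: $f^{2^n}=\sum_i f_i^{2^n}$ in characteristic~$2$, so $f^{2^n}v$ is already $l$ translates of $v$ with pairwise separation growing like $2^n$, and each translate $f_i^{2^n}\cdot s_n$ is automatically a valid local process. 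The normalization to a constant term only serves, in the other theorem, to make the infinite limit $\tilde f$ land in a formal power series ring; for the finite truncation used here it buys nothing, and the "hard part" you flag (boundedness of the accumulated term count) is already discharged by the same term-count estimate you give, with no change of basis required.
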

For notational simplicity, we denote the local process $[[o_1,\ldots,o_n]]$
by
\[
s = [o_1,~ o_2-o_1,~ o_3 - o_2, \ldots, o_n - o_{n-1}].
\]
It is a {\em recipe} to construct $o_n$, consisted of single qubit operators.
$o_n$ can be expressed as ``$o_n = \int s$'', the sum of all elements in the recipe.
\begin{proof}
Let $f$ be a fractal generator, 
and put $fv = \epsilon (p)$
where $v \notin \im \epsilon$.
We already know $v$ is a point-like nontrivial charge.
Write
\[
 p = \sum_{i=1}^n p_i, \quad f = \sum_{i=1}^l f_i
\]
where each of $p_i$ and $f_i$ is a monomial.
Let $s_0 = [ 0, p_1, p_2, \ldots, p_{n}]$ be a recipe for constructing $p$; $\int s_0 = p$.
Given $s_i$, define inductively
\[
 s_{i+1} = (f_1^{2^i} \cdot s_i) \circ (f_2^{2^i} \cdot s_i) \circ \cdots \circ (f_l^{2^i} \cdot s_i) 
\]
where $\circ$ denotes the concatenation and $f_i \cdot [u_1,\ldots,u_{n'}] = [f_i u_1, \ldots, f_i u_{n'}]$.
It is clear that $s_{i+1}$ constructs the Pauli operator
\begin{align*}
\int s_{r} 
= f^{2^{r-1}} \int s_{r-1} = f^{2^{r-1}} f^{2^{r-2}} \int s_{r-2} 
= f^{2^{r-1} + 2^{r-2} + \cdots + 1} \int s_0 = f^{2^r-1} p 
\end{align*}
whose image under $\epsilon$ is $f^{2^r}v$. 
Thus, if $r$ is large enough so that $2^r$ is greater than the size of $v$,
the configuration of excitations is precisely $l$ copies of $v$.
The distance between $v$'s is at least $2^r$ minus twice the size of $v$.

Therefore, there is a constant $e > 0$ such that for any $r \ge 0$
the energy of $f^{2^r}v \in E$ is $\le e$.
Let $\Delta(r)$ be the maximum energy during the process $s_r$.
We prove by induction on $r$ that 
\[
\Delta(r) \le el (r+1).
\]
When $r = 0$, it is trivial.
In $s_{r+1}$, the energy is $\le \Delta(r)$ until $f_1^{2^r} s_r$ is finished.
At the end of $f_1^{2^r} s_r$, the energy is $\le e $.
During the subsequent $f_2^{2^r} s_r$, the energy is $\le \Delta(r) + e$,
and at the end of $(f_1^{2^r} s_r) \circ (f_2^{2^r} s_r)$, the energy is $\le 2 e$.
During the subsequent $f_j^{2^r} s_r$, the energy is $\le \Delta(r) + je$.
Therefore, 
\[
\Delta(r+1) \le \Delta(r) + el \le el(r+2)
\]
by the induction hypothesis.
\end{proof}

Fractal operators appear in Newman-Moore model~\cite{NewmanMoore1999Glassy} where classical spin glass is discussed.
Their model has generating matrix $\sigma = \begin{pmatrix} 0 & 1+x+y \end{pmatrix}^T$.
The theorem is a simple generalization of Newman and Moore's construction.
Another explicit example of fractal operators in a quantum model can be found in \cite{BravyiHaah2011Energy}.

Note that the notion of fractal generators includes that of `string operators'.
In fact, a fractal generator that contains exactly two terms
gives a family of nontrivial {\em string segments} of unbounded length,
as defined in \cite{Haah2011Local}.

Below, we point out a couple of sufficient conditions 
for nontrivial charges, or equivalently, fractal generators to exist.
\begin{prop}
For code Hamiltonians,
the existence of a fractal generator is a property of an equivalence class of Hamiltonians
in the sense of Definition~\ref{defn:equiv-H}.
\label{prop:fractal-property-of-eq-class}
\end{prop}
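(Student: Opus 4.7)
The plan is to reduce the claim to the module-theoretic characterization furnished by Theorem~\ref{thm:charge-equals-torsion}: a fractal generator exists if and only if the torsion submodule $T(\coker \epsilon)$ is nonzero. It therefore suffices to verify that each of the three elementary moves in Definition~\ref{defn:equiv-H} preserves the vanishing or nonvanishing of $T(\coker \epsilon)$.

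For a symplectic transformation induced by $U$, I would write $\sigma' = U\sigma$ and use $U^\dagger \lambda_q = \lambda_q U^{-1}$ (which follows from $U^\dagger \lambda_q U = \lambda_q$) to obtain $\epsilon' = (\sigma')^\dagger \lambda_q = \epsilon U^{-1}$. Since $U^{-1}$ is an automorphism of $P$, we have $\im \epsilon' = \im \epsilon$, so the cokernel, and hence its torsion, is literally unchanged. For tensoring an ancilla, the new $\epsilon$ is the direct sum of the original $\epsilon$ with the surjection $R^2 \to R$ sending $(a,b)^T \mapsto a$; the ancillary summand contributes a zero cokernel, so $\coker \epsilon$ is preserved up to isomorphism, and so is its torsion submodule.

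The substantive step is coarse-graining, where $R = \FF_2[\Lambda]$ is replaced by $R' = \FF_2[\Lambda']$ for a finite-index subgroup $\Lambda' \le \Lambda$. The underlying abelian group $\coker \epsilon$ is unchanged; only the acting ring shrinks. Here my plan is to prove a small lemma: for any $R$-module $M$ one has $T_R(M) = T_{R'}(M)$ as subsets. The inclusion $T_{R'}(M) \subseteq T_R(M)$ is immediate from $R' \subseteq R$. For the reverse inclusion, I would use that both $R$ and $R'$ are integral domains (Laurent polynomial rings over $\FF_2$) and that $R$ is a free $R'$-module of rank $[\Lambda:\Lambda']$, so $R \supseteq R'$ is a finite integral extension of domains. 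For any $0 \ne r \in R$, take a monic relation $r^n + a_{n-1}r^{n-1} + \cdots + a_0 = 0$ over $R'$ of minimal degree; minimality together with the fact that $R$ is a domain forces $a_0 \ne 0$, and rearrangement gives $a_0 = -r(r^{n-1} + \cdots + a_1) \in (r)$. Hence every nonzero ideal of $R$ meets $R'$ nontrivially; applied to $\ann_R(m)$ for an $R$-torsion element $m$, this produces a nonzero element of $\ann_{R'}(m)$, as required.

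The only real obstacle is this coarse-graining lemma, since the other two moves are immediate from the matrix description of $\epsilon$. Once nonvanishing of $T(\coker \epsilon)$ is confirmed to be invariant under all three moves, Theorem~\ref{thm:charge-equals-torsion} completes the proof.
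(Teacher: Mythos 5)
Your reduction to the vanishing of $T(\coker\epsilon)$, and your handling of symplectic transformations ($\epsilon' = \epsilon U^{-1}$, so $\im\epsilon$ is unchanged) and of ancillas (direct sum with a trivial complex) are correct. Your coarse-graining argument is actually a genuinely different and arguably cleaner route than the paper's: where the paper invokes the determinant of the multiplication-by-$f$ matrix over $R'$, you prove the stronger statement $T_R(M) = T_{R'}(M)$ by using that $R$ is a finite integral extension of the domain $R'$, so every nonzero ideal of $R$ meets $R'$ nontrivially. That lemma is sound and self-contained.

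However, there is a genuine gap. Definition~\ref{defn:equiv-H} declares two code Hamiltonians equivalent when their \emph{stabilizer modules} become the same under the moves --- it says nothing about their generating matrices. In particular, $H$ and $H'$ with literally the same stabilizer module but different generating maps $\sigma : R^t \to P$ and $\sigma' : R^{t'} \to P$ (say, with redundant columns) are equivalent with no moves at all, and then $E = R^t$ and $E' = R^{t'}$ differ, so $\coker\epsilon$ and $\coker\epsilon'$ are a priori different modules. Your proof tracks a fixed generating matrix through the three moves (e.g.\ you set $\sigma' = U\sigma$ for a symplectic transformation), but the definition only guarantees $U(\im\sigma) = \im\sigma'$, not $U\sigma = \sigma'$. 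You therefore need the additional step the paper does first: if $\im\sigma = \im\sigma'$, then $\epsilon' = B\epsilon$ and $\epsilon = B'\epsilon'$ for some matrices $B, B'$; a short computation using $B'B\epsilon = \epsilon$ and torsion-freeness of $E$ shows that $B$ carries a torsion element of $\coker\epsilon$ to a nonzero torsion element of $\coker\epsilon'$, and symmetrically, so $T(\coker\epsilon) \ne 0$ iff $T(\coker\epsilon') \ne 0$. Without this step your argument does not discharge the full content of the equivalence relation.
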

\begin{proof}
Suppose $\im \sigma = \im \sigma'$.
Each column of $\sigma'$ is a $R$-linear combination of those of $\sigma$, and vice versa.
Thus, there is a matrix $B$ and $B'$ such that $\epsilon' = B\epsilon$ and $\epsilon = B' \epsilon'$.
$BB'$ and $B'B$ are identity on $\im \epsilon'$ and $\im \epsilon$ respectively.
In particular, $B'$ and $B$ are injective on $\im \epsilon'$ and $\im \epsilon$ respectively.
Suppose $f$ is a fractal generator for $\epsilon$, i.e., $fv = \epsilon p \ne 0$.
Then, $0 \ne Bfv = f B v = B\epsilon(p) = \epsilon'(p)$.
If $Bv \in \im \epsilon'$, then $v = B'Bv \in \im \epsilon$, a contradiction.
Therefore, $f$ is also a fractal generator for $\epsilon'$.
By symmetry, a fractal generator for $\epsilon'$ is a fractal generator for $\epsilon$, too.

Suppose $R' \subseteq R$ is a coarse-grained base ring.
If $\coker \epsilon$ is torsion-free as an $R$-module,
then so it is as an $R'$-module.
If $f \in R$ is a fractal generator,
the determinant of $f$ as a matrix over $R'$ is a fractal generator.

A symplectic transformation or tensoring ancillas
does not change $\coker \epsilon$.
\end{proof}

\begin{prop}
For any ring $S$ and $t \ge 1$,
if $0 \to S^t \to S^{2t} \xrightarrow{\phi} S^t$ is exact and $I(\phi) \ne S$,
then $\coker \phi$ is not torsion-free.
In particular,
for a degenerate exact code Hamiltonian,
if $\sigma$ is injective, then there exists a fractal generator.
\label{prop:injective-sigma-implies-fractal}
\end{prop}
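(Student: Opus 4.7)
The plan is to apply Buchsbaum--Eisenbud (Proposition~\ref{prop:exact-sequence}) to pin down $\rank \phi$, and then to compare $I(\phi)$ with the annihilator of $\coker \phi$ via the $0$-th Fitting ideal. Write $\psi$ for the left-hand injection $S^t \hookrightarrow S^{2t}$. Exactness and Proposition~\ref{prop:exact-sequence} give $\rank \psi = t$ and $\rank \phi + \rank \psi = 2t$, hence $\rank \phi = t$. Consequently $I(\phi) = I_t(\phi)$, and by definition of rank this ideal is nonzero.

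Fitting's lemma identifies $I_t(\phi)$ with $\mathrm{Fitt}_0(M)$ where $M := \coker \phi$, and the standard containment $\mathrm{Fitt}_0(M) \subseteq \ann(M)$ then produces a nonzero annihilator of $M$. I rule out $M = 0$ as follows: if $\phi$ were surjective, a preimage $V: S^t \to S^{2t}$ of a basis of $S^t$ would satisfy $\phi V = \id_t$, and the Cauchy--Binet formula places $1 = \det(\phi V)$ in $I_t(\phi)$, forcing $I(\phi) = S$ and contradicting the hypothesis. Therefore $M \ne 0$, and any nonzero $m \in M$ is annihilated by a nonzero element of $S$ --- that is, a torsion element in the sense of this paper. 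So $\coker \phi$ is not torsion-free.

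For the ``in particular'' statement, Remark~\ref{rem:rank-sigma-m} gives $\rank \sigma = q$, so injectivity of $\sigma$ forces $G$ to have rank $t = q$; combined with the exactness of the code Hamiltonian, we obtain a complex $0 \to R^q \xrightarrow{\sigma} R^{2q} \xrightarrow{\epsilon} R^q$ of the shape demanded by the general claim. Since $\epsilon = \sigma^\dagger \lambda_q$ and the antipode $r \mapsto \bar r$ is a ring automorphism (with $\lambda_q$ invertible), one has $I_q(\epsilon) = \overline{I_q(\sigma)}$; this is a proper ideal iff $I_q(\sigma)$ is, which is precisely the degeneracy hypothesis. Applying the general statement to $\phi = \epsilon$ yields a torsion element of $\coker \epsilon$, i.e.\ a fractal generator. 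The only nontrivial ring-theoretic input is the inclusion $\mathrm{Fitt}_0(M) \subseteq \ann(M)$ together with the ``surjective $\Leftrightarrow$ unit top-determinantal-ideal'' equivalence; both are classical and should be invoked by name rather than reproved.
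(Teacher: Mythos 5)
Your proof is correct and follows essentially the same route as the paper: Buchsbaum--Eisenbud to compute $\rank\phi = t$, Fitting's lemma to produce a nonzero proper annihilator of $\coker\phi$, and then Remark~\ref{rem:rank-sigma-m} plus $I_q(\epsilon)=\overline{I_q(\sigma)}$ for the specialization. The only genuine addition is that you make explicit, via the Cauchy--Binet argument, why $\coker\phi\ne 0$; the paper leaves this implicit in the one-line assertion that $0\ne\ann\coker\phi\ne S$ (using that $I_t(\phi)=S$ iff $\coker\phi=0$), but your version is a fine and equally classical way to close that small gap.
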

\begin{proof}
By Proposition~\ref{prop:exact-sequence}, $\rank \phi = t$.
Since $0 \subsetneq I_t(\phi) \subsetneq S$ is the initial Fitting ideal,
we have $0 \ne \ann \coker \phi \ne S$. That is, $\coker \phi$ is not torsion-free.

For the second statement, set $S=R$.
If $\sigma$ is injective, we have an exact sequence
\[
 0 \to G \xrightarrow{\sigma} P \xrightarrow{\epsilon} E .
\]
By Remark~\ref{rem:rank-sigma-m}, $t = \rank G = \rank \sigma = \rank \epsilon = q$.
\end{proof}

\begin{prop}
Suppose the characteristic dimension is $D-2$
for a degenerate exact code Hamiltonian.
Then, there exists a fractal generator.
\label{prop:maximal-characteristic-dimension-means-fractal}
\end{prop}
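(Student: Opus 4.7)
The plan is to argue by contradiction via Buchsbaum--Eisenbud. Assume for contradiction that no fractal generator exists. Then Proposition~\ref{prop:nofractal-torsionless} provides a finitely generated free $R$-module $E'$ and a map $\epsilon_1 : E \to E'$ such that
\[
 P \xrightarrow{\epsilon} E \xrightarrow{\epsilon_1} E'
\]
is exact. Combined with the exactness of $G \xrightarrow{\sigma} P \xrightarrow{\epsilon} E$ built into the hypothesis, this gives a four-term complex of finitely generated free $R$-modules that is exact at $P$ and at $E$.

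Next I would extend this complex to the left by iteratively resolving kernels: choose a free $F_3$ surjecting onto $\ker\sigma$, then $F_4$ surjecting onto $\ker(F_3 \to G)$, and so on. Because $R = \FF_2[x_1^{\pm 1}, \ldots, x_D^{\pm 1}]$ is a regular ring of finite global dimension $D$, every finitely generated $R$-module has finite projective dimension, so the process terminates after finitely many steps with an exact sequence of finitely generated free $R$-modules
\[
 0 \to F_n \to \cdots \to F_3 \to G \xrightarrow{\sigma} P \xrightarrow{\epsilon} E \xrightarrow{\epsilon_1} E'.
\]
Numbering the maps from the right so that $\phi_1 = \epsilon_1$, $\phi_2 = \epsilon$, and $\phi_3 = \sigma$, I would invoke the $k=3$ case of Proposition~\ref{prop:exact-sequence}: either $I(\sigma) = R$ or $\codim I(\sigma) \ge 3$.

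To conclude, I extract a contradiction from the two hypotheses. Since the code Hamiltonian is degenerate, Corollary~\ref{cor:unit-characteristic-ideal-means-nondegeneracy} ensures $I_q(\sigma) \ne R$; combined with $\rank \sigma = q$ from Remark~\ref{rem:rank-sigma-m}, this identifies $I(\sigma)$ with $I_q(\sigma)$ and rules out the first alternative, forcing $\codim I_q(\sigma) \ge 3$. However, the hypothesis $d = D-2$ translates directly into $\codim I_q(\sigma) = D - \dim R/I_q(\sigma) = 2$, a contradiction. The only genuinely delicate step in the argument is arranging the leftward extension of the resolution, which rests on the regularity of the Laurent polynomial ring; the rest is bookkeeping of determinantal ideals along a resolution that the earlier structural results of the excerpt make essentially automatic.
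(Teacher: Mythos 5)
Your argument is correct and follows essentially the same route as the paper's own proof: assume no fractal generator, invoke Proposition~\ref{prop:nofractal-torsionless} to get the map $\epsilon_1 : E \to E'$, extend to a finite free resolution, apply Buchsbaum--Eisenbud (Proposition~\ref{prop:exact-sequence}) at position $k=3$ to obtain $\codim I(\sigma) \ge 3$, and contradict the hypothesis $\codim I_q(\sigma) = 2$ using the degeneracy assumption. The one place where you are slightly more terse than the paper is the production of the finite free resolution: the paper invokes Lemma~\ref{lem:coker-epsilon-resolution-length-D}, while you appeal to regularity of $R$; note that passing from ``finite projective dimension'' to ``finite \emph{free} resolution'' over the Laurent polynomial ring requires Quillen--Suslin (as the paper itself uses in proving that lemma), which you should make explicit, and you should also observe that $\rank \sigma = q$ (Remark~\ref{rem:rank-sigma-m}) so that $I(\sigma) = I_q(\sigma)$ -- you do note this at the end, so the argument is complete.
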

\begin{proof}
Suppose on the contrary there are no fractal generators.
Then, by Proposition~\ref{prop:nofractal-torsionless},
\[
 G \xrightarrow{\sigma} P \xrightarrow{\epsilon} E \to E'
\]
is exact for some finitely generated free module $E'$.
Since $\coker \epsilon$ has finite free resolution by Lemma~\ref{lem:coker-epsilon-resolution-length-D},
Proposition~\ref{prop:exact-sequence} implies
$\codim I(\sigma) \ge 3$ unless $I(\sigma) = R$.
But, $\codim I(\sigma) = 2$ and $I(\sigma) \ne R$ by 
Corollary~\ref{cor:unit-characteristic-ideal-means-nondegeneracy}.
This is a contradiction.
\end{proof}

\section{One dimension}

The group algebra $R = \FF_2[x,\bar x]$ for the one dimensional lattice $\mathbb{Z}$
is a Euclidean domain
where the degree of a polynomial is defined to be the maximum exponent minus the minimum exponent.
(In particular, any monomial has degree $0$.)
Given two polynomials $f,g$ in $R$, one can find their $\gcd$ by the Euclid's algorithm.
It can be viewed as a column operation on the $1 \times 2$ matrix $\begin{pmatrix} f & g \end{pmatrix}$.
Similarly, one can find $\gcd$ of $n$ polynomials by column operations on $1 \times n$ matrix
\[
 \begin{pmatrix}
  f_1 & f_2 & \cdots & f_n
 \end{pmatrix} .
\]
The resulting matrix after the Euclid's algorithm will be
\[
\begin{pmatrix}
\gcd(f_1,\ldots,f_n) & 0 & \cdots & 0
\end{pmatrix} .
\]

Given a matrix $\mathbf M$ of univariate polynomials, we can apply Euclid's algorithm
to the first row and first column by elementary row and column operations
in such a way that the degree of $(1,1)$-entry $\mathbf M_{11}$ decreases unless all other
entries in the first row and column are divisible by $\mathbf M_{11}$.
Since the degree cannot decrease forever, this process must end
with all entries in the first row and column being zero except $\mathbf M_{11}$.
By induction on the number of rows or columns,
we conclude that $\mathbf M$ can be transformed to a diagonal matrix
by the elementary row and column operations.
This is known as the Smith's algorithm.

The following is a consequence of the finiteness of the ground field.
\begin{lem}
Let $\FF$ be a finite field and $S = \FF[x]$ be a polynomial ring.
Let $\phi : S \xrightarrow{f(x) \times} S$ be a $1 \times 1$ matrix such that $f(0) \ne 0$.
$\phi$ can be viewed as an $n \times n$ matrix acting on the free $S'$-module $S$
where $S' = \FF[x']$ and $x' =x^n$.
Then, for some $n \ge 1$, the matrix $\phi$ is transformed by elementary row and column operations
into a diagonal matrix with entries $1$ or $x'-1$.
The number of $x'-1$ entries in the transformed $\phi$ is equal to the degree of $f$.
\label{lem:coarse-graining-single-polynomial-in-1D}
\end{lem}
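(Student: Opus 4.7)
The plan is to reduce the problem to the Smith normal form over the PID $S' = \FF[x']$, leveraging the finiteness of $\FF$ to produce a suitable $n$. First, I would observe that $\FF[x]/(f)$ is a finite ring, and since $f(0) \neq 0$ the element $x$ is coprime to $f$, hence a unit in $\FF[x]/(f)$. Let $n$ be the multiplicative order of $x$ in $(\FF[x]/(f))^{*}$; then $x^n \equiv 1 \pmod{f}$, equivalently $f(x)$ divides $x^n - 1 = x' - 1$ in $S$.

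With $x' := x^n$, the ring $S = \FF[x]$ becomes a free $S'$-module of rank $n$ with basis $\{1, x, \ldots, x^{n-1}\}$, so $\phi$ (multiplication by $f$) is represented by an $n \times n$ matrix $M$ over $S'$. The map $\phi$ is injective because $S$ is an integral domain and $f \neq 0$, and its cokernel is $\FF[x]/(f)$, which has $\FF$-dimension $\deg f$. The crucial point is that $x' - 1 = x^n - 1$ lies in $(f)$, so $x' - 1 \in S'$ annihilates $\coker \phi$.

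Now I would invoke the Smith normal form over the PID $S'$: by elementary row and column operations, $M$ is equivalent to a diagonal matrix $\mathrm{diag}(d_1, \ldots, d_n)$ with $d_1 \mid d_2 \mid \cdots \mid d_n$, and $\coker \phi \cong \bigoplus_i S'/(d_i)$. Annihilation by $x' - 1$ forces each $d_i$ to divide $x' - 1$; since $x' - 1$ is a degree-one polynomial over the field $\FF$, it is irreducible in $S'$, so each $d_i$ is either a unit (which may be rescaled to $1$ by a row scaling) or an associate of $x' - 1$. Comparing $\FF$-dimensions, $\dim_{\FF} \bigoplus_i S'/(d_i) = \deg f$ and each $S'/(x' - 1) \cong \FF$ contributes $1$, so exactly $\deg f$ of the $d_i$ equal $x' - 1$ and the remaining $n - \deg f$ equal $1$; the divisibility chain then forces the $x' - 1$ entries to appear after the $1$ entries.

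There is no serious obstacle: the finiteness of $\FF$ delivers an $n$ with $f \mid x^n - 1$, and the rest is a direct application of the structure theorem over the PID $S'$ combined with $\FF$-dimension counting. The only subtle point to articulate carefully is the identification of $S$ as a free $S'$-module via the basis $\{1, x, \ldots, x^{n-1}\}$, after which the constraint that each Smith invariant divides the irreducible polynomial $x' - 1$ does all the work.
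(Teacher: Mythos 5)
Your proof is correct and a touch more streamlined than the paper's. The two places where you diverge are worth noting. First, to produce $n$ with $f \mid x^n - 1$, you take the multiplicative order of $x$ in the finite ring $(\FF[x]/(f))^{\times}$, which handles repeated roots of $f$ automatically; the paper instead passes to the splitting field, observes every root is a root of unity of some order $n'$, and then raises $x^{n'}-1$ to a $p$-th power large enough to beat all multiplicities before taking the smallest qualifying $n$. Your route is cleaner and gets the same (minimal) $n$. Second, for the divisibility constraint on the diagonal entries, you invoke the structure theorem for finitely generated modules over the PID $S'$: $\coker\phi \cong \bigoplus_i S'/(d_i)$, and since $x'-1$ annihilates $\coker\phi$ each $d_i$ divides $x'-1$, hence is a unit or associate of $x'-1$. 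The paper argues more concretely at the matrix level: writing $f(x)g(x) = x^n - 1$, it forms the matrices $A,B$ of $f,g$ over $S'$, notes $AB = (x'-1)\id_n$, diagonalizes $A$ by elementary operations, deduces that the conjugated $B$ is forced to be diagonal because $A'$ has nonzero diagonal entries, and reads off $A'_{ii} \mid x'-1$. Both arguments close with the same $\FF$-dimension count. Your version buys conceptual economy (no auxiliary matrix $B$, no fraction-field reasoning); the paper's version buys explicitness about the elementary operations being performed, which matters for the surrounding algorithmic narrative.
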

\begin{proof}
The splitting field $\tilde \FF$ of $f(x)$ is a finite extension of $\FF$.
Since $\tilde \FF$ is finite, every root of $f(x)$ is a root of $x^{n'} -1$ for some $n' \ge 1$.
Choose an integer $p \ge 1$ such that $2^p$ is greater than any multiplicity of the roots of $f(x)$.
Then, clearly $f(x)$ divides $(x^{n'}-1)^{2^p} = x^{2^p n'} - 1$.
Let $n$ be the smallest positive integer such that $f(x)$ divides $x^n-1$.%
\footnote{This part is well-known, at least in the linear cyclic coding theory~\cite{MacWilliamsSloane1977}.}

Consider the coarse-graining by $S' = \FF[x']$ where $x'=x^n$.
$S$ is a free $S'$-module of rank $n$, and $(f)$ is now an endomorphism of the module $S$
represented as an $n \times n$ matrix.
Since $f(x)g(x) = x^n -1$ for some $g(x) \in \FF[x]$,
we have
\[
 A B = (x'-1) \id_n
\]
where $x' = x^n$, and $A, B$ are the matrix representation of $f(x)$ and $g(x)$ respectively as endomorphisms.
$A$ and $B$ have polynomial entries in variable $x'$.
The determinants of $A,B$ are nonzero for their product is $(x'-1)^n \ne 0$.
Let $E_1$ and $E_2$ be the products of elementary matrices such that $A' = E_1 A E_2$ is diagonal.
Such matrices exist by the Smith's algorithm.
Put $B' = E_2^{-1} B E_1^{-1}$.
Then,
\[
 A' B' = E_1 A E_2 E_2^{-1} B E_1^{-1} = E_1 A B E_1^{-1} = (x'-1)\id_n .
\]
Since $A'$ and $I_n$ are diagonal of non-vanishing entries, $B'$ must be diagonal, too.
It follows that the diagonal entries of $A'$ divides $(x'-1)$; that is, they are $1$ or $x'-1$.

The number of entries $x'-1$ can be counted by considering $S/(f(x))$ as an $\FF$-vector space.
It is clear that $\dim_{\FF} S/(f(x)) = \deg f(x)$.
$S/(f(x)) = \coker \phi$ viewed as a $S'$-module is isomorphic to $S'^n / \im A'$,
the vector space dimension of which is precisely the number of $x'-1$ entries in $A'$.
\end{proof}

For example, consider $f(x) = x^2 + x + 1 \in S = \FF_2[x]$.
It is the primitive polynomial of the field $\FF_4$ of four elements over $\FF_2$.
Any element in $\FF_4$ is a solution of $x^4-x=0$.
Since $f(0) = 1$, we see that $n=3$ is the smallest integer such that $f(x)$ divides $x^n-1$.
As a module over $S' = \FF_2[x^3]$, the original ring $S$ is free with (ordered) basis $\{1, x, x^2\}$.
The multiplication by $x$ on $S$ viewed as an endomorphism has a matrix representation
\[
x = 
 \begin{pmatrix}
  0 & 0 & x^3 \\
  1 & 0 & 0   \\
  0 & 1 & 0   
 \end{pmatrix} .
\]
Thus, $f(x)$ as an endomorphism of $S'$-module $S$ has a matrix representation as follows.
\[
 f(x) =
 \begin{pmatrix}
  1 & x^3 & x^3 \\
  1 & 1   & x^3 \\
  1 & 1   & 1  
 \end{pmatrix} 
\cong
 \begin{pmatrix}
  1 & 0      & 0 \\
  0 & x^3 +1 & 0 \\
  0 & 0      & x^3 +1
 \end{pmatrix}
\]
Here, the second matrix is obtained by row and column operations.
There are 2 diagonal entries $x^3+1$ as $f(x)$ is of degree 2.

\begin{theorem}
If $\Lambda = \mathbb{Z}$,
any system governed by a code Hamiltonian
is equivalent to finitely many copies of Ising models,
plus some non-interacting qubits.
In particular, the topological order condition is never satisfied.
\end{theorem}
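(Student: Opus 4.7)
The plan is to bring $\sigma$ to a canonical direct-sum form by combining two facts about the base ring $R = \FF_2[x, x^{-1}]$: it is a PID (in fact Euclidean), and its symplectic group $\mathrm{Sp}_{2q}(R)$ is generated by the elementary symplectic transformations of Section~\ref{sec:symplectic-transformations}. The target, after a single coarse-graining, is a diagonal matrix whose nonzero entries are $1$ or $x' - 1$, which reads off as non-interacting qubits and Ising bonds.

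I would start by analyzing the submodule $\im \sigma \subseteq P = R^{2q}$. Since $R$ is a PID, $\im \sigma$ is free, and by $\sigma^\dagger \lambda_q \sigma = 0$ it is isotropic with respect to the $R$-valued symplectic form $\omega(v, w) = v^\dagger \lambda_q w$. The elementary divisor theorem provides a basis $u_1, \ldots, u_{2q}$ of $P$ and $d_1 \mid d_2 \mid \cdots \mid d_r$ in $R$ with $\im \sigma = \bigoplus_{i=1}^r R d_i u_i$; isotropy together with $R$ being a domain forces $\omega(u_i, u_j) = 0$ for $i, j \le r$. A Witt--Darboux argument over the PID $R$ then upgrades $\{u_1, \ldots, u_r\}$ to a full symplectic basis $(e_1, f_1, \ldots, e_q, f_q)$ of $P$ in which $\im \sigma = \bigoplus_i R d_i e_i$: inductively, each primitive $u_i$ admits a partner $f_i$ with $\omega(u_i, f_i) = 1$ (the functional $\omega(u_i, -)$ is surjective onto $R$ because $u_i$ has unit-gcd coordinates and $\lambda_q$ is invertible), and one peels off the hyperbolic pair and recurses on its symplectic orthogonal complement.

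The required change of basis lies in $\mathrm{Sp}_{2q}(R)$, which over a Euclidean ring is generated by the Clifford elementary moves; combined with column operations on $G$, this brings $\sigma$ to $\begin{pmatrix} 0 \\ \mathrm{diag}(d_1, \ldots, d_r, 0, \ldots, 0) \end{pmatrix}$. Now Lemma~\ref{lem:coarse-graining-single-polynomial-in-1D} applied to each nonzero $d_i$ (after multiplying by a unit monomial so $d_i(0) \ne 0$) breaks each $1 \times 1$ block into a direct sum of blocks whose entries are $1$ or $x' - 1$. Entries $1$ correspond to a stabilizer $Z_j$ --- a non-interacting qubit fixed to $\ket{0}$; entries $x' - 1$ correspond to $Z_j Z_{j+1}$, the Ising interaction; zero rows contribute truly unconstrained qubits. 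This establishes the equivalence to finitely many Ising chains plus non-interacting qubits.

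The main technical obstacle is the Witt--Darboux step --- verifying that the symplectic orbit of an arbitrary isotropic free submodule of $(P, \omega)$ over the PID $R$ really contains the aligned diagonal representative described above --- which reduces to classical statements about $\mathrm{Sp}_{2q}$ over Euclidean rings and should go through by the inductive peeling sketched above. For the topological-order claim, on any Ising summand the two ground states $\ket{0 \cdots 0}$ and $\ket{1 \cdots 1}$ are distinguished by the local operator $Z_j$, so $\Pi_L Z_j \Pi_L$ is not a scalar multiple of $\Pi_L$; any truly unconstrained qubit $k$ similarly has $Z_k$ with eigenvalues $\pm 1$ on its free two-dimensional ground space. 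Since failure of the TQO condition for any summand propagates to the direct sum, no 1D code Hamiltonian with any non-ancilla content is topologically ordered.
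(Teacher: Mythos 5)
Your high-level plan coincides with the paper's: bring $\sigma$ to a diagonal form by symplectic transformations over the PID $R = \FF_2[x,x^{-1}]$, then apply Lemma~\ref{lem:coarse-graining-single-polynomial-in-1D} to each diagonal entry. The difference is that you outsource the reduction step to a Witt--Darboux black box, whereas the paper executes it constructively. That black box is where essentially all the mathematical content of the theorem lives, and you flag it yourself as ``the main technical obstacle'' without closing it. Concretely: the form $\omega(v,w) = v^\dagger \lambda_q w$ is not $R$-bilinear but sesquilinear with respect to the antipode involution, so you are invoking a Witt extension theorem for $(-1)$-Hermitian forms over $\FF_2[x,x^{-1}]$ with the inversion involution. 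This is not a citable textbook result, and the inductive peeling needs a little care --- you need to show (i) $\omega(f_1, f_1)$ can always be zeroed by $f_1 \mapsto f_1 + c u_1$ (this does work because $\omega(v,v)$ always has vanishing constant term, being of the form $w + \bar w$, but you don't check it), and (ii) after peeling off $(u_1, f_1)$ the remaining $u_i$ can be pushed into the orthogonal complement without destroying the elementary-divisor structure. These are verifiable but not automatic, and the paper's Smith-plus-commutativity degree-descent argument is precisely the verification: the relation $\bar f_1 g_1 = \bar g_1 f_1$ is what licenses the controlled-Phase move that kills the lower-left block.

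The second gap is circular: you cite ``$\mathrm{Sp}_{2q}(R)$ is generated by the Clifford elementary moves'' as an input, but Definition~\ref{defn:symplectic-transformation} requires a symplectic transformation to be induced by a unitary, and the only way the paper knows an abstract matrix satisfying $T^\dagger \lambda_q T = \lambda_q$ is physical is by exhibiting it as a product of elementary gates --- which is exactly what the paper's constructive proof produces (and the paper explicitly notes the generation statement as a corollary, not as a known fact). To make your argument airtight you would either need to prove generation by elementary moves independently, or prove that every abstract symplectic matrix over $\FF_2[x,x^{-1}]$ is induced by some unitary on the infinite qubit chain; neither is in the paper's toolbox as a prior result. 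So the plan is right, the endgame (coarse-graining each diagonal block, reading off $1 \mapsto$ ancilla, $x'-1 \mapsto$ Ising, and observing $Z_j$ breaks TQO) is correct, but the central reduction is asserted rather than proved.
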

\noindent
Yoshida~\cite{Yoshida2011Classification} arrived at a similar conclusion
assuming that the ground space degeneracy when the Hamiltonian is defined on a ring 
should be independent of the length of the ring.
If translation group is trivial,
the proof below reduces to a well-known fact that 
the Clifford group is generated by controlled-NOT, Hadamard, and Phase
gates~\cite[Proposition~15.7]{KitaevShenVyalyi2002CQC}.
The proof in fact implies that the group of all symplectic transformations
in one-dimension is generated by elementary symplectic transformations
of Section~\ref{sec:symplectic-transformations}.

We will make use of the elementary symplectic transformations
and coarse-graining to deform $\sigma$ to a familiar form.
Recall that for any elementary row-addition $E$ on the upper block of $\sigma$
there is a unique symplectic transformation that restricts to $E$.

\begin{proof}
Applying Smith's algorithm to the first row and the first column of $2q \times t$ matrix $\sigma$,
one gets
\[
\begin{pmatrix}
  f_1 & 0 \\
  0   & A \\
  \hline
  g_1 & g_2 \\
  \vdots   & B  
\end{pmatrix}
\]
by elementary symplectic transformations.
Let $1 \le i < j \le q$ be integers.
If some $(1,q+j)$-entry is not divisible by $f_1$,
apply Hadamard on $j$-th qubit to bring $(q+j)$-th row to the upper block,
and then run Euclid's algorithm again to reduce the degree of $(1,1)$-entry.
The degree is a positive integer, so this process must end after a finite number of iteration.
Now every $(q+j,1)$-entry is divisible by $f_1$ and hence can be made to be $0$
by the controlled-NOT-Hadamard:
\[
\begin{pmatrix}
  f_1 & 0 \\
  0   & A \\
  \hline
  g_1 & g_2 \\
  0   & B  
\end{pmatrix} .
\]
Further we may assume $\deg f_1 \le \deg g_1$.
Since $\sigma^\dagger \lambda_q \sigma = 0$,
we have a commutativity condition
\[
 \bar f_1 g_1 - \bar g_1 f_1 = 0 .
\]
Write $f_1 = \alpha x^a + \cdots + \beta x^b $
and $ g_1 = \gamma x^c + \cdots + \delta x^d $
where $ a \le b$ and $c \le d$ and $\alpha,\beta,\gamma,\delta \neq 0$.
Then, $\bar f_1 g_1 = \beta \gamma x^{c-b} + \cdots + \alpha \delta x^{d-a}$.
Since $f_1 \bar g_1 = \bar f_1 g_1$, it must hold that $-(c-b)=d-a$
and $\alpha \delta = \beta \gamma$.
Since $\deg f_1 \le \deg g_1$, we have $d-b = -(c-a) \ge 0$.
The controlled-Phase $E_{1+q,1}( -(x^{d-b} + x^{c-a})\delta / \beta )$
will decrease the degree of $g_1$ by two,
which eventually becomes smaller than $\deg f_1$.
One may then apply Hadamard to swap $f_1$ and $g_1$.
Since the degree of $(1,1)$-entry cannot decrease forever,
the process must end with $g_1 = 0$.

The commutativity condition between $i$-th($i>1$) column and the first
is $ f_1 \bar g_i = 0 $. Since $f_1 \ne 0$, we get $g_i = 0$:
\[
\begin{pmatrix}
  f_1 & 0 \\
  0   & A \\
  \hline
  0 & 0 \\
  0 & B  
\end{pmatrix} .
\]
Continuing, we transform $\sigma$ into a diagonal matrix.
(We have shown that $\sigma$ can be transformed 
via elementary symplectic transformations to the Smith normal form.)

Now the Hamiltonian is a sum of non-interacting purely classical spin chains
plus some non-interacting qubits ($f_i = 0$).
It remains to classify classical spin chains
whose stabilizer module is generated by
\[
\begin{pmatrix}
f
\end{pmatrix}
\]
where we omitted the lower half block.
We can always choose $f=f(x)$ such that $f(x)$ has only non-negative exponents and
$f(0)\ne 0$ since $x$ is a unit in $R$.
Lemma~\ref{lem:coarse-graining-single-polynomial-in-1D} says that
$(f)$ becomes a diagonal matrix of entries $1$ or $x'-1$
after a suitable coarse-graining followed by a symplectic transformation
and column operations.
$1$ describes the ancilla qubits,
and $x'-1 = x'+1$ does the Ising model.
\end{proof}

\section{Two dimensions}
In the following two sections we will be mainly interested in exact code Hamiltonians.
If $D=2$, the lattice is $\Lambda = \mathbb{Z}^2$, and our base ring is $R = \FF_2 [x,\bar x, y, \bar y]$.

The following asserts that the local relations --- 
a few terms in the Hamiltonian that multiply to identity in a nontrivial way
as in 2D Ising model, or the kernel of $\sigma$ ---
among the terms in a code Hamiltonian,
can be completely removed for exact Hamiltonians in two dimensions~\cite{Bombin2011Structure}.
We prove a more general version.
\begin{lem}
If $G \xrightarrow{\sigma} P \xrightarrow{\epsilon} E$ is exact
over $R = F_2[ x_1, \bar x_1, \ldots, x_D, \bar x_D ]$,
There exists $\sigma' : G' \to P$ such that $\im \sigma' = \im \sigma$ and
\[
 0 \to G_{D-2} \to \cdots \to G_1 \to G' \xrightarrow{\sigma'} P \xrightarrow{\epsilon} E
\]
is an exact sequence of free $R$-modules. If $D=2$, one can choose $\sigma'$ to be injective.
\label{lem:coker-epsilon-resolution-length-D}
\end{lem}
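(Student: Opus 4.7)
The lemma is really a statement about finite free resolutions of the submodule $\im\sigma = \ker\epsilon \subseteq P$. My plan is to prove it using two standard homological ingredients that both apply to $R = \FF_2[x_1^{\pm 1},\ldots,x_D^{\pm 1}]$: Hilbert's syzygy theorem, which bounds projective dimensions by $D$; and the Quillen--Suslin theorem (in its Laurent-polynomial version due to Swan), which guarantees that every finitely generated projective $R$-module is free. Since $R$ is a localization of a polynomial ring, it is regular Noetherian of Krull dimension $D$, so its global dimension equals $D$ and $\mathrm{pd}_R M \le D$ for every finitely generated $R$-module $M$.

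\textbf{Projective dimension bound and construction.} First I would chase projective dimensions through two short exact sequences. From $0 \to \im\epsilon \to E \to \coker\epsilon \to 0$, with $E$ free and $\mathrm{pd}_R(\coker\epsilon) \le D$, the standard dimension-shift gives $\mathrm{pd}_R(\im\epsilon) \le D-1$. Then from $0 \to \ker\epsilon \to P \to \im\epsilon \to 0$, with $P$ free, I get $\mathrm{pd}_R(\ker\epsilon) \le D-2$. Next I would iteratively build a free resolution: pick a finitely generated free module $G'$ surjecting onto $\ker\epsilon$, let $K_1 \subseteq G'$ be its kernel (so $\mathrm{pd}_R(K_1) \le D-3$), pick $G_1 \twoheadrightarrow K_1$, and continue. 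After $D-2$ such steps the final syzygy $G_{D-2}$ has projective dimension zero, i.e.\ is projective, and hence free by Quillen--Suslin. Defining $\sigma': G' \to P$ as the composite $G' \twoheadrightarrow \ker\epsilon \hookrightarrow P$ produces the desired exact sequence with $\im\sigma' = \ker\epsilon = \im\sigma$.

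\textbf{The case $D=2$ and the main difficulty.} When $D=2$ the bound reads $\mathrm{pd}_R(\ker\epsilon) \le 0$, so $\ker\epsilon$ is itself projective and therefore free by Quillen--Suslin; one can simply take $G' \cong \ker\epsilon$ with no higher $G_i$'s, and $\sigma'$ becomes the inclusion, which is manifestly injective. The main obstacle is not the homological bookkeeping but rather ensuring that every syzygy one constructs is \emph{free} rather than merely projective — it is precisely this point that prevents us from using any regular ring and forces us to invoke the Laurent-polynomial version of Quillen--Suslin. Without it, one would obtain only a finite projective resolution, which does not translate into an honest generating matrix $\sigma'$ as required for the applications later in the paper.
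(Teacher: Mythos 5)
Your proof is correct, but it takes a genuinely different route from the paper. You resolve $\ker\epsilon \subseteq P$ directly via abstract homological algebra: dimension-shifting through the two short exact sequences $0\to\im\epsilon\to E\to\coker\epsilon\to 0$ and $0\to\ker\epsilon\to P\to\im\epsilon\to 0$ to obtain $\mathrm{pd}_R(\ker\epsilon)\le D-2$, then building the resolution one syzygy at a time, invoking Swan's Laurent-polynomial version of Quillen--Suslin at the terminal step. The paper instead works explicitly over the polynomial ring $S=\FF_2[x_1,\ldots,x_D]$ (after clearing negative exponents), builds the resolution of $\coker\epsilon$ constructively via a Gr\"obner basis and Buchberger's algorithm (Proposition~\ref{prop:constructive-syzygy}, which caps the resolution length by $D$ because each syzygy step removes a variable from the initial terms), uses only the original polynomial-ring Quillen--Suslin to straighten out a unimodular block in the first syzygy matrix so that the extra Gr\"obner-basis columns can be split off, and only at the very end localizes at monomials to recover the Laurent ring $R$. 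What your approach buys is a cleaner, shorter argument that avoids the matrix bookkeeping; what the paper's approach buys is an explicitly computable $\sigma'$ (useful for the later examples where one actually wants to write down the generating matrix) and a smaller-caliber invocation of Quillen--Suslin, since it never needs Swan's extension to Laurent rings. One point worth making explicit in your version: the uniform bound $\mathrm{pd}_R(\coker\epsilon)\le D$ you quote is not literally Hilbert's syzygy theorem (which is stated for polynomial rings) but rather the regularity argument you correctly give in the next sentence, namely that $R$ is regular Noetherian of Krull dimension $D$ so $\mathrm{gl.dim}\,R=D$; it would be cleaner to drop the passing mention of Hilbert's syzygy theorem so the reader isn't misled into thinking it applies verbatim to $R$.
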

\noindent
The lemma is almost the same as the Hilbert syzygy theorem~\cite[Corollary~15.11]{Eisenbud}
applied to $\coker \epsilon$,
which states that any finitely generated module over a polynomial ring with $n$ variables
has a finite free resolution of length $\le n$, by finitely generated free modules.
A difference is that our two maps on the far right in the resolution 
has to be related as $\epsilon = \sigma^\dagger \lambda$.
To this end, we make use of a constructive version of Hilbert syzygy theorem via Gr\"obner basis.
\begin{prop}\cite[Theorem~15.10, Corollary~15.11]{Eisenbud}
Let $\{ g_1, \ldots, g_n \}$ be a Gr\"obner basis of a submodule of
a free module $M_0$ over a polynomial ring.
Then, the S-polynomials $\tau_{ij}$ of $\{ g_i \}$
in the free module $M_1 = \bigoplus_{i=1}^n S e_i$
generate the syzygies for $\{g_i\}$.
If the variable $x_1,\ldots, x_s$ are absent from the initial terms of $g_i$,
one can define a monomial order on $M_1$
such that $x_1,\ldots,x_{s+1}$ is absent from
the initial terms of $\tau_{ij}$.
If all variables are absent from the initial terms of $g_i$,
then $M_0/(g_1,\ldots,g_n)$ is free.
\label{prop:constructive-syzygy}
\end{prop}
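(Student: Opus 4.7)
The plan is to handle the three assertions in sequence, since each builds on the previous. First, for the S-polynomial generation claim, I would run a Buchberger-style descent. Given any syzygy $h = \sum h_i e_i \in M_1$ of $(g_1,\ldots,g_n)$, let $\mu$ be the maximum, over all $i$, of the leading monomial of $h_i g_i$ in $M_0$. Since $\sum h_i g_i = 0$, the $\mu$-coefficients must cancel, so $\mu$ is attained by at least two indices $i, j$. I would then subtract a monomial multiple of $\tau_{ij}$ from $h$ to produce a new syzygy whose maximal leading monomial is strictly less than $\mu$; the well-ordering of the monomial order ensures that this descent terminates and thereby writes $h$ as an $S$-linear combination of the $\tau_{ij}$.

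Next, I would define the Schreyer order $\succ$ on $M_1$ by $\alpha e_i \succ \beta e_j$ iff $\mathrm{in}(\alpha g_i) > \mathrm{in}(\beta g_j)$ in the order on $M_0$, or they coincide and $i<j$ for a labeling of the $g_i$ chosen below; checking the monomial-order axioms is routine. Under $\succ$ the leading term of $\tau_{ij} = (m_{ij}/\mathrm{in}(g_i))e_i - (m_{ij}/\mathrm{in}(g_j))e_j$, with $m_{ij} = \mathrm{lcm}(\mathrm{in}(g_i),\mathrm{in}(g_j))$, is $(m_{ij}/\mathrm{in}(g_i))e_i$ for $i<j$, since both terms share the common leading monomial $m_{ij}$ in $M_0$ and the tie-break picks the smaller index. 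The exponent of any variable $x$ in the monomial $m_{ij}/\mathrm{in}(g_i)$ equals $\max(0, a_j - a_i)$, where $a_k$ denotes the $x$-exponent in $\mathrm{in}(g_k)$. The hypothesis that $x_1,\ldots,x_s$ are absent from every $\mathrm{in}(g_k)$ then immediately removes these variables from the leading term of each $\tau_{ij}$. To eliminate $x_{s+1}$ as well, I would relabel the $g_i$ so that the $x_{s+1}$-exponent of $\mathrm{in}(g_i)$ is \emph{non-increasing} in $i$; then for $i<j$ one has $a_i \geq a_j$ for the variable $x_{s+1}$, so $\max(0, a_j - a_i) = 0$. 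I expect the chief technical obstacle here to be checking that this relabeling remains compatible with the Schreyer tie-break once the proposition is iterated inductively to produce a finite free resolution, since each iteration imposes a fresh labeling constraint; a clean bookkeeping would probably refine the tie-break lexicographically by successive variable exponents.

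Finally, for the third assertion, suppose no variable appears in any $\mathrm{in}(g_i)$. Then each $\mathrm{in}(g_i)$ is a nonzero scalar times a basis vector $e_{\alpha_i}$ of $M_0$, and after rescaling we may take the scalar to be $1$. The division algorithm against the Gr\"obner basis $\{g_i\}$ then produces, for every $v \in M_0$, a unique remainder whose support lies in the complementary set $\{e_\beta : \beta \neq \alpha_j \text{ for any } j\}$. These complementary basis vectors therefore descend to a free $S$-basis of $M_0/(g_1,\ldots,g_n)$, completing the proof. This last step is essentially a corollary of the uniqueness of Gr\"obner-basis normal forms and so should require no new ideas beyond what was set up in the first two steps.
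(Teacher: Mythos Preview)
The paper does not give its own proof of this proposition; it is quoted verbatim from Eisenbud, so the relevant comparison is with Schreyer's theorem as presented there. Your outline follows exactly that route --- Schreyer order on $M_1$, leading-term analysis of the syzygies, and a relabeling to kill one extra variable --- so the approach matches the cited source.

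There is one genuine gap in what you wrote. You take $\tau_{ij} = (m_{ij}/\mathrm{in}(g_i))e_i - (m_{ij}/\mathrm{in}(g_j))e_j$, but this two-term element is only a syzygy of the \emph{initial terms} $\mathrm{in}(g_k)$, not of the $g_k$ themselves. The objects that actually generate $\ker(M_1\to M_0)$ are obtained by dividing the S-pair $(m_{ij}/\mathrm{in}(g_i))g_i - (m_{ij}/\mathrm{in}(g_j))g_j$ through the Gr\"obner basis to get a standard expression $\sum_u f_u^{(ij)} g_u$, and then setting
\[
\tau_{ij} \;=\; \frac{m_{ij}}{\mathrm{in}(g_i)}\,e_i \;-\; \frac{m_{ij}}{\mathrm{in}(g_j)}\,e_j \;-\; \sum_u f_u^{(ij)}\,e_u .
\]
Without the tail $\sum_u f_u^{(ij)} e_u$ your descent in the first part does not remain inside the syzygy module, so the induction would not close. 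Once you insert the tail, nothing in your second part changes: by the division-algorithm bound $\mathrm{in}(f_u^{(ij)} g_u) < m_{ij}$, the Schreyer-leading term of $\tau_{ij}$ is still $(m_{ij}/\mathrm{in}(g_i))e_i$, and your exponent calculation and the non-increasing relabeling in $x_{s+1}$ go through verbatim. Your concern about the relabeling interfering with iteration is unfounded: each invocation of the proposition is applied to a fresh Gr\"obner basis and chooses its own labeling, so no cumulative constraint is imposed. The third part is fine as written.
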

\begin{proof}[Proof of \ref{lem:coker-epsilon-resolution-length-D}]
Without loss of generality
assume that the $t \times 2q$ matrix $\epsilon$ have entries with nonnegative exponents,
so $\epsilon$ has entries in $S=\FF_2[x_1,\ldots,x_D]$.
Below, every module is over the polynomial ring $S$ unless otherwise noted.
Let $E_+$ be the free $S$-module of rank equal to $\mathrm{rank}_R~ E$.

If $g_1,\cdots, g_{2q}$ are the columns of $\epsilon$,
apply Buchberger's algorithm to obtain a Gr\"obner basis
$g_1,\cdots, g_{2q}, \ldots, g_n$ of $\im \epsilon$.
Let $\epsilon'$ be the matrix whose columns are $g_1,\ldots, g_n$.
We regard $\epsilon'$ as a map $M_0 \to E_+$.
By Proposition~\ref{prop:constructive-syzygy},
the initial terms of the syzygy generators (S-polynomials) 
$\tau_{ij}$ for $\{ g_i \}$ lacks the variable $x_1$.
Writing each $\tau_{ij}$ in a column of a matrix $\tau_1$, we have a map $\tau_1 : M_1 \to M_0$.

By induction on $D$, we have an exact sequence
\[
 M_{D} \xrightarrow{\tau_D} M_{D-1} \xrightarrow{\tau_{D-1}}
 \cdots \xrightarrow{\tau_1} M_0 \xrightarrow{\epsilon'} E_+ 
\]
of free $S$-modules, where the initial terms of columns of $\tau_D$ lack all the variables.
By Proposition~\ref{prop:constructive-syzygy} again, $M'_{D-1} = M_{D-1} / \im \tau_{\tau_D}$ is free.
Since $\ker \tau_{D-1} = \im \tau_D$, we have
\[
 0 \to M'_{D-1} \xrightarrow{\tilde \tau_{D-1}}
 \cdots \xrightarrow{\tau_1} M_0 \xrightarrow{\epsilon'} E_+ 
\]
Since $g_{2q+1},\ldots,g_n$ are $S$-linear combinations of $g_1,\ldots, g_{2q}$,
there is a basis change of $M_0$ so that the matrix representation of $\epsilon'$ becomes
\[
 \epsilon' \cong
\begin{pmatrix}
 \epsilon & 0
\end{pmatrix} .
\]
With respect to this basis of $M_0$,
the matrix of $\tau_1$ is
\[
 \tau_1 \cong
\begin{pmatrix}
 \tau_{1u} \\
 \tau_{1d}
\end{pmatrix}
\]
where $\tau_{1u}$ is the upper $2q \times t'$ submatrix.
Since $\ker \epsilon' = \im \tau_1$, 
The first row $r$ of $\tau_{1d}$ should generate $1 \in S$.
(This property is called unimodularity.)
Quillen-Suslin theorem~\cite[Chapter~XXI Theorem~3.5]{Lang}
states that there exists a basis change of $M_1$
such that $r$ becomes $\begin{pmatrix}1 & 0 & \cdots & 0 \end{pmatrix}$.
Then, by some basis change of $M_0$, one can make 
\[
\epsilon' \cong
\begin{pmatrix}
 \epsilon & 0
\end{pmatrix} ,
\quad
 \tau_{1d} \cong
\begin{pmatrix}
 1 & 0 \\
 0 & \tau'_{1d}
\end{pmatrix} .
\]
where $\tau'_{1d}$ is a submatrix.
By induction on the number of rows in $\tau_{1d}$,
we deduce that the matrix of $\tau_1$ can be brought to
\[
\epsilon' \cong
\begin{pmatrix}
 \epsilon & 0
\end{pmatrix} ,
\quad
 \tau_1 \cong
\begin{pmatrix}
 \sigma'' & \sigma' \\
 I & 0
\end{pmatrix}
\]
Note that $\epsilon \sigma'' = 0$ and $\epsilon \sigma' = 0$.
The basis change of $M_0$ by $\begin{pmatrix} I & -\sigma'' \\ 0 & I \end{pmatrix}$ gives
\[
\epsilon' \cong
\begin{pmatrix}
 \epsilon & 0
\end{pmatrix} ,
\quad
 \tau_1 \cong
\begin{pmatrix}
 0 & \sigma' \\
 I & 0
\end{pmatrix} .
\]
The kernel of $\begin{pmatrix} \sigma' \\ 0 \end{pmatrix}$ determines $\ker \tau_1 = \im \tau_2$.
Let $M'_1$ denote the projection of $M_1$ such that the sequence
\[
 0 \to M'_{D-1} \xrightarrow{\tilde \tau_{D-1}}
 \cdots \to M_2 \to M'_1 \xrightarrow{\sigma'} M'_0 \xrightarrow{\epsilon} E_+ 
\]
of free $S$-modules is exact.

Taking the ring of fractions with respect to the multiplicatively closed set 
\[
U = \{x_1^{i_1} \cdots x_D^{i_D} | i_1,\ldots,i_D \ge 0\},
\]
we finally obtain the desired exact sequence over $U^{-1}S = R$ 
with $P = U^{-1}M'_0$ and $E = U^{-1}E_+$.
Since $\im \sigma = \ker \epsilon$, we have $\im \sigma' = \im \sigma$.
\end{proof}

\begin{lem}
Let $R$ be a Laurent polynomial ring in $D$ variables over a finite field $\FF$,
and $N$ be a module over $R$.
Suppose $J = \ann_R N$ is a proper ideal such that $\dim R/J = 0$.
Then, there exists an integer $L \ge 1$ such that
\[
 \ann_{R'} N = (x_1^L -1, \ldots, x_D^L -1) \subseteq R'
\]
where $R' = \FF[x_1^{\pm L},\ldots,x_D^{\pm L}]$ is a subring of $R$.
\label{lem:zero-dim-ideal-over-finite-field}
\end{lem}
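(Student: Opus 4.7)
The plan is to exploit the fact that any zero-dimensional finitely generated algebra over a finite field is itself finite, so the units of $R/J$ form a finite group whose exponent dictates the required $L$.

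First I would observe that $R/J$ is a zero-dimensional ring which is also finitely generated as an $\FF$-algebra (since $R$ is). By the Noether normalization or, more directly, by the fact that an Artinian ring finitely generated over a field $\FF$ is finite-dimensional as an $\FF$-vector space, $R/J$ has finite cardinality (this is where it matters that $\FF$ is finite). Next, since each $x_i$ is a unit in $R$, its image $\bar x_i$ is a unit in $R/J$, hence lies in the finite abelian group $(R/J)^\times$. I would then let $L$ be the exponent of $(R/J)^\times$ (or simply its order), so that $\bar x_i^L = 1$ in $R/J$ for every $i = 1,\dots,D$. Equivalently, $x_i^L - 1 \in J$ for every $i$.

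With this $L$ fixed, set $y_i = x_i^L$ and $R' = \FF[y_1^{\pm 1},\ldots,y_D^{\pm 1}] \subseteq R$. Since $\ann_{R'} N = \ann_R N \cap R' = J \cap R'$, the task is to show $J \cap R' = (y_1-1,\ldots,y_D-1)$. The containment $(y_1-1,\ldots,y_D-1) \subseteq J \cap R'$ is immediate from the choice of $L$. For the reverse containment I would analyze the composite ring homomorphism
\[
\varphi : R' \hookrightarrow R \twoheadrightarrow R/J.
\]
By construction $\varphi(y_i) = 1$ for each $i$, so $\varphi$ factors through the quotient $R'/(y_1-1,\ldots,y_D-1) \cong \FF$, giving a map $\bar\varphi : \FF \to R/J$. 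Because $J$ is a proper ideal, $\bar\varphi(1) = 1 \neq 0$ in $R/J$, and any nonzero ring map out of a field is injective; therefore $\ker \varphi = (y_1-1,\ldots,y_D-1)$.

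The only subtlety is verifying the first step --- that a zero-dimensional finitely generated $\FF$-algebra $R/J$ is necessarily a \emph{finite} ring when $\FF$ is finite, not merely Artinian. For this I would invoke the weak Nullstellensatz style argument: any maximal ideal $\mm$ of $R/J$ has $R/\mm$ a finitely generated $\FF$-algebra that is a field, hence a finite extension of $\FF$ by the Zariski lemma, hence finite; since $R/J$ is Artinian with only finitely many maximal ideals and is finite-dimensional over $\FF$ by the structure theorem for Artinian rings (Proposition~\ref{prop:Artin-ring}), it is finite as a set. Everything else is a direct manipulation.
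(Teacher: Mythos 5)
Your proof is correct, and it takes a genuinely different (and somewhat cleaner) route than the paper. The paper proceeds in two stages: first it shows $\bb_n \subseteq \rad J$ for some $n$ (by looking at residue fields $R/\mm$ for the finitely many maximal ideals containing $J$, each a finite field by Nullstellensatz), and then it uses the Frobenius identity $(x^n-1)^{p^r} = x^{np^r}-1$ in characteristic $p$ together with the Noetherian fact $(\rad J)^{p^r} \subseteq J$ to land in $J$ itself with $L = np^r$. You bypass the radical intermediate step entirely: since $R/J$ is a finitely generated zero-dimensional $\FF$-algebra over a \emph{finite} field, it is a finite ring, so the images of the $x_i$ lie in the finite group $(R/J)^\times$, and taking $L$ to be the exponent of that group gives $x_i^L - 1 \in J$ directly. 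This buys you a shorter argument, at the cost of invoking finiteness of all of $R/J$ rather than just finiteness of its residue fields; both versions use finiteness of $\FF$ in an essential way. Your concluding step (the map $\FF \cong R'/(y_1-1,\ldots,y_D-1) \to R/J$ is injective because $J$ is proper) is equivalent to the paper's observation that $\bb_L \cap R'$ is maximal in $R'$ and $J \cap R'$ is proper. Both proofs are valid.
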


This is a variant of Lemma~\ref{lem:coarse-graining-single-polynomial-in-1D}.
The annihilator $J = \ann_R N$ is the set of all elements $r \in R$ such that
$r n = 0$ for any $n \in N$.
It is an ideal;
if $r_1, r_2 \in \ann_R N$, then
$r_1+r_2$ is an annihilator since $(r_1 + r_2)n = r_1 n + r_2 n = 0$,
and $a r_1 \in \ann_R N$ for any $a \in R$ since $(a r_1) n = a(r_1 n)=0$.
If $R' \subseteq R$ is a subring and $N$ is an $R$-module,
$N$ is an $R'$-module naturally.
Clearly, $J' = \ann_{R'} N$ is by definition equal to $(\ann_{R} N) \cap R'$.
Note that $J'$ is the kernel of the composite map $R' \hookrightarrow R \to R/J$.
Hence, we have an algebra homomorphism $\varphi' : R'/J' \to R/J$.
Although $R'$ is a subring, it is isomorphic to $R$ via the correspondence $x_i^L \leftrightarrow x_i$.
Therefore, we may view $\varphi'$ as a map $\varphi: R/I \to R/J$ for some ideal $I \subseteq R$.
It is a homomorphism such that $\varphi(x_i) = x_i^L$.
Considering the algebras as the set of all functions on the algebraic sets $V(I)$ and $V(J)$ defined by $I$ and $J$, respectively,
we obtain a map $\hat \varphi : V(J) \to V(I)$.
Intuitively, $\hat \varphi$ maps each point $(a_1,\ldots,a_D) \in \FF^D$ to $(a_1^L,\ldots,a_D^L) \in \FF^D$.
In a finite field, any nonzero element is a root of unity.
Since $\dim R/J = 0$, which means that $V(J)$ is a finite set, we can find a certain $L$
so $V(I)$ would consist of a single point. A formal proof is as follows.

\begin{proof}
Since $R$ is a finitely generated algebra over a field,
for any maximal ideal $\mm$ of $R$, the field $R/\mm$ is a finite extension of $\FF$
(Nullstellensatz~\cite[Theorem~4.19]{Eisenbud}).
Hence, $R/\mm$ is a finite field.
Since $x_i$ is a unit in $R$, the image $a_i \in R/\mm$ of $x_i$ is nonzero.
$a_i$ being an element of finite field, a power of $a_i$ is $1$.
Therefore, there is a positive integer $n$ such that
$\bb_n = (x_1^n-1,\ldots,x_D^n-1) \subseteq \mm$.
Since $x^n-1$ divides $x^{nn'}-1$,
we see that there exists $n \ge 1$ such that $\bb_n \subseteq \mm_1 \cap \mm_2$
for any two maximal ideals $\mm_1, \mm_2$.
One extends this by induction to any finite number of maximal ideals.

Since $\dim R/J = 0$, any prime ideal of $R/J$ is maximal and 
the Artinian ring $R/J$ has only finitely many maximal ideals.
$\mathrm{rad}~ J$ is then the intersection of the contractions (pull-backs)
of these finitely many maximal ideals.
Therefore, there is $n \ge 1$ such that
\[
 \bb_n \subseteq \mathrm{rad}~ J .
\]
Since $R$ is Noetherian, $(\mathrm{rad}~J)^{p^r} \subseteq J$ for some $r \ge 0$
where $p$ is the characteristic of $\FF$. Hence, we have 
\[
\bb_{np^r} \subseteq \bb_n^{p^r} \subseteq (\mathrm{rad}~J)^{p^r} \subseteq J.
\]

Let $L = np^r$.
If $R' = \FF[x_1^L,\bar x_1^L,\ldots,x_D^L,\bar x_D^L]$,
$\ann_{R'} N$ is nothing but $J \cap R'$.
We have just shown $\bb_L \cap R' \subseteq J \cap R'$.
Since $J$ is a proper ideal, we have $1 \notin J \cap R'$.
Thus, $\bb_L \cap R'= J \cap R'$ since $\bb_L \cap R'$ is maximal in $R'$.
\end{proof}

\begin{theorem}
For any two dimensional degenerate exact code Hamiltonian,
there exists an equivalent Hamiltonian such that
\[
\ann \coker \epsilon = (x-1,y-1).
\]
Thus, $\coker \epsilon$ is a torsion module.
\label{thm:structure-2d-ann-coker-epsilon}
\end{theorem}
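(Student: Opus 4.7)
The approach combines three already-established tools: the syzygy reduction in Lemma~\ref{lem:coker-epsilon-resolution-length-D}, the Buchsbaum--Eisenbud codimension bound (Proposition~\ref{prop:exact-sequence}), and the coarse-graining lemma for zero-dimensional annihilators (Lemma~\ref{lem:zero-dim-ideal-over-finite-field}).

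First, I would use Lemma~\ref{lem:coker-epsilon-resolution-length-D} (with $D=2$) to replace $\sigma$ by an injective map $\sigma':G'\to P$ having the same image, so that $\epsilon$ and $\coker\epsilon$ are unchanged and the equivalence class of the Hamiltonian is preserved. We then have the exact sequence $0\to G'\xrightarrow{\sigma'}P\xrightarrow{\epsilon}E$. By Remark~\ref{rem:rank-sigma-m}, $\sigma'$ is a $2q\times q$ matrix and $\epsilon$ is a $q\times 2q$ matrix, and since the Hamiltonian is degenerate, $I_q(\sigma')\neq R$. Proposition~\ref{prop:exact-sequence} then forces $\codim I_q(\sigma')\ge 2$, which in $R=\FF_2[x^{\pm 1},y^{\pm 1}]$ of Krull dimension $2$ means $\dim R/I_q(\sigma')=0$.

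Next, I would pass from the determinantal ideal of $\sigma'$ to the annihilator of $\coker\epsilon$. Since $I_q(\epsilon)=\overline{I_q(\sigma')}$ and the antipode is a ring automorphism, $\dim R/I_q(\epsilon)=0$. Fitting's Lemma gives $I_q(\epsilon)\subseteq\ann\coker\epsilon\subseteq\mathrm{rad}\,I_q(\epsilon)$, so $J:=\ann\coker\epsilon$ is proper (otherwise $I_q(\epsilon)=R$) and satisfies $\dim R/J=0$.

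Finally, I would invoke Lemma~\ref{lem:zero-dim-ideal-over-finite-field} for $N=\coker\epsilon$: there exists $L\ge 1$ such that after coarse-graining by $R'=\FF_2[x^{\pm L},y^{\pm L}]$ one has $\ann_{R'}\coker\epsilon=(x^L-1,y^L-1)$. Coarse-graining is an allowed move (Definition~\ref{defn:equiv-H}), and under the identification $x^L\leftrightarrow x$, $y^L\leftrightarrow y$ this is exactly the desired ideal $(x-1,y-1)$. The torsion conclusion is automatic: $R$ is a domain, so the nonzero annihilator $(x-1,y-1)$ consists of regular elements, whence $\coker\epsilon$ is a torsion module.

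The main conceptual point — and the only step that requires any care — is the Krull-dimension computation: the exactness of $G'\to P\to E$ combined with degeneracy squeezes $\dim R/I_q(\sigma')$ down to $0$, which is precisely the hypothesis needed for Lemma~\ref{lem:zero-dim-ideal-over-finite-field}. Once that is in hand, the rest is a direct application of previously proved lemmas, and the specific shape $(x-1,y-1)$ of the annihilator follows from the particular form of coarse-graining.
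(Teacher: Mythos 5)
Your proposal is correct and follows essentially the same route as the paper's proof: reduce to injective $\sigma$ via Lemma~\ref{lem:coker-epsilon-resolution-length-D}, deduce $t=q$ and $\codim I_q(\sigma)\ge 2$ from Proposition~\ref{prop:exact-sequence} plus degeneracy, pass to $\ann\coker\epsilon$ via Fitting's lemma, and finish with Lemma~\ref{lem:zero-dim-ideal-over-finite-field}. The only minor inaccuracy is attributing $t=q$ to Remark~\ref{rem:rank-sigma-m} alone (which only gives $t\ge q$); the equality requires combining $\rank\sigma'=q$ with the injectivity of $\sigma'$, exactly as the paper does by invoking Proposition~\ref{prop:exact-sequence} on the three-term exact sequence.
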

The content of Theorem~\ref{thm:structure-2d-ann-coker-epsilon}
is presented in \cite{Bombin2011Structure}.
We will comment on it after the proof.
\begin{proof}
By Lemma~\ref{lem:coker-epsilon-resolution-length-D},
we can find an equivalent Hamiltonian such that the generating map $\sigma$ for its stabilizer module
is injective:
\[
 0 \to G \xrightarrow{\sigma} P .
\]
Let $t$ be the rank of $G$.
The exactness condition says
\[
 0 \to G \xrightarrow{\sigma} P \xrightarrow{\epsilon} E
\]
is exact where $\epsilon = \sigma^\dagger \lambda_q$ and $E$ has rank $t$.
Applying Proposition~\ref{prop:exact-sequence},
since $\overline{ I(\sigma) } = I(\epsilon)$ 
and hence in particular $\codim I(\sigma) = \codim I(\epsilon)$,
we have that $q=t$ and $\codim I(\epsilon) \ge 2$ if $I(\epsilon) \ne R$.
But, $I(\epsilon) \ne R$ by Corollary~\ref{cor:unit-characteristic-ideal-means-nondegeneracy}.

Since $q=t$, $I(\epsilon)$ is equal to the initial Fitting ideal,
and therefore has the same radical as the annihilator of $\coker \epsilon = E/ \im \epsilon$.
(See \cite[Proposition~20.7]{Eisenbud} or \cite[Chapter~XIX Proposition~2.5]{Lang}.)
In particular, $\dim R / (\ann \coker \epsilon) = 0$.
Apply Lemma~\ref{lem:zero-dim-ideal-over-finite-field} to conclude the proof.
\end{proof}

An interpretation of the theorem is the following.
For systems of qubits, Theorem~\ref{thm:structure-2d-ann-coker-epsilon} says that
$x+1$ and $y+1$ are in $\ann \coker \epsilon$.
In other words, any element $v$ of $E$ is a charge,
and a pair of $v$'s of distance 1 apart
can be created by a local operator.
Equivalently, $v$ can be translated by distance 1 by the local operator.
Since translation by distance 1 generates all translations of the lattice,
we see that any excitation can be moved through the system by some sequence of local operators.
This is exactly what happens in the 2D toric code:
Any excited state is described by a configuration of magnetic and electric charge,
which can be moved to a different position by a string operator.

Moreover, since $(x-1,y-1) = \ann \coker \epsilon$,
the action of $x,y \in R$ on $\coker \epsilon$ is the same as the identity action.
Therefore, the $R$-module $\coker \epsilon$ is completely determined
up to isomorphism by its dimension $k$ as an $\FF_2$-vector space.
In particular, $\coker \epsilon$ is a finite set,
which means there are finitely many charges.
The module $K(L)$ of Pauli operators acting on the ground space (logical operators),
can be viewed as $K(L) = \Tor_1(\coker \epsilon, R/\bb_L)$.
Thus, $K(L)$ is determined by $k$ up to $R$-module isomorphisms.
This implies that the translations of a logical operator are all equivalent.
It is not too obvious at this moment whether the symplectic structure,
or the commutation relations among the logical operators,
of $K(L)$ is also completely determined.

Yoshida~\cite{Yoshida2011Classification} argued a similar result assuming
that the ground state degeneracy should be independent of system size.
Bombin~\cite{Bombin2011Structure} later claimed without the constant degeneracy assumption
that one can choose locally independent stabilizer generators in a `translationally invariant way'
in two dimensions, for which Lemma~\ref{lem:coker-epsilon-resolution-length-D} is a generalization,
and that there are finitely many topological charges,
which is immediate from Theorem~\ref{thm:structure-2d-ann-coker-epsilon} since $\coker \epsilon$ is a finite set.
The claim is further strengthened assuming extra conditions by Bombin~\emph{et al.}~\cite{BombinDuclosCianciPoulin2012},
which can be summarized by saying
that $\sigma$ is a finite direct sum of $\sigma_\text{2D-toric}$ in Example~\ref{eg:2d-toric}.

\begin{rem}
Although the strings are capable of moving charges on the lattice,
it could be very long compared to the interaction range.
Consider
\[
 \epsilon_p = 
\begin{pmatrix}
 p(x) & p(y) & 0         & 0          \\
 0    & 0    & p(\bar y) & -p(\bar x)
\end{pmatrix}
\]
where $p$ is any polynomial. It defines an exact code Hamiltonian.
For instance, the choice $p(t) = t-1$ reproduces the 2D toric code of Example~\ref{eg:2d-toric}.
Now let $p(t)$ be a primitive polynomial of the extension field $\FF_{2^w}$ over $\FF_2$.
$p(t)$ has coefficients in the base field $\FF_2$ and factorizes in $\FF_{2^w}$
as $p(t) = (t-\theta)(t-\theta^2)(t-\theta^{2^2})\cdots(t-\theta^{2^{w-1}})$.
(See \cite[Chapter~V Section~5]{Lang}.)
The multiplicative order of $\theta$ is $N = 2^w -1$.
The degree $w$ of $p(t)$ may be called the interaction range.
If the charge $e = \begin{pmatrix}1 & 0 \end{pmatrix}^T$ at $(0,0)\in \ZZ^2$
is transported to $(a,b) \in \ZZ^2 \setminus \{ (0,0) \}$ by some finitely supported operator,
we have $(x^a y^b -1)e \in \im \epsilon$.
That is, $x^a y^b -1 \in (p(x),p(y))$.
Substituting $x \mapsto \theta$ and $y \mapsto \theta^{2^m}$,
we see that $\theta^{a+2^m b} = 1$ or $a+2^m b \equiv 0 \pmod N$ for any $m \in \ZZ$.
In other words, $a \equiv -b \equiv -2b \pmod N$. It follows that $|a| + |b| \ge \frac{N}{2-1}$.
Therefore, the length of the string segment transporting a charge is exponential in the interaction range $w$.
\end{rem}

\section{Three dimensions}
\label{sec:3d}

In the previous section, we derived a consequence of the exactness of code Hamiltonians.
The two-dimensional Hamiltonian was special 
so we were able to characterize the behavior of the charges more or less completely.
Here, we prove a weaker property of three dimensions that there must exist a nontrivial charge
for any exact code Hamiltonian.
It follows from Theorems~\ref{thm:charge-equals-torsion},%
\ref{thm:logarithmic-energy-barrier-for-fractal-operator}
that such a charge can spread through the system
by surmounting the logarithmic energy barrier.
\begin{lem}
Suppose $D=3$,
\[
  0 \to G_1 \xrightarrow{\sigma_1} G \xrightarrow{\sigma} P \xrightarrow{\epsilon=\sigma^\dagger \lambda_q} E
\]
is exact, and $I(\sigma) \subseteq \mm = (x-1,y-1,z-1)$.
Then, $\coker \epsilon$ is not torsion-free.
\label{lem:char-ideal-with-support-on-origin-implies-fractal}
\end{lem}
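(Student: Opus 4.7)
The plan is to argue by contradiction, assuming $\coker\epsilon$ is torsion-free and showing this is incompatible with $I(\sigma)\subseteq \mm$. By Proposition~\ref{prop:nofractal-torsionless}, torsion-freeness of $\coker\epsilon$ allows one to append a free module to the right of $E$, giving an exact sequence $0\to G_1\xrightarrow{\sigma_1} G\xrightarrow{\sigma} P\xrightarrow{\epsilon} E\to E_1$ of finitely generated free $R$-modules. Applying Buchsbaum--Eisenbud (Proposition~\ref{prop:exact-sequence}) with $n=4$, either $I(\sigma_1)=R$ or $\codim I(\sigma_1)\ge 4$; since $\dim R = D = 3$ the latter is impossible, so $I(\sigma_1)=R$.

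I then localize at $\mm$, making $R_\mm$ a regular local ring of dimension $3$. Since $I(\sigma_{1,\mm})=R_\mm$, some $r_1\times r_1$ minor of $\sigma_{1,\mm}$ is a unit, and after a suitable change of basis of $G_\mm$ I can bring $\sigma_{1,\mm}$ to the block form $\begin{pmatrix}\id_{r_1}\\ 0\end{pmatrix}$. The relation $\sigma\sigma_1=0$ then forces the first $r_1$ columns of $\sigma_\mm$ in the new basis to vanish, yielding $\sigma_\mm=\begin{pmatrix}0 & B\end{pmatrix}$ with $B$ a $2q\times q$ matrix of full rank $q$; by Cauchy--Binet, $I(\sigma)_\mm=I_q(B)$.

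Now I exploit the self-dual relation $\epsilon=\sigma^\dagger\lambda_q$ together with this factorization: a direct computation gives $\epsilon_\mm=\begin{pmatrix}0\\ \bar B^{T}\lambda_q\end{pmatrix}$, so
\[
\coker\epsilon_\mm \;\cong\; R_\mm^{r_1}\oplus \bigl(R_\mm^q/N\bigr), \qquad N = \im(\bar B^{T}\lambda_q).
\]
The second summand has rank $0$, hence is torsion, and is nonzero precisely when $\bar B^{T}\lambda_q$ fails to be surjective, i.e., when $I_q(\bar B^{T}\lambda_q)\ne R_\mm$. But $I_q(\bar B^{T}\lambda_q)=\overline{I_q(B)}=\overline{I(\sigma)}_\mm$, and the identity $\overline{x_i-1}=-x_i^{-1}(x_i-1)$ shows $\bar\mm=\mm$, so the hypothesis $I(\sigma)\subseteq\mm$ forces $I_q(\bar B^{T}\lambda_q)\subseteq\mm_\mm\ne R_\mm$. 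Hence $R_\mm^q/N$ is a nonzero torsion summand of $\coker\epsilon_\mm$, contradicting torsion-freeness of $\coker\epsilon$, which passes to localization.

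The main hurdle is the block-matrix bookkeeping that turns the factorization of $\sigma_\mm$ into the explicit direct-sum decomposition of $\coker\epsilon_\mm$, together with checking the antipode identity $\bar\mm=\mm$ at the end; the conceptual core of the argument is the single invocation of Buchsbaum--Eisenbud with $n=4$ in dimension $D=3$, which is exactly the reason the result is special to three dimensions.
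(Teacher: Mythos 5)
Your proof is correct and follows essentially the same strategy as the paper's: assume torsion-freeness, extend to a length-four resolution, invoke Buchsbaum--Eisenbud to force $I(\sigma_1)=R$, localize at $\mm$, change basis so that the columns of $\sigma_\mm$ spanning $\ker$ are eliminated, and observe that the remaining rank-$q$ block has a nonunit Fitting ideal. The only cosmetic difference is that you write out the direct-sum decomposition of $\coker\epsilon_\mm$ explicitly, whereas the paper feeds the residual exact sequence $0\to G'\to P_\mm\to R_\mm^{q}$ into Proposition~\ref{prop:injective-sigma-implies-fractal} (and treats the $G_1=0$ case by that proposition directly rather than subsuming it as $r_1=0$).
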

\begin{proof}
Suppose on the contrary $\coker \epsilon$ is torsion-free.
We have an exact sequence
\[
 0 \to G_1 \xrightarrow{\sigma_1} G \xrightarrow{\sigma} P \xrightarrow{\epsilon} E \to E'.
\]
If $G_1 = 0$, Proposition~\ref{prop:injective-sigma-implies-fractal} implies the conclusion.
So we assume $G_1 \ne 0$, and therefore we have $I(\sigma_1) = R$ by Proposition~\ref{prop:exact-sequence}.

Let us localize the sequence at $\mm$, so $I(\sigma_1)_\mm = R_\mm$.
Since $\rank (G_1)_\mm = \rank (\sigma_1)_\mm$, the matrix of $(\sigma_1)_\mm$ becomes
\[
 (\sigma_1)_\mm =
\begin{pmatrix}
 0 \\
 I
\end{pmatrix}
\]
for some basis of $(G_1)_\mm$ and $G_\mm$. See the proof of 
Lemma~\ref{lem:associated-maximal-localized-homology}.
In other words, there is an invertible matrix $B \in \mathrm{GL}_{ t \times t}( R_\mm )$ such that
\[
 \sigma_\mm B = 
\begin{pmatrix}
 \tilde \sigma & 0
\end{pmatrix}
\]
where $\tilde \sigma$ is the $2q \times t'$ submatrix.
Note that the antipode map is a well-defined automorphism of $R_\mm$
since $\overline \mm = \mm$.

Since $\epsilon = \sigma^\dagger \lambda_q$, we have
\begin{equation}
 B^\dagger \epsilon_\mm = 
\begin{pmatrix}
 \tilde \sigma^\dagger \\
 0
\end{pmatrix} \lambda_q
=
\begin{pmatrix}
 \tilde \sigma^\dagger \lambda_q \\
 0
\end{pmatrix} .
\label{eq:tilde-epsilon-sigma-dagger}
\end{equation}
Therefore, we get a new exact sequence
\[
   0 \to G' \xrightarrow{\tilde \sigma} P_\mm \xrightarrow{\tilde \epsilon = \tilde \sigma^\dagger \lambda_q} R_\mm^{t'}
\]
where $G' = G_\mm / \im (\sigma_1)_\mm$ is a free $R_\mm$-module and $t' = \rank G'$.
It is clear that 
$\rank \tilde \epsilon = \rank \tilde \sigma$.
Setting $S = R_\mm$ in Proposition~\ref{prop:injective-sigma-implies-fractal} implies that
$\coker \tilde \epsilon$ is not torsion-free.
But, since we are assuming $\coker \epsilon_m$ is torsion-free,
$\coker \tilde \sigma^\dagger$ is also torsion-free by Eq.~\eqref{eq:tilde-epsilon-sigma-dagger}.
This is a contradiction.
\end{proof}

\begin{theorem}
For any three-dimensional, degenerate and exact code Hamiltonian,
there exists a fractal generator.
\label{thm:fractal-exists-in-3D}
\end{theorem}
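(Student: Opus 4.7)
The plan is to argue by contradiction and reduce to Lemma~\ref{lem:char-ideal-with-support-on-origin-implies-fractal} by a suitable coarse-graining. Suppose there is no fractal generator. By Proposition~\ref{prop:nofractal-torsionless}, $\coker\epsilon$ is torsion-free, and combined with Lemma~\ref{lem:coker-epsilon-resolution-length-D} in dimension $D=3$ this gives an exact sequence of free $R$-modules
\[
 0 \to G_1 \xrightarrow{\sigma_1} G \xrightarrow{\sigma} P \xrightarrow{\epsilon} E \to E'.
\]
All hypotheses of Lemma~\ref{lem:char-ideal-with-support-on-origin-implies-fractal} are now in place except the requirement that $I(\sigma)\subseteq(x-1,y-1,z-1)$. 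That particular maximal ideal appears there because the proof needs the antipode $r\mapsto \bar r$ to restrict to an automorphism of the localized ring, and in characteristic $2$ the only point of the algebraic torus fixed by the antipode is $(1,1,1)$.

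By Corollary~\ref{cor:unit-characteristic-ideal-means-nondegeneracy}, degeneracy forces $I_q(\sigma)$ to be a proper ideal, so we may pick a maximal ideal $\mm\supseteq I_q(\sigma)$. The residue field $R/\mm$ is a finite extension of $\FF_2$, so the images of $x,y,z$ there are roots of unity of odd orders; let $L$ be an odd common multiple. Coarse-grain by $R'=\FF_2[x'^{\pm 1},y'^{\pm 1},z'^{\pm 1}]$ with $x'=x^L$, $y'=y^L$, $z'=z^L$, making $R$ a free $R'$-module of rank $L^3$. The contraction $\mm\cap R'$ contains $\{x^L-1,y^L-1,z^L-1\}$, which generate the maximal ideal $\mm_0=(x'-1,y'-1,z'-1)\subseteq R'$; since $\mm\cap R'$ is a proper ideal containing the maximal $\mm_0$, we have $\mm\cap R'=\mm_0$. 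This $\mm_0$ is antipode-invariant in $R'$, and the four-term sequence above is in particular exact as a sequence of $R'$-modules.

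The main technical point --- and the step I expect to be the main obstacle --- is verifying that the coarse-grained associated ideal $I_{qL^3}(\sigma')$ still lies inside $\mm_0$, so that Lemma~\ref{lem:char-ideal-with-support-on-origin-implies-fractal} applies to the coarse-grained data. The key observation is that $\mm_0 R=\bb_L$, so $\sigma'\bmod\mm_0$, viewed as an $\FF_2$-linear map between spaces of dimensions $tL^3$ and $2qL^3$, coincides with $\sigma_L$ in the notation of Section~\ref{sec:degeneracy}. Since $I_q(\sigma)\subseteq\mm$ and $L$ is odd, the argument in the proof of Corollary~\ref{cor:unit-characteristic-ideal-means-nondegeneracy} (worked over $\bar\FF_2$ and then descended by faithful flatness) yields $K(L)\ne 0$, and Corollary~\ref{cor:k-formulas} then gives $\dim_{\FF_2}\im\sigma_L<qL^3$. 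Consequently every $qL^3\times qL^3$ minor of $\sigma'\bmod\mm_0$ vanishes, i.e.\ $I_{qL^3}(\sigma')\subseteq\mm_0$. Lemma~\ref{lem:char-ideal-with-support-on-origin-implies-fractal} applied to the coarse-grained Hamiltonian then produces an $R'$-torsion element of $\coker\epsilon'$, which is \emph{a fortiori} an $R$-torsion element of $\coker\epsilon=\coker\epsilon'$, contradicting the assumption that no fractal generator exists.
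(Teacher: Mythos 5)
Your proof is correct and essentially coincides with the paper's: both reduce to Lemma~\ref{lem:char-ideal-with-support-on-origin-implies-fractal} by choosing a coarse-graining (by an odd $L$) that forces the associated ideal inside $(x'-1,y'-1,z'-1)$, the same step being flagged as the crux in each case. Your verification of that containment by a direct rank count on $\sigma_L$ via Corollary~\ref{cor:k-formulas} is an equivalent substitute for the paper's appeal to the contrapositive of Lemma~\ref{lem:associated-maximal-localized-homology} applied to $K'(1)_{\mm_0}=K'(1)\neq 0$, and the contradiction framing (with the unused $E'$ from Proposition~\ref{prop:nofractal-torsionless}) is harmless but not actually needed since Lemma~\ref{lem:char-ideal-with-support-on-origin-implies-fractal} directly produces the torsion.
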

\begin{proof}
By Lemma~\ref{lem:coker-epsilon-resolution-length-D}, there exists an equivalent Hamiltonian
such that
\[
  0 \to G_1 \xrightarrow{\sigma_1} G \xrightarrow{\sigma} P \xrightarrow{\epsilon=\sigma^\dagger \lambda_q} E
\]
is exact. The existence of a fractal generator is a property of the equivalence class
by Proposition~\ref{prop:fractal-property-of-eq-class}.
If we can find a coarse-graining such that $I(\sigma') \subseteq (x'-1,y'-1,z'-1)$,
then Lemma~\ref{lem:char-ideal-with-support-on-origin-implies-fractal} shall imply the conclusion.

Recall that $\epsilon_L$ and $\sigma_L$ denote the induced maps
by factoring out $\bb_L = (x^L-1,y^L-1,z^L-1)$. See Sec.~\ref{sec:degeneracy}.
There exists $L$ such that $K(L) = \ker \epsilon_L / \im \sigma_L \ne 0$ by 
Corollary~\ref{cor:unit-characteristic-ideal-means-nondegeneracy}.
Consider the coarse-grain by $x'=x^L,~ y'=y^L,~ z'=z^L$.
Let $R' = F_2[x'^{\pm 1}, y'^{\pm 1}, z'^{\pm 1}]$ denote the coarse-grained base ring.
If $K'(L')$ denotes $\ker \epsilon'_{L'} / \im \sigma'_{L'}$ as $R'$-module,
we see that $K'(1) = K(L)$ as $\FF_2$-vector space.
In particular, $K'(1) \ne 0$.
Put $\mm = (x'-1,y'-1,z'-1) = \bb'_1 \subseteq R'$.
Then, $K'(1)_{\mm} = K'(1) \ne 0$.
By Lemma~\ref{lem:associated-maximal-localized-homology},
we have $I(\sigma') \subseteq \mm$.
\end{proof}

Yoshida argued that when the ground state degeneracy is constant independent of system size
there exists a string operator~\cite{Yoshida2011feasibility}.
To prove it, we need an algebraic fact.
\begin{prop}
Let $M$ be a finitely presented $R$-module, and $T$ be its torsion submodule.
Let $I$ be the first non-vanishing Fitting ideal of $M$.
Then,
\[
 \rad I \subseteq \rad \ann T .
\]
\label{prop:support-torsion-fitting-ideal}
\end{prop}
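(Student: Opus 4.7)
The plan is to work prime-by-prime and show that every prime $\pp$ of $R$ containing $\ann T$ also contains $I$; intersecting over such primes yields $I \subseteq \rad \ann T$, giving the required inclusion $\rad I \subseteq \rad \ann T$.

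Because $R$ is Noetherian and $M$ is finitely presented, $T$ is a finitely generated submodule of $M$, so for any prime $\pp$ one has $T_\pp = 0$ if and only if $\ann T \not\subseteq \pp$. The task therefore reduces to showing: if $I \not\subseteq \pp$, then $M_\pp$ is torsion-free. Writing $I = \mathrm{Fitt}_r(M)$ with $\mathrm{Fitt}_{r-1}(M) = 0$, so that $r$ is the first index at which the Fitting ideal of $M$ is nonzero, and using that Fitting ideals commute with localization, the question becomes the following local statement: over a local ring $(S,\mm)$, if a finitely presented module $N$ has $\mathrm{Fitt}_r(N) = S$ and $\mathrm{Fitt}_{r-1}(N) = 0$, is $N$ free of rank $r$?

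The main work goes into this local freeness criterion. I would choose a minimal presentation $S^m \xrightarrow{\phi} S^n \to N \to 0$; by Nakayama's lemma every entry of $\phi$ then lies in $\mm$. The equality $I_{n-r}(\phi) = \mathrm{Fitt}_r(N) = S$ rules out $n - r \ge 1$, since otherwise every $(n-r) \times (n-r)$ minor of $\phi$ would belong to $\mm \ne S$. The vanishing $I_{n-r+1}(\phi) = \mathrm{Fitt}_{r-1}(N) = 0$ rules out $n - r + 1 \le 0$, since by convention a determinantal ideal of non-positive order is the unit ideal. Hence $n = r$, and then $I_1(\phi) = 0$ forces $\phi = 0$, so $N \cong S^r$ is free.

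With the criterion in hand, $M_\pp$ is free whenever $I \not\subseteq \pp$, hence torsion-free, so $T_\pp = 0$ at every such prime, completing the argument. The main obstacle is the local freeness criterion via Fitting ideals; once it is in place, the remaining passage between the annihilator of $T$, the support of $T$, and radicals of ideals is formal.
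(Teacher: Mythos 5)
Your proof is correct and follows the same strategy as the paper: localize at any prime $\pp \not\supseteq I$, show $M_\pp$ is free over $R_\pp$ (hence torsion-free), deduce $T_\pp = 0$ and therefore $\ann T \not\subseteq \pp$, then intersect over primes to get $\rad I \subseteq \rad\ann T$. The one cosmetic difference is in how local freeness is obtained: you pass to a minimal presentation and invoke Nakayama's lemma so that all entries of $\phi$ lie in the maximal ideal, while the paper's cited Lemma~\ref{lem:associated-maximal-localized-homology} iteratively extracts unit entries by row and column operations; the two mechanisms are interchangeable and reach the same conclusion.
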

\begin{proof}
Let $\pp$ be any prime ideal of $R$ such that $I \not\subseteq \pp$.
By the calculation of the proof of Lemma~\ref{lem:associated-maximal-localized-homology},
$M_\pp$ is a free $R_\pp$-module, and hence is torsion-free.
Since $T$ is embedded in $M$, it follows that $T_\pp = 0$, or equivalently, $\ann T \not\subseteq \pp$.
Since the radical of an ideal is the intersection of all primes containing it~\cite[Proposition 1.8]{AtiyahMacDonald},
the claim is proved.
\end{proof}

\begin{cor}
Let $T$ be the set of all point-like charges modulo locally created ones
of a degenerate and exact code Hamiltonian in three dimensions
of characteristic dimension zero.
Then, one can coarse-grain the lattice such that
\[
 \ann T = (x-1,y-1,z-1) .
\]
\label{cor:annT-3d-characteristic-dimension-0}
\end{cor}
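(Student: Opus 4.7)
The plan is to combine Theorem~\ref{thm:charge-equals-torsion} (which identifies $T$ with the torsion submodule of $\coker\epsilon$), Proposition~\ref{prop:support-torsion-fitting-ideal}, and Lemma~\ref{lem:zero-dim-ideal-over-finite-field}. The overall strategy is: use the characteristic dimension hypothesis to show $\dim R/\ann T = 0$ via a Fitting-ideal comparison, verify $\ann T$ is proper, then appeal to Lemma~\ref{lem:zero-dim-ideal-over-finite-field} to coarse-grain the lattice so the annihilator takes the required form.

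First I would check that $T \ne 0$ and that $\ann T \ne R$. Since the Hamiltonian is degenerate, exact, and three-dimensional, Theorem~\ref{thm:fractal-exists-in-3D} furnishes a fractal generator, which by Proposition~\ref{prop:nofractal-torsionless} is precisely the failure of $\coker\epsilon$ to be torsion-free. Hence $T \ne 0$ and $\ann T$ is a proper ideal.

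Next I would control the support of $T$ through the characteristic dimension. After replacing $\sigma$ as in Lemma~\ref{lem:coker-epsilon-resolution-length-D}, the matrix $\epsilon$ presents $\coker\epsilon$ as a finitely presented $R$-module, and by Remark~\ref{rem:rank-sigma-m} one has $\rank\epsilon = q$, so the first non-vanishing Fitting ideal of $\coker\epsilon$ is $I_q(\epsilon)$. From $\epsilon = \sigma^\dagger \lambda_q$ with $\lambda_q$ invertible, $I_q(\epsilon) = \overline{I_q(\sigma)}$; since the antipode is an algebra automorphism of $R$, this yields $\dim R/I_q(\epsilon) = \dim R/I_q(\sigma) = 0$ by the characteristic-dimension hypothesis. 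Proposition~\ref{prop:support-torsion-fitting-ideal} then gives $\rad I_q(\epsilon) \subseteq \rad \ann T$, so $V(\ann T) \subseteq V(I_q(\epsilon))$ and hence $\dim R/\ann T = 0$.

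Finally, I would invoke Lemma~\ref{lem:zero-dim-ideal-over-finite-field} applied to the $R$-module $T$: since $\ann T$ is proper and zero-dimensional, there exists $L \ge 1$ such that $\ann_{R'} T = (x^L-1, y^L-1, z^L-1)$ over the coarse-grained subring $R' = \FF_2[x^{\pm L}, y^{\pm L}, z^{\pm L}]$. Relabelling $x^L, y^L, z^L$ as the new coordinates on the coarse-grained lattice produces $\ann T = (x-1, y-1, z-1)$ as required. The main subtlety to verify is that the torsion submodule of $\coker\epsilon$ is the same subset whether viewed over $R$ or $R'$, so Lemma~\ref{lem:zero-dim-ideal-over-finite-field} applies to the intended $T$. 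This holds because $R$ is free, hence integral, of finite rank over $R'$: an element $t$ annihilated by some nonzero $r \in R$ is then also annihilated by the nonzero norm $N_{R/R'}(r) \in R'$ via the Cayley--Hamilton relation for $r$ acting on the free $R'$-module $R$.
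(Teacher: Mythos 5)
Your proof matches the paper's almost verbatim: Theorem~\ref{thm:charge-equals-torsion} identifies $T$ with the torsion of $\coker\epsilon$, Theorem~\ref{thm:fractal-exists-in-3D} shows it is nonzero, Proposition~\ref{prop:support-torsion-fitting-ideal} transfers the characteristic-dimension hypothesis to $\ann T$, and Lemma~\ref{lem:zero-dim-ideal-over-finite-field} concludes. Your final paragraph --- checking that the $R$-torsion submodule of $\coker\epsilon$ coincides with its $R'$-torsion submodule because $R$ is free of finite rank over $R'$ --- is a small bonus the paper leaves implicit, and is precisely what is needed for the lemma's conclusion to be a statement about the coarse-grained Hamiltonian's charges.
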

\noindent
The corollary says that any point-like charge is attached to strings and is able to move freely through the lattice.
The condition is implied by Lemma~\ref{lem:k-growing-sequence-L} 
if the ground state degeneracy is constant independent of the system size
when defined on a periodic lattice.
\begin{proof}
By Theorem~\ref{thm:charge-equals-torsion}, $T$ is the torsion submodule of $\coker \epsilon$.
By Theorem~\ref{thm:fractal-exists-in-3D}, $T$ is nonzero.
Setting $M = \coker \epsilon$ in Proposition~\ref{prop:support-torsion-fitting-ideal},
the associated ideal $I_q(\epsilon)$ is the first non-vanishing Fitting ideal of $M$.
Since $\dim R / I_q(\epsilon) = 0$ by assumption, we have $\dim R / \ann T = 0$.
Lemma~\ref{lem:zero-dim-ideal-over-finite-field} implies the claim.
\end{proof}

\section{More examples}
\label{sec:eg}

\begin{example}[Toric codes in higher dimensions]
\label{eg:highD-toric-codes}
Any higher dimensional toric code can be treated similarly as for two dimensional case.
In three dimensions one associates each site with $q=3$ qubits.
It is easily checked that
\[
 \sigma_\text{3D-toric} = 
\begin{pmatrix}
1 + \bar x & 0   & 0      & 0    \\
1 + \bar y & 0   & 0      & 0    \\
1 + \bar z & 0   & 0      & 0    \\
\hline
0          & 0   & 1+z    & 1+y  \\
0          & 1+z & 0      & 1+x  \\
0          & 1+y & 1+x    & 0    \\
\end{pmatrix} .
\]

Both two- and three-dimensional toric codes have the property 
that $\coker \epsilon$ is not torsion-free.
However, in two dimensions any element of $E$ is a physical charge,
whereas in three dimensions $E$ contains physically irrelevant elements.
Note that in both cases, $1+x$ and $1+y$ are fractal generators.
Being consisted of two terms, they generate the `string operators.'

The 4D toric code~\cite{DennisKitaevLandahlEtAl2002Topological}
has $\sigma_x$-type interaction and $\sigma_z$-type interaction.
Originally the qubits are placed on every plaquette of 4D hypercubic lattice;
instead we place $q=6$ qubits on each site.
The generating map $\sigma$ for the stabilizer module is written as
a $12 \times 8$-matrix ($t=8$)
\[
\sigma_\text{4D-toric} =
\begin{pmatrix}
 \sigma_X & 0 \\
 0        & \sigma_Z
\end{pmatrix}
\]
where
\begin{align*}
 \sigma_X &= 
\begin{pmatrix}
1+y & 1+x & 0   & 0   \\
1+w & 0   & 0   & 1+x \\
1+z & 0   & 1+x & 0   \\
0   & 1+z & 1+y & 0   \\
0   & 1+w & 0   & 1+y \\
0   & 0   & 1+w & 1+z
\end{pmatrix}, \\
\bar \sigma_Z &=
\begin{pmatrix}
0   & 0   & 1+w & 1+z \\
0   & 1+z & 1+y & 0   \\
0   & 1+w & 0   & 1+y \\
1+w & 0   & 0   & 1+x \\
1+z & 0   & 1+x & 0   \\
1+y & 1+x & 0   & 0
\end{pmatrix}.
\end{align*}
Note the bar on $\sigma_Z$.

Theorem~\ref{thm:fractal-exists-in-3D} does not prevent
the absence of a fractal generator in four or higher dimensions.
Indeed, this 4D toric code lacks any fractal generator.
To see this, it is enough to consider $\sigma_Z$
since $\overline{\coker \sigma_X^\dagger} \cong \coker \sigma_Z^\dagger$ as $R_4$-modules,
where $R_4 = \FF_2[x^{\pm 1},y^{\pm 1},z^{\pm 1},w^{\pm 1}]$.
If
\[
\epsilon_1 =
 \begin{pmatrix}
1+x & 1+y & 1+z & 1+w
 \end{pmatrix} : R_4^4 \to R_4,
\]
then
\[
 R_4^6 \xrightarrow{\sigma_Z^\dagger} R_4^4 \xrightarrow{\epsilon_1} R_4
\]
is exact.
(A direct way to check it is to compute S-polynomials~\cite[Chapter~15]{Eisenbud}
of the entries of $\epsilon_1$,
and to verify that they all are in the rows of $\sigma_Z$.)
Hence, $\coker \sigma_Z^\dagger$ is torsion-free by Proposition~\ref{prop:nofractal-torsionless}.

For the toric codes in any dimensions, $\sigma$ has nonzero entries of form $x_i-1$.
The radical of the associated ideal $I(\sigma)$ is equal to $\mm = (x_1-1,\ldots, x_D-1)$.
So $\mm$ is the only maximal ideal of $R$ that contains $I(\sigma)$.
The characteristic dimension is zero.
If $2 \nmid L$, since $(\bb_L)_{\mm} = \mm_{\mm}$,
$(\sigma_L)_\mm$ is a zero matrix.
Any other localization of $\sigma_L$ does not contribute to $\dim_{\FF_2} K(L)$
by Lemma~\ref{lem:associated-maximal-localized-homology}.
Therefore, if $2 \nmid L$, $K(L)$ has constant vector space dimension independent of $L$.

There is a more direct way to compute the $R$-module $K(L)$.
For the three-dimensional case, consider a free resolution of $R_3/\mm$,
where $R_3 = \FF_2[x^{\pm 1},y^{\pm 1},z^{\pm 1}]$,
as
\[
 0 \to 
R_3^1 
\xrightarrow{
\partial_3 = 
\begin{pmatrix} a \\ b \\ c \end{pmatrix}
}
R_3^3
\xrightarrow{  
\partial_2 =
\begin{pmatrix}
 0 & c & b  \\
 c & 0 & a  \\
 b & a & 0   
\end{pmatrix}
}
R_3^3
\xrightarrow{
\partial_1 =
\begin{pmatrix}
a & b  & c
\end{pmatrix}
}
R_3^1
\to
R_3/\mm
\to 0
\]
where $a=1+x$, $b=1+y$, and $c=1+z$.
We see that
\begin{equation}
 \sigma_\text{3D-toric} = \bar \partial_3 \oplus \partial_2 ,
 \quad \text{and} \quad
 \epsilon_\text{3D-toric} = \bar \partial_2 \oplus \partial_1  .
\label{eq:resolution-F}
\end{equation}
Therefore,
\begin{align*}
 K(L)_\text{3D-toric} \cong \Tor_1(\coker \epsilon_\text{3D-toric},R_3/\bb_L) 
&\cong \Tor_2( \overline{ R_3/\mm}, R_3/\bb_L ) \oplus \Tor_1( R_3/\mm, R_3/\bb_L ).
\end{align*}
Using $\Tor(M,N) \cong \Tor(N,M)$ and the fact that a resolution of $R_3/\bb_L$ is Eq.~\eqref{eq:resolution-F}
with $a,b,c$ replaced by $x^L-1,y^L-1,z^L-1$, respectively,
we have 
\[
 \Tor_i(R_3/\mm,R_3/\bb_L) \cong \Tor_i(R_3/\mm,R_3/\mm) \cong (\FF_2)^{_3 C _i}
\]
for each $0 \le i \le 3$.
Therefore, $K(L)_\text{3D-toric} \cong (\FF_2)^{_3 C _2} \oplus (\FF_2)^{_3 C _1} \cong (\FF_2)^6$.
The four-dimensional case is similar:
\[
 K(L)_\text{4D-toric} \cong \Tor_2 (R_4/\mm, R_4/\bb_L) \oplus \Tor_2 (\overline{R_4/\mm}, R_4/\bb_L) \cong \left( (\FF_2)^{_4 C _2} \right)^2 .
\]
The calculation here is closely related
to the cellular homology interpretation of toric codes.
\hfill $\Diamond$
\end{example} 

\begin{example}[Wen plaquette~\cite{Wen2003Plaquette}]
This model consists of a single type of interaction ($t=q=1$)
\[
\xymatrix@!0{
X \ar@{-}[r] & Y \ar@{-}[d] \\
Y \ar@{-}[u] & X \ar@{-}[l]
} \quad \quad
\sigma_\text{Wen} = 
\begin{pmatrix}
 1 + x + y + xy \\
\hline
 1 + xy
\end{pmatrix}
\]
where $X,Y$ are abbreviations of $\sigma_x,\sigma_y$.
It is known to be equivalent to the 2D toric code.
Take the coarse-graining given by $R' = \FF_2[x',y',\bar x', \bar y']$ where
\[
 x' = x \bar y, \quad \quad y' = y^2 .
\]
(The coarse-graining considered in this example
is intended to demonstrate a non-square blocking of the old lattice
to obtain a `tilted' new lattice, and is by no means special.)
As an $R'$-module, $R$ is free with basis $\{ 1, y \}$.
With the identification $R = (R' \cdot 1) \oplus (R' \cdot y)$,
we have $x \cdot 1 = x' \cdot y$, $x \cdot y = x'y' \cdot 1$,
and $y \cdot 1 = 1 \cdot y$, $y \cdot y = y' \cdot 1$.
Hence, $x$ and $y$ act on $R'$-modules
as the matrix-multiplications on the left:
\[
x \mapsto 
\begin{pmatrix}
0  & x'y'\\
x' &   0
\end{pmatrix},
\quad
y \mapsto
\begin{pmatrix}
0 & y'\\
1 & 0
\end{pmatrix}.
\]
Identifying
\[
 R^n = [ (R' \cdot 1) \oplus (R' \cdot y) ] \oplus \cdots \oplus [(R' \cdot 1) \oplus (R' \cdot y)] ,
\]
our new $\sigma$ on the coarse-grained lattice becomes
\[
 \sigma' =
\begin{pmatrix}
1+x'y' & y'+x'y'\\
1+x'   & 1+x'y'\\
\hline
1+x'y' & 0 \\
0      & 1+x'y'
\end{pmatrix}.
\]
By a sequence of elementary symplectic transformations, we have
\begin{align*}
\sigma'
&
\xrightarrow[E_{1,3}(1)]{E_{2,4}(1)}
\begin{pmatrix}
0      & y'+x'y' \\
1+x'   & 0       \\
1+x'y' & 0       \\
0      & 1+x'y'
\end{pmatrix} 
\xrightarrow[E_{3,2}(y')]{E_{4,1}(\bar y')}
\begin{pmatrix}
0      & y'+x'y' \\
1+x'   & 0       \\
1+y' & 0       \\:
0      & x'y'+x'
\end{pmatrix}
\\
& \xrightarrow[\times \bar x' \bar y']{\text{col.2}}
\begin{pmatrix}
0      & 1 + \bar x' \\
1+x'   & 0       \\
1+y' & 0       \\
0      & 1+\bar y'
\end{pmatrix}
\xrightarrow{1 \leftrightarrow 3}
\begin{pmatrix}
1+y' & 0     \\
1+x'   & 0   \\
0      & 1+ \bar x'\\
0      & 1 + \bar y'\\
\end{pmatrix},
\end{align*}
which is exactly the 2D toric code.
\hfill $\Diamond$
\end{example} 

\begin{example}[Chamon model~\cite{Chamon2005Quantum,BravyiLeemhuisTerhal2011Topological}]
This three-dimensional model consists of single type of term in the Hamiltonian.
The generating map is
\[
\sigma_\text{Chamon} =
\begin{pmatrix}
 x+\bar x + y + \bar y \\
\hline
 z+\bar z + y + \bar y
\end{pmatrix}.
\]
Since
\[
 \sigma^\dagger \lambda_1
\begin{pmatrix}
0 \\ 
1
\end{pmatrix}
=
(1+ x \bar y)
\begin{pmatrix}
 0 \\
\bar x + y
\end{pmatrix},
\]
$1+x \bar y$ is a fractal generator.
Consisted of two terms, it generates a string operator.
The degeneracy can be calculated using Corollary~\ref{cor:k-formulas}.
Assume all the three linear dimensions of the system are even.
Put
\[
 S = R / (x + \bar x + y + \bar y, z + \bar z + y + \bar y, x^{2l} -1, y^{2m}-1, z^{2n} -1 ).
\]
Then, the $\log_2$ of the degeneracy is $k = \dim_{\FF_2} S$.
In $S$, we have $x + \bar x = y + \bar y = z + \bar z$.
Since $S$ has characteristic 2, it holds that
\[
 w^{p+1} + w^{-p-1} = (w + w^{-1})(w^p + w^{p-2} + \cdots + w^{-p})
\]
for $p \ge 1$ and $w = x,y,z$. By induction on $p$, we see that $w^{p} + w^{-p}$
is a polynomial in $w + w^{-1}$. Therefore,
\[
 x^p + \bar x ^p = y^p + \bar y^p = z^p + \bar z^p
\]
for all $p \ge 1$ in $S$.
Put $g = \gcd(l,m,n)$.
Since $x^l + x^{-l} = y^m + y^{-m} = z^n + z^{-n} = 0$ in $S$,
we have $x^g + x^{-g} = y^g + y^{-g} = z^g + z^{-g} = 0$.

Applying Buchberger's criterion with respect to the lexicographic order in which $x \prec y \prec z$,
we see that
\[
 S = \FF_2[x,y,z] / (z^2 + zx^{2l-1}+ zx+1, y^2 + yx^{2l-1}+yx+1, x^{2g}+1)
\]
is expressed with a Gr\"obner basis. Therefore,
\[
 k = \dim_{\FF_2} S = 8 \gcd(l,m,n).
\]
\hfill $\Diamond$
\end{example}

\begin{example}[Cubic Code]
\newcommand{\drawgenerator}[8]{%
\xymatrix@!0{%
& #8 \ar@{-}[ld]\ar@{.}[dd] \ar@{-}[rr] & & #7 \ar@{-}[ld]  \\%
#1 \ar@{-}[rr] \ar@{-}[dd] &  & #2 \ar@{-}[dd] &            \\%
& #6 \ar@{.}[ld] &  & #5 \ar@{-}[uu] \ar@{.}[ll]       \\%
#3 \ar@{-}[rr] &  & #4 \ar@{-}[ru]                       %
}%
}
The Hamiltonian of code 1 in \cite{Haah2011Local}
is the translation-invariant negative sum of the following two types of interaction terms:
\[
\drawgenerator{ZI}{ZZ}{IZ}{ZI}{IZ}{II}{ZI}{IZ} 
\quad
\drawgenerator{XI}{II}{IX}{XI}{IX}{XX}{XI}{IX}
\quad \quad
\drawgenerator{xz}{xyz}{x}{xy}{y}{1}{yz}{z}
\]
Here, the third cube specifies the coordinate system of the simple cubic lattice.
The corresponding generating map for the stabilizer module is
\[
\sigma_\text{cubic-code} =
\begin{pmatrix}
1 + xy + yz + zx & 0 \\
1 + x + y + z    & 0 \\
0 & 1 + \bar x + \bar y + \bar z \\
0 & 1 + \bar x \bar y + \bar y \bar z + \bar z \bar x
\end{pmatrix}
\]
The associated ideal is contained in a prime ideal of codimension 2:
\[
 I(\sigma) \subseteq ( 1+x+y+z, 1+xy+yz+zx ) = \pp.
\]
Since $\codim I(\sigma) \ge 2$, the characteristic dimension is 1.
Since $\coker \epsilon_\text{cubic-code} = R / \pp \oplus R / \bar \pp$,
any nonzero element of $\pp$ is a fractal generator.

Let us explicitly calculate the ground state degeneracy
when the Hamiltonian is defined on $L \times L \times L$ 
cubic lattice with periodic boundary conditions.
By Corollary~\ref{cor:k-formulas},
\[
 k = \dim_{\FF_2} R / (\pp + \bb_L) \oplus R / (\bar \pp + \bb_L) = 2 \dim_{\FF_2} R / (\pp + \bb_L).
\]
So the calculation of ground state degeneracy comes down to the calculation of
\[
 d = \dim_{\FF_2} T' / \pp
\]
where $T' = \FF_2[x,y,z]/(x^{n_1}-1, y^{n_2}-1, z^{n_3}-1 )$.

We may extend the scalar field to any extension field
without changing $d$.
Let $\FF$ be the algebraic closure of $\FF_2$ and let
\[
 T=\FF[x,y,z]/(x^{n_1}-1, y^{n_2}-1, z^{n_3}-1 )
\]
be an Artinian ring.
By Proposition~\ref{prop:Artin-ring},
it suffices to calculate for each maximal ideal $\mm$ of $T$
the vector space dimension 
\[
d_\mm = \dim_{\FF} (T / \pp)_\mm
\]
of the localized rings, and sum them up.

Suppose $n_1,n_2,n_3 > 1$.
By Nullstellensatz, any maximal ideal of $T$ is of form
$\mm=(x-x_0,y-y_0,z-z_0)$ where $x_0^{n_1}=y_0^{n_2}=z_0^{n_3}=1$.
(If $n_1 = n_2 = n_3 = 1$, then $T$ becomes a field, and there is no maximal ideal other than zero.)
Put $n_i = 2^{l_i}n_i'$ where $n_i'$ is not divisible by $2$.
Since the polynomial $x^{n_1} -1$ contains the factor $x-x_0$ with multiplicity $2^{l_1}$,
it follows that
\[
 T_\mm = \FF [x,y,z]_\mm / ( x^{2^{l_1}}+a',~ y^{2^{l_2}}+b',~ z^{2^{l_3}}+c')
\]
where $a' = x_0^{2^{l_1}}, b' = y_0^{2^{l_2}}, c' = z_0^{2^{l_3}}$.
Hence, $(T/\pp)_\mm \cong \FF[x,y,z] / I'$ where
\[
 I' = (x+y+z+1, xy+xz+yz+1 ,~ x^{2^{l_1}}+a',~ y^{2^{l_2}}+b',~ z^{2^{l_3}}+c').
\]
If $I' = \FF[x,y,z]$, then $d_\mm = 0$.

Without loss of generality, we assume that $l_1 \le l_2 \le l_3$.
By powering the first two generators of $I'$,
we see that $(x_0,y_0,z_0)$ must be a solution of them in order for $I'$ not to be a unit ideal.
Eliminating $z$ and shifting $x \to x+1$, $y \to y+1$,
our objective is to calculate the Gr\"obner basis for the proper ideal
\[
 I = (x^2+xy+y^2, x^{2^{l_1}}+a,~ y^{2^{l_2}}+b )
\]
where $a=a'+1$ and $b = b'+1$. So
\[
 d_\mm = \dim_{\FF} \FF[x,y]/I.
\]
One can easily deduce by induction that $y^{2^m} + x^{2^m-1}(m x +y) \in I$ for any integer $m \ge 0$.
And $b = \omega a^{2^{l_2 - l_1}}$ for a primitive third root of unity $\omega$.
So we arrive at
\[
 I = ( y^2 + yx + x^2,~ yx^{2^{l_2} -1 } + b ( 1 + l_2 \omega^2 ),~ x^{2^{l_1}} + a )
\]
We apply the Buchberger criterion.
If $a \neq 0$, i.e., $x_0 \neq 1$,
then $b \neq 0$ and $I = (x+(\omega^2 + l_2)y, x^{2^{l_1}} + a)$,
so $d_\mm = 2^{l_1}$

If $a=b=0$,
then $I = (y^2 + yx + x^2, yx^{2^{l_2}-1}, x^{2^{l_1}} )$.
The three generators form Gr\"obner basis if $l_2 = l_1$.
Thus, in this case, $d_\mm = 2^{l_1+1}-1$.
If $ l_2 > l_1$, then $d_\mm = 2^{l_1+1}$.

To summarize, except for the special point $(1,1,1)\in \FF^3$ of the affine space,
each point in the algebraic set
\[
 V = \left\{ (x,y,z) \in \FF^3 ~\middle|~
\begin{matrix}
x+y+z+1 = xy+xz+yz+1 = 0 \\  
x^{n_1'}-1 = y^{n_2'}-1 = z^{n_3'}-1 = 0
\end{matrix}
\right\}
\]
contribute $2^{l_1}$ to $d$. The contribution of $(1,1,1)$ is either $2^{l_1 +1}$ or $2^{l_1 +1}-1$.
The latter occurs if and only if $l_1$ and $l_2$,
the two smallest numbers of factors of $2$ in $n_1,n_2,n_3$,
are equal. Let $d_0 = \#V$ be the number of points in $V$.
The desired answer is
\[
 d = 2^{l_1} (d_0 -1) + 
\begin{cases} 
2^{l_1 +1} -1 & \text{if $l_1 = l_2$ } \\
2^{l_1 +1}    & \text{otherwise}
\end{cases}
\]
where $l_1 \le l_2 \le l_3$ are the number of factors of $2$ in $n_i$.

The algebraic set defined by $(x+y+z+1,~xy+xz+yz+1)$ is the union of two isomorphic lines
intersecting only at $x=y=z=1$, one of which is parametrized by $x \in \FF$ as
\[
 (1+x, 1+\omega x, 1+\omega^2 x) \in \FF^3,
\]
and another is parametrized as
\[
 (1+x,1+\omega^2 x, 1+\omega x) \in \FF^3.
\]
where $\omega$ is a primitive third root of unity.
Therefore, the purely geometric number $d_0 = 2 d_1 -1$ can be calculated by
\[
 d_1 = \deg_x \gcd \left( (1+x)^{n_1'}+1, (1+\omega x)^{n_2'}+1, (1+\omega^2 x)^{n_3'}+1 \right) .
\]
Using $(\alpha+\beta)^{2^p} = \alpha^{2^p} + \beta^{2^p}$ and $\omega^2 + \omega + 1 = 0$,
one can easily compute some special cases as summarized in the following corollary.
\hfill $\Diamond$
\end{example}

\begin{cor}
Let $2^k$ be the ground state degeneracy of
the cubic code on the cubic lattice of size $L^3$ with periodic boundary conditions.
($k$ is the number of encoded qubits.)
Then
\begin{align*}
\frac{k+2}{4} 
&= \deg_x \gcd \left( (1+x)^{L}+1,~ (1+\omega x)^{L}+1,~ (1+\omega^2 x)^{L}+1 \right)_{\FF_4} \\
&= \begin{cases}
     1    & \text{if $L = 2^p+1$}, \\
     L    & \text{if $L =2^p$}, \\
     L-2  & \text{if $L = 4^p -1$}, \\
     1    & \text{if $L = 2^{2p+1} -1$}.
   \end{cases}
\label{eq:cubic-code-formula-k}
\end{align*}
where $\omega^2 + \omega + 1 = 0$ and $p \ge 1$ is any integer.
\end{cor}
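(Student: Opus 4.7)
The plan is to specialize the general formula from the preceding Cubic Code example to the isotropic case $n_1 = n_2 = n_3 = L$ and then to evaluate the resulting GCD of three polynomials over $\FF_4$ for each of the four arithmetic shapes of $L$. Write $L = 2^l n'$ with $n'$ odd, so $l_1 = l_2 = l_3 = l$ and the example's formula reduces to $d = 2^{l+1} d_1 - 1$, where $d_1$ is the degree over $\FF_4$ of the GCD of $(1 + \omega^i x)^{n'} + 1$ for $i = 0, 1, 2$. Since $k = 2d$, this gives $(k + 2)/4 = 2^l d_1$. By Freshman's dream in characteristic $2$, $(1 + \omega^i x)^L + 1 = \bigl((1 + \omega^i x)^{n'} + 1\bigr)^{2^l}$, and in the UFD $\FF_4[x]$ the GCD of perfect $2^l$-th powers is the $2^l$-th power of the GCD; so promoting the exponent from $n'$ to $L$ multiplies the degree of the GCD by exactly $2^l$, which establishes the first equality of the corollary.

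For $L = 2^p$, each polynomial collapses via Freshman's dream to $\omega^{iL} x^L$, so the GCD is $x^L$ of degree $L$. For $L = 2^p + 1$, the factorization $(1 + y)^L = (1 + y)(1 + y^{2^p})$ yields $(1 + \omega^i x)^L + 1 = \omega^i x \cdot Q_i(x)$ with a trinomial $Q_i(x) = 1 + \omega^{i(2^p - 1)} x^{2^p - 1} + \omega^{i \cdot 2^p} x^{2^p}$. A common nonzero root $\alpha$ of $Q_0, Q_1, Q_2$ forces $(u, v) = (\alpha^{2^p - 1}, \alpha^{2^p})$ to satisfy $\zeta^i u + \eta^i v = 1$ for $i = 0, 1, 2$, where $\zeta = \omega^{2^p - 1}$ and $\eta = \omega^{2^p}$. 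A short Vandermonde-style analysis, treating the subcases $p$ even (where $\zeta = 1$, $\eta = \omega$) and $p$ odd (where $\zeta, \eta, \zeta\eta^{-1}$ are all nontrivial in $\FF_4$) separately, shows this system forces $v = 0$ and hence $\alpha = 0$, contradicting $\alpha \ne 0$. Thus $\gcd(Q_0, Q_1, Q_2) = 1$ and the full GCD is $x$, of degree $1$.

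For $L = 2^m - 1$, which covers $L = 4^p - 1$ with $m = 2p$ and $L = 2^{2p+1} - 1$ with $m = 2p + 1$, the equation $(1 + y)^L + 1 = 0$ is equivalent to $1 + y$ being a nonzero $L$-th root of unity, i.e., $y \in \FF_{2^m} \setminus \{1\}$. All roots are simple since the derivative $L(1 + y)^{L - 1}$ vanishes only at the non-root $y = 1$. Hence the common roots of the three polynomials $(1 + \omega^i x)^L + 1$ are the $x$ satisfying $\omega^i x \in \FF_{2^m}$ and $\omega^i x \ne 1$ for every $i \in \{0, 1, 2\}$. When $m = 2p$ is even we have $\FF_4 \subseteq \FF_{2^{2p}}$, so the condition becomes $x \in \FF_{2^{2p}} \setminus \{1, \omega, \omega^2\}$, producing $4^p - 3 = L - 2$ simple common roots. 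When $m = 2p + 1$ is odd, $\FF_4 \not\subseteq \FF_{2^m}$, so for any nonzero $x \in \FF_{2^m}$ one has $\omega x \notin \FF_{2^m}$; only $x = 0$ survives, giving degree $1$.

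The main obstacle will be the $L = 2^p + 1$ case. Unlike the $L = 2^m - 1$ cases, where the root set of $(1 + y)^L + 1$ is precisely a subfield minus a point and the common-root count reduces to the easy question of whether $\FF_4$ embeds into that subfield, the $L$-th roots of unity for $L = 2^p + 1$ do not produce an additively closed set of $y$'s; consequently there is no structural reason for the common roots of the three shifted polynomials to collapse down to $\{0\}$, and one must instead verify the non-degeneracy of the small linear system over $\FF_4$ case by case according to $p \bmod 2$.
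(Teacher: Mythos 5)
Your proof is correct and follows the route the paper intends: specialize the Cubic Code example's degeneracy formula to the isotropic case $n_1=n_2=n_3=L$, strip off the $2$-part via the Frobenius identity $\gcd(f^{2^l},g^{2^l},h^{2^l})=\gcd(f,g,h)^{2^l}$ in the UFD $\FF_4[x]$, and then evaluate the odd-exponent GCD for each arithmetic shape of $L$ using exactly the two tools the paper cites, namely $(\alpha+\beta)^{2^p}=\alpha^{2^p}+\beta^{2^p}$ and $1+\omega+\omega^2=0$; the paper labels these computations as easily done, and you are simply supplying them. One small imprecision worth fixing: in the $L=2^p+1$, $p$ odd subcase your $3\times 2$ system $\zeta^i u+\eta^i v=1$ does not actually force $v=0$ --- summing the three equations gives $(1+\omega+\omega^2)(u+v)=1$, i.e.\ $0=1$, so the system is simply inconsistent. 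Either way there is no nonzero common root of $Q_0,Q_1,Q_2$, so your conclusion that the GCD is $x$ (degree $1$) stands unchanged.
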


\begin{example}[Levin-Wen fermion model~\cite{LevinWen2003Fermions}]
The 3-dimensional model is originally defined 
in terms of hermitian bosonic operators $\{ \gamma^{ab} \}_{a,b=1,\ldots,6}$,
squaring to identity if nonzero,
such that $\gamma^{ab}=-\gamma^{ba}$,
$[\gamma^{ab},\gamma^{cd}]=0$ if $a,b,c,d$ are distinct,
and
$\gamma^{ab}\gamma^{bc}=i\gamma^{ac}$ if $a \neq c$.
An irreducible representation is given by Pauli matrices acting on $\mathbb{C}^2 \otimes \mathbb{C}^2$,
and their commuting Hamiltonian fits nicely into our formalism.
The model was proposed to demonstrate that the point-like excitations may actually be fermions.
\begin{align*}
 \sigma_\text{Levin-Wen} &=
\begin{pmatrix}
 1+z & 1+z & x+y \\
 y+y z & x+x z & x+y \\
 y+z & 1+x & 1+x \\
 y+z & z+x z & y+x y
\end{pmatrix} \\
 \epsilon_\text{Levin-Wen} &=
\begin{pmatrix}
 y+z & y+z & y+y z & 1+z \\
 z+x z & 1+x & x+x z & 1+z \\
 y+x y & 1+x & x+y & x+y
\end{pmatrix}
\end{align*}
Here we multiplied the rows of $\epsilon_\text{Levin-Wen}$ by suitable monomials to avoid negative exponents.
One readily verifies that $\ker \epsilon_\text{Levin-Wen} = \im \sigma_\text{Levin-Wen}$.
The model is symmetric under the spatial rotation by $\pi/3$ about $(1,1,1)$ axis.
Indeed, if one changes the variables as $x \mapsto y \mapsto z \mapsto x$ and apply a symplectic transformation
\begin{equation}
 \omega = 
 \begin{pmatrix}
 1 & 0 & 1 & 0 \\
 0 & 1 & 0 & 1 \\
 1 & 0 & 0 & 0 \\
 0 & 1 & 0 & 0
 \end{pmatrix} :
 \begin{cases}
   XI \mapsto YI \\
   IX \mapsto IY \\
   ZI \mapsto XI \\
   IZ \mapsto IX
\end{cases} ,
\label{eq:symplectic-Levin-Wen}
\end{equation}
then $\sigma_\text{Levin-Wen}$ remains the same up to permutations of columns.

The torsion submodule $T$ of $C = \coker \epsilon_\text{Levin-Wen}$,
which describes the point-like charges according to Theorem~\ref{thm:charge-equals-torsion},
is
\begin{equation}
 T = R \cdot \begin{pmatrix} 1+y \\ 1+x \\ 0 \end{pmatrix} .
\label{eq:Levin-Wen-torsion}
\end{equation}
In order to see this, first shift the variables $a = x+1, b=y+1, c=z+1$.
Then, $\epsilon_\text{Levin-Wen}$ becomes
\[
\epsilon_\text{Levin-Wen} =
\begin{pmatrix}
 b+c & b+c & c+b c & c \\
 a+a c & a & c+a c & c \\
 a+a b & a & a+b & a+b
\end{pmatrix}
 =: \phi
\]
We will verify that $N = C / T$ is torsion-free.
A presentation of $N = \coker \phi'$ is obtained by joining the generator of $T$ to the matrix $\phi$.
\[
 \phi' = 
\begin{pmatrix}
 b+c & b+c & c+b c & c & b \\
 a+a c & a & c+a c & c & a \\
 a+a b & a & a+b & a+b & 0
\end{pmatrix}
\]
Column operations of $\phi'$ give
\[
 \phi' \cong
\begin{pmatrix}
 0 & c & b & 0 & 0 \\
 c & 0 & a & 0 & 0 \\
 b & a & 0 & 0 & 0
\end{pmatrix} = 
\begin{pmatrix}
 \partial_2 & 0 & 0
\end{pmatrix}
\]
where $\partial_2$ is from Eq.~\eqref{eq:resolution-F}.
Therefore, $\phi'$ generates the kernel of $\partial_1$,
and by Proposition~\ref{prop:nofractal-torsionless}, $N = \coker \phi'= \coker \partial_2$ is torsion-free.

The torsion submodule $T$ of $C = \coker \phi$ is annihilated by $a$, $b$, or $c$
(See Corollary~\ref{cor:annT-3d-characteristic-dimension-0}):
\[
 a \begin{pmatrix} b \\ a \\ 0 \end{pmatrix} =\phi \begin{pmatrix} 1 \\ 1+a \\ 0 \\ a \end{pmatrix}, \quad
 b \begin{pmatrix} b \\ a \\ 0 \end{pmatrix} =\phi \begin{pmatrix} 1 \\ 1+b \\ 1 \\ 1 \end{pmatrix}, \quad
 c \begin{pmatrix} b \\ a \\ 0 \end{pmatrix} =\phi \begin{pmatrix} 0 \\ 0 \\ 1 \\ 1 \end{pmatrix}.
\]
Therefore, $T$ is isomorphic to $\coker \partial_1 \cong \FF_2$ of Eq.~\eqref{eq:resolution-F}.
The arguments $h_x,h_y,h_z$ of $\phi$ can be thought of as \emph{hopping} operators for the charge.
According to \cite{LevinWen2003Fermions}, one can check that the charge is actually a fermion
from the commutation values among, for example, $h_x,h_y,\bar y h_y$.

Consider a short exact sequence
\[
 0 \to T \to C \to N \to 0 .
\]
The corresponding sequence for 3D toric code splits,
i.e., $C \cong T \oplus N$, while this does not.
It implies that this model is not equivalent to the 3D toric code.

Now we can compute the ground state degeneracy, or $\dim_{\FF_2} K(L)$.
Tensoring the boundary condition
\[
B = R/\bb_L = R/(x^L-1,y^L-1,z^L-1)
\]
to the short exact sequence,
we have a long exact sequence
\[
 \cdots \to
\Tor_1(T,B) \xrightarrow{\delta'} \Tor_1( C, B) \xrightarrow{\delta} \Tor_1(N,B) 
\to T \otimes B \to C \otimes B \to N \otimes B \to 0 .
\]
Hence, $K(L) \cong \Tor_1( C,B)$ has vector space dimension $\dim_{\FF_2} \im \delta + \dim_{\FF_2} \ker \delta$.
Since the sequence is exact, $\dim_{\FF_2} \ker \delta = \dim_{\FF_2} \im \delta'$.
As we have seen in Example~\ref{eg:highD-toric-codes},
\begin{align*}
 \Tor_1(T,B) & \cong \Tor_1(R/\mm,B) \cong (\FF_2)^3 , \quad \text{and} \\
 \Tor_1(N,B) & \cong \Tor_2(R/\mm,B) \cong (\FF_2)^3 .
\end{align*}
It follows that $\dim_{\FF_2} K(L) \le \dim_{\FF_2} \Tor_1(N,B) + \dim_{\FF_2} \Tor_1(T,B) = 6$.

It is routine to verify that $\bb_4 \subseteq I_2(\phi) \subseteq \mm := (x+1,y+1,z+1)$.
Recall the decomposition $K(L) = \bigoplus_\pp K(L)_\pp$ where $\pp$ runs over all maximal ideals of $R/\bb_L$.
Due to Lemma~\ref{lem:associated-maximal-localized-homology}, this decomposition consists of only one summand $K(L)_\mm$.
When $L$ is odd, since $(\bb_L)_\mm = \mm_\mm$, we know $K(L)_\mm = K(1)_\mm$.
Since $\phi \mapsto 0$ under $a=b=c=0$, we see $\dim_{\FF_2} K(1) = 4$.
The logical operators in this case are
\[
 \begin{pmatrix}  0 \\ 0 \\ \widehat z  \\ \widehat z  \end{pmatrix}
\smile
 \widehat x \cdot \widehat{y \bar z}
 \begin{pmatrix}  0 \\ 1 \\ 0 \\ 0 \end{pmatrix}
 \quad ; \quad
 \begin{pmatrix}  \widehat x \\ \widehat x \\ 0 \\ 0 \end{pmatrix}
\smile
 \widehat z \cdot \widehat{x y} 
 \begin{pmatrix}  1 \\ 1 \\ 1 \\ 0 \end{pmatrix}
\]
where $\widehat \mu = \sum_{n=0}^{L-1} \mu^n$ so $\mu \cdot \widehat \mu = \widehat \mu$,
and symplectic pairs are tied.
The left elements are string-like, and the right surface-like.

When $L$ is even, the following are $\FF_2$-independent elements of $K(L)$.
As there are 6 in total, the largest possible number, we conclude that $K(L)$ is 6-dimensional,
i.e., the number of encoded qubits is 3 when linear dimensions are even.
\[
 \begin{pmatrix}  0 \\ 0 \\ \widehat z \\ \widehat z \end{pmatrix}
\smile
 \widehat x ' \widehat y '
 \begin{pmatrix}  1+y \\ x+xy \\ 0 \\ 1+x+y+xy \end{pmatrix}
 ;
 \begin{pmatrix}  \widehat x \\ \widehat x \\ 0 \\ 0 \end{pmatrix}
\smile
 \widehat y ' \widehat z '
 \begin{pmatrix}  1+z \\ 1+z \\ 1+z \\ y+yz \end{pmatrix}
  ;
 \begin{pmatrix}  \widehat y \\ \widehat y \\ \widehat y \\ \widehat y \end{pmatrix}
\smile 
 \widehat z ' \widehat x '
 \begin{pmatrix}  0 \\ 1+x+z+xz \\ 1+x \\ 1+x \end{pmatrix}
\]
where $\widehat \mu ' = \sum_{i=0}^{L/2-1} \mu^{2i}$ so $(1+\mu)\widehat \mu ' = \widehat \mu$.
The pairs are transformed cyclically by $x \mapsto y \mapsto z \mapsto x$
together with the symplectic transformation $\omega$ of Eq.~\eqref{eq:symplectic-Levin-Wen}.
\hfill $\Diamond$
\end{example}

\section{Discussion}

There are many natural questions left unanswered.
Perhaps, it would be the most interesting to answer how much 
the associated ideal $I(\sigma)$
determines about the Hamiltonian.
Note that the very algebraic set defined by the associated ideal
is not invariant under coarse-graining.
For instance, in the characteristic dimension zero case,
the algebraic set can be a several points in the affine space,
but becomes a single point under a suitable coarse-graining.

It is reasonable to conceive that the algebraic set
is mapped by the affine map $(a_i) \mapsto (a_i^n)$
under the coarse-graining by $x_i' = x_i^n$.
This is true if $t=q$, so the $q$-th determinantal ideal of $\epsilon$,
being the initial Fitting ideal,
has the same radical as $\ann \coker \epsilon$.
In fact,
we have implicitly used this idea in the proofs of
Lemma~\ref{lem:coarse-graining-single-polynomial-in-1D},%
~\ref{lem:zero-dim-ideal-over-finite-field},
and Theorem~\ref{thm:fractal-exists-in-3D}.
The case $t > q$ is not explicitly handled here.

Also, it is interesting on its own to prove or disprove
that the elementary symplectic transformations 
generate the whole symplectic transformation group.


\begin{thebibliography}{10}
\expandafter\ifx\csname url\endcsname\relax
  \def\url#1{\texttt{#1}}\fi
\expandafter\ifx\csname urlprefix\endcsname\relax\def\urlprefix{URL }\fi
\providecommand{\bibinfo}[2]{#2}
\providecommand{\eprint}[2][]{\href{http://arxiv.org/abs/#2}{#2}}

\bibitem{Kitaev2003Fault-tolerant}
\bibinfo{author}{Kitaev, A.~Y.}
\newblock \bibinfo{title}{Fault-tolerant quantum computation by anyons}.
\newblock \emph{\bibinfo{journal}{Annals of Physics}}
  \href{http://dx.doi.org/10.1016/S0003-4916(02)00018-0}{\textbf{\bibinfo{volu%
me}{303}}, \bibinfo{pages}{2--30}} (\bibinfo{year}{2003}).
\newblock \eprint{quant-ph/9707021}.

\bibitem{Wen1991SpinLiquid}
\bibinfo{author}{Wen, X.-G.}
\newblock \bibinfo{title}{Mean-field theory of spin-liquid states with finite
  energy gap and topological orders}.
\newblock \emph{\bibinfo{journal}{Phys. Rev. B}}
  \href{http://dx.doi.org/10.1103/PhysRevB.44.2664}{\textbf{\bibinfo{volume}{4%
4}}, \bibinfo{pages}{2664--2672}} (\bibinfo{year}{1991}).

\bibitem{HasanKane2010TIReview}
\bibinfo{author}{Hasan, M.~Z.} \& \bibinfo{author}{Kane, C.~L.}
\newblock \bibinfo{title}{Topological insulators}.
\newblock \emph{\bibinfo{journal}{Rev. Mod. Phys.}}
  \href{http://dx.doi.org/10.1103/RevModPhys.82.3045}{\textbf{\bibinfo{volume}%
{82}}, \bibinfo{pages}{3045}} (\bibinfo{year}{2010}).
\newblock \eprint{1002.3895}.

\bibitem{DennisKitaevLandahlEtAl2002Topological}
\bibinfo{author}{Dennis, E.}, \bibinfo{author}{Kitaev, A.},
  \bibinfo{author}{Landahl, A.} \& \bibinfo{author}{Preskill, J.}
\newblock \bibinfo{title}{Topological quantum memory}.
\newblock \emph{\bibinfo{journal}{J. Math. Phys.}}
  \href{http://dx.doi.org/10.1063/1.1499754}{\textbf{\bibinfo{volume}{43}},
  \bibinfo{pages}{4452--4505}} (\bibinfo{year}{2002}).
\newblock \eprint{quant-ph/0110143}.

\bibitem{AlickiHorodeckiHorodeckiEtAl2010thermal}
\bibinfo{author}{Alicki, R.}, \bibinfo{author}{Horodecki, M.},
  \bibinfo{author}{Horodecki, P.} \& \bibinfo{author}{Horodecki, R.}
\newblock \bibinfo{title}{On thermal stability of topological qubit in
  {K}itaev's 4d model}.
\newblock \emph{\bibinfo{journal}{Open Syst. Inf. Dyn.}}
  \href{http://dx.doi.org/10.1142/S1230161210000023}{\textbf{\bibinfo{volume}{%
17}}, \bibinfo{pages}{1}} (\bibinfo{year}{2010}).
\newblock \eprint{0811.0033}.

\bibitem{ChesiLossBravyiEtAl2010Thermodynamic}
\bibinfo{author}{Chesi, S.}, \bibinfo{author}{Loss, D.},
  \bibinfo{author}{Bravyi, S.} \& \bibinfo{author}{Terhal, B.~M.}
\newblock \bibinfo{title}{Thermodynamic stability criteria for a quantum memory
  based on stabilizer and subsystem codes}.
\newblock \emph{\bibinfo{journal}{New J. Phys.}}
  \href{http://dx.doi.org/10.1088/1367-2630/12/2/025013}{\textbf{\bibinfo{volu%
me}{12}}, \bibinfo{pages}{025013}} (\bibinfo{year}{2010}).
\newblock \eprint{0907.2807}.

\bibitem{Haah2011Local}
\bibinfo{author}{Haah, J.}
\newblock \bibinfo{title}{Local stabilizer codes in three dimensions without
  string logical operators}.
\newblock \emph{\bibinfo{journal}{Phys. Rev. A}}
  \href{http://dx.doi.org/10.1103/PhysRevA.83.042330}{\textbf{\bibinfo{volume}%
{83}}, \bibinfo{pages}{042330}} (\bibinfo{year}{2011}).
\newblock \eprint{1101.1962}.

\bibitem{BravyiHaah2011Energy}
\bibinfo{author}{Bravyi, S.} \& \bibinfo{author}{Haah, J.}
\newblock \bibinfo{title}{On the energy landscape of 3{D} spin {H}amiltonians
  with topological order}.
\newblock \emph{\bibinfo{journal}{Phys. Rev. Lett.}}
  \href{http://dx.doi.org/10.1103/PhysRevLett.107.150504}{\textbf{\bibinfo{vol%
ume}{107}}, \bibinfo{pages}{150504}} (\bibinfo{year}{2011}).
\newblock \eprint{1105.4159}.

\bibitem{BravyiHaah2011Memory}
\bibinfo{author}{Bravyi, S.} \& \bibinfo{author}{Haah, J.}
\newblock \bibinfo{title}{Analytic and numerical demonstration of quantum
  self-correction in the 3{D} cubic code}  (\bibinfo{year}{2011}).
\newblock \eprint{1112.3252}.

\bibitem{CalderbankRainsShorEtAl1997Quantum}
\bibinfo{author}{Calderbank, A.~R.}, \bibinfo{author}{Rains, E.~M.},
  \bibinfo{author}{Shor, P.~W.} \& \bibinfo{author}{Sloane, N. J.~A.}
\newblock \bibinfo{title}{Quantum error correction and orthogonal geometry}.
\newblock \emph{\bibinfo{journal}{Phys.Rev.Lett.}}
  \href{http://dx.doi.org/10.1103/PhysRevLett.78.405}{\textbf{\bibinfo{volume}%
{78}}, \bibinfo{pages}{405--408}} (\bibinfo{year}{1997}).
\newblock \eprint{quant-ph/9605005}.

\bibitem{KitaevShenVyalyi2002CQC}
\bibinfo{author}{Kitaev, A.~Y.}, \bibinfo{author}{Shen, A.~H.} \&
  \bibinfo{author}{Vyalyi, M.~N.}
\newblock \emph{\bibinfo{title}{Classical and Quantum Computation}}
  (\bibinfo{publisher}{American Mathematical Society}, \bibinfo{year}{2002}).

\bibitem{MartinOdlyzkoAndrewWolfram1984}
\bibinfo{author}{Martin, O.}, \bibinfo{author}{Odlyzko, A.~M.} \&
  \bibinfo{author}{Wolfram, S.}
\newblock \bibinfo{title}{Algebraic properties of cellular automata}.
\newblock \emph{\bibinfo{journal}{Communications in Mathematical Physics}}
  \href{http://dx.doi.org/10.1007/BF01223745}{\textbf{\bibinfo{volume}{93}},
  \bibinfo{pages}{219--258}} (\bibinfo{year}{1984}).

\bibitem{Yoshida2011Classification}
\bibinfo{author}{Yoshida, B.}
\newblock \bibinfo{title}{Classification of quantum phases and topology of
  logical operators in an exactly solved model of quantum codes}.
\newblock \emph{\bibinfo{journal}{Annals of Physics}}
  \href{http://dx.doi.org/10.1016/j.aop.2010.10.009}{\textbf{\bibinfo{volume}{%
326}}, \bibinfo{pages}{15--95}} (\bibinfo{year}{2011}).
\newblock \eprint{1007.4601}.

\bibitem{Bombin2011Structure}
\bibinfo{author}{Bombin, H.}
\newblock \bibinfo{title}{Structure of 2{D} topological stabilizer codes}
  (\bibinfo{year}{2011}).
\newblock \eprint{1107.2707}.

\bibitem{BombinDuclosCianciPoulin2012}
\bibinfo{author}{Bombin, H.}, \bibinfo{author}{Duclos-Cianci, G.} \&
  \bibinfo{author}{Poulin, D.}
\newblock \bibinfo{title}{Universal topological phase of two-dimensional
  stabilizer codes}.
\newblock \emph{\bibinfo{journal}{New Journal of Physics}}
  \href{http://dx.doi.org/10.1088/1367-2630/14/7/073048}{\textbf{\bibinfo{volu%
me}{14}}, \bibinfo{pages}{073048}} (\bibinfo{year}{2012}).

\bibitem{Yoshida2011feasibility}
\bibinfo{author}{Yoshida, B.}
\newblock \bibinfo{title}{Feasibility of self-correcting quantum memory and
  thermal stability of topological order}.
\newblock \emph{\bibinfo{journal}{Annals of Physics}}
  \href{http://dx.doi.org/10.1016/j.aop.2011.06.001}{\textbf{\bibinfo{volume}{%
326}}, \bibinfo{pages}{2566--2633}} (\bibinfo{year}{2011}).
\newblock \eprint{1103.1885}.

\bibitem{MacWilliamsSloane1977}
\bibinfo{author}{MacWilliams, F.~J.} \& \bibinfo{author}{Sloane, N.~J.~A.}
\newblock \emph{\bibinfo{title}{The Theory of Error Correcting Codes}}
  (\bibinfo{publisher}{North-Holland, Amsterdam}, \bibinfo{year}{1977}).

\bibitem{GueneriOezbudak2008}
\bibinfo{author}{G\"uneri, C.} \& \bibinfo{author}{\"Ozbudak, F.}
\newblock \bibinfo{title}{Multidimensional cyclic codes and
  {A}rtin–{S}chreier type hypersurfaces over finite fields}.
\newblock \emph{\bibinfo{journal}{Finite Fields and Their Applications}}
  \href{http://dx.doi.org/10.1016/j.ffa.2006.12.003}{\textbf{\bibinfo{volume}{%
14}}, \bibinfo{pages}{44--58}} (\bibinfo{year}{2008}).

\bibitem{Goppa1983}
\bibinfo{author}{Goppa, V.~D.}
\newblock \bibinfo{title}{Algebraico-geometric codes}.
\newblock \emph{\bibinfo{journal}{Mathematics of the USSR-Izvestiya}}
  \href{http://dx.doi.org/10.1070/IM1983v021n01ABEH001641}{\textbf{\bibinfo{vo%
lume}{21}}, \bibinfo{pages}{75}} (\bibinfo{year}{1983}).

\bibitem{CalderbankShor1996Good}
\bibinfo{author}{Calderbank, A.~R.} \& \bibinfo{author}{Shor, P.~W.}
\newblock \bibinfo{title}{Good quantum error-correcting codes exist}.
\newblock \emph{\bibinfo{journal}{Phys. Rev. A}}
  \href{http://dx.doi.org/10.1103/PhysRevA.54.1098}{\textbf{\bibinfo{volume}{5%
4}}, \bibinfo{pages}{1098--1105}} (\bibinfo{year}{1996}).
\newblock \eprint{quant-ph/9512032}.

\bibitem{Steane1996Multiple}
\bibinfo{author}{Steane, A.}
\newblock \bibinfo{title}{Multiple particle interference and quantum error
  correction}.
\newblock \emph{\bibinfo{journal}{Proc. Roy. Soc. Lond. A}}
  \href{http://dx.doi.org/10.1098/rspa.1996.0136}{\textbf{\bibinfo{volume}{452%
}}, \bibinfo{pages}{2551}} (\bibinfo{year}{1996}).
\newblock \eprint{quant-ph/9601029}.

\bibitem{Gottesman1996Saturating}
\bibinfo{author}{Gottesman, D.}
\newblock \bibinfo{title}{A class of quantum error-correcting codes saturating
  the quantum hamming bound}.
\newblock \emph{\bibinfo{journal}{Phys. Rev. A}}
  \href{http://dx.doi.org/10.1103/PhysRevA.54.1862}{\textbf{\bibinfo{volume}{5%
4}}, \bibinfo{pages}{1862}} (\bibinfo{year}{1996}).
\newblock \eprint{quant-ph/9604038}.

\bibitem{Kim2012qupit}
\bibinfo{author}{Kim, I.~H.}
\newblock \bibinfo{title}{3d local qupit quantum code without string logical
  operator}  (\bibinfo{year}{2012}).
\newblock \eprint{1202.0052}.

\bibitem{MichalakisPytel2011stability}
\bibinfo{author}{Michalakis, S.} \& \bibinfo{author}{Pytel, J.}
\newblock \bibinfo{title}{Stability of frustration-free {H}amiltonians}
  (\bibinfo{year}{2011}).
\newblock \eprint{1109.1588}.

\bibitem{BravyiHastingsMichalakis2010stability}
\bibinfo{author}{Bravyi, S.}, \bibinfo{author}{Hastings, M.} \&
  \bibinfo{author}{Michalakis, S.}
\newblock \bibinfo{title}{Topological quantum order: stability under local
  perturbations}.
\newblock \emph{\bibinfo{journal}{J. Math. Phys.}}
  \href{http://dx.doi.org/10.1063/1.3490195}{\textbf{\bibinfo{volume}{51}},
  \bibinfo{pages}{093512}} (\bibinfo{year}{2010}).
\newblock \eprint{1001.0344}.

\bibitem{BravyiHastings2011short}
\bibinfo{author}{Bravyi, S.} \& \bibinfo{author}{Hastings, M.~B.}
\newblock \bibinfo{title}{A short proof of stability of topological order under
  local perturbations}.
\newblock \emph{\bibinfo{journal}{Communications in Mathematical Physics}}
  \href{http://dx.doi.org/10.1007/s00220-011-1346-2}{\textbf{\bibinfo{volume}{%
307}}, \bibinfo{pages}{609--627}} (\bibinfo{year}{2011}).
\newblock \eprint{1001.4363}.

\bibitem{Eisenbud}
\bibinfo{author}{Eisenbud, D.}
\newblock \emph{\bibinfo{title}{Commutative Algebra with a View Toward
  Algebraic Geometry}} (\bibinfo{publisher}{Springer}, \bibinfo{year}{2004}).

\bibitem{PauerUnterkircher1999}
\bibinfo{author}{Pauer, F.} \& \bibinfo{author}{Unterkircher, A.}
\newblock \bibinfo{title}{Gr\"{o}bner bases for ideals in {L}aurent polynomial
  rings and their application to systems of difference equations}.
\newblock \emph{\bibinfo{journal}{Applicable Algebra in Engineering,
  Communication and Computing}}
  \href{http://dx.doi.org/10.1007/s002000050108}{\textbf{\bibinfo{volume}{9}},
  \bibinfo{pages}{271--291}} (\bibinfo{year}{1999}).

\bibitem{BuchsbaumEisenbud1973Exact}
\bibinfo{author}{Buchsbaum, D.~A.} \& \bibinfo{author}{Eisenbud, D.}
\newblock \bibinfo{title}{What makes a complex exact?}
\newblock \emph{\bibinfo{journal}{Journal of Algebra}}
  \href{http://dx.doi.org/10.1016/0021-8693(73)90044-6}{\textbf{\bibinfo{volum%
e}{25}}, \bibinfo{pages}{259--268}} (\bibinfo{year}{1973}).

\bibitem{Northcott}
\bibinfo{author}{Northcott, D.~G.}
\newblock \emph{\bibinfo{title}{Finite Free Resolutions}}
  (\bibinfo{publisher}{Cambridge University Press}, \bibinfo{year}{1976}).

\bibitem{AtiyahMacDonald}
\bibinfo{author}{Atiyah, M.~F.} \& \bibinfo{author}{MacDonald, I.~G.}
\newblock \emph{\bibinfo{title}{Introduction to Commutative Algebra}}
  (\bibinfo{publisher}{Westview}, \bibinfo{year}{1969}).

\bibitem{LangWeil1954}
\bibinfo{author}{Lang, S.} \& \bibinfo{author}{Weil, A.}
\newblock \bibinfo{title}{Number of points of varieties in finite fields}.
\newblock \emph{\bibinfo{journal}{American Journal of Mathematics}}
  \textbf{\bibinfo{volume}{76}}, \bibinfo{pages}{819--827}
  (\bibinfo{year}{1954}).
\newblock \urlprefix\url{http://www.jstor.org/stable/2372655}.

\bibitem{NewmanMoore1999Glassy}
\bibinfo{author}{Newman, M. E.~J.} \& \bibinfo{author}{Moore, C.}
\newblock \bibinfo{title}{Glassy dynamics and aging in an exactly solvable spin
  model}.
\newblock \emph{\bibinfo{journal}{Phys. Rev. E}}
  \href{http://dx.doi.org/10.1103/PhysRevE.60.5068}{\textbf{\bibinfo{volume}{6%
0}}, \bibinfo{pages}{5068--5072}} (\bibinfo{year}{1999}).
\newblock \eprint{cond-mat/9707273}.

\bibitem{BrunsVetter}
\bibinfo{author}{Bruns, W.} \& \bibinfo{author}{Vetter, U.}
\newblock \emph{\bibinfo{title}{Determinantal Rings}}.
\newblock Lecture Notes in Mathematics 1327
  (\bibinfo{publisher}{Springer-Verlag}, \bibinfo{year}{1988}).
\newblock
  \urlprefix\url{http://www.home.uni-osnabrueck.de/wbruns/brunsw/detrings.pdf}.

\bibitem{Lang}
\bibinfo{author}{Lang, S.}
\newblock \emph{\bibinfo{title}{Algebra}} (\bibinfo{publisher}{Springer},
  \bibinfo{year}{2002}), \bibinfo{edition}{revised 3rd} edn.

\bibitem{Wen2003Plaquette}
\bibinfo{author}{Wen, X.-G.}
\newblock \bibinfo{title}{Quantum orders in an exact soluble model}.
\newblock \emph{\bibinfo{journal}{Phys. Rev. Lett.}}
  \href{http://dx.doi.org/10.1103/PhysRevLett.90.016803}{\textbf{\bibinfo{volu%
me}{90}}, \bibinfo{pages}{016803}} (\bibinfo{year}{2003}).
\newblock \eprint{quant-ph/0205004}.

\bibitem{Chamon2005Quantum}
\bibinfo{author}{Chamon, C.}
\newblock \bibinfo{title}{Quantum glassiness}.
\newblock \emph{\bibinfo{journal}{Phys. Rev. Lett.}}
  \href{http://dx.doi.org/10.1103/PhysRevLett.94.040402}{\textbf{\bibinfo{volu%
me}{94}}, \bibinfo{pages}{040402}} (\bibinfo{year}{2005}).
\newblock \eprint{cond-mat/0404182}.

\bibitem{BravyiLeemhuisTerhal2011Topological}
\bibinfo{author}{Bravyi, S.}, \bibinfo{author}{Leemhuis, B.} \&
  \bibinfo{author}{Terhal, B.~M.}
\newblock \bibinfo{title}{Topological order in an exactly solvable 3{D} spin
  model}.
\newblock \emph{\bibinfo{journal}{Annals of Physics}}
  \href{http://dx.doi.org/10.1016/j.aop.2010.11.002}{\textbf{\bibinfo{volume}{%
326}}, \bibinfo{pages}{839--866}} (\bibinfo{year}{2011}).
\newblock \eprint{1006.4871}.

\bibitem{LevinWen2003Fermions}
\bibinfo{author}{Levin, M.} \& \bibinfo{author}{Wen, X.-G.}
\newblock \bibinfo{title}{Fermions, strings, and gauge fields in lattice spin
  models}.
\newblock \emph{\bibinfo{journal}{Phys. Rev. B}}
  \href{http://dx.doi.org/10.1103/PhysRevB.67.245316}{\textbf{\bibinfo{volume}%
{67}}, \bibinfo{pages}{245316}} (\bibinfo{year}{2003}).
\newblock \eprint{cond-mat/0302460}.

\end{thebibliography}
\end{document}